\newcommand{\distas}[1]{\mathbin{\overset{#1}{\kern\z@\sim}}}%
\newsavebox{\mybox}\newsavebox{\mysim}
\newcommand{\distras}[1]{%
  \savebox{\mybox}{\hbox{\kern3pt$\scriptstyle#1$\kern3pt}}%
  \savebox{\mysim}{\hbox{$\sim$}}%
  \mathbin{\overset{#1}{\kern\z@\resizebox{\wd\mybox}{\ht\mysim}{$\sim$}}}%
}
\newcommand*\rel@kern[1]{\kern#1\dimexpr\macc@kerna}
\newcommand*\widebar[1]{%
  \begingroup
  \def\mathaccent##1##2{%
    \rel@kern{0.8}%
    \overline{\rel@kern{-0.8}\macc@nucleus\rel@kern{0.2}}%
    \rel@kern{-0.2}%
  }%
  \macc@depth\@ne
  \let\math@bgroup\@empty \let\math@egroup\macc@set@skewchar
  \mathsurround\z@ \frozen@everymath{\mathgroup\macc@group\relax}%
  \macc@set@skewchar\relax
  \let\mathaccentV\macc@nested@a
  \macc@nested@a\relax111{#1}%
  \endgroup
}
\newtheorem{definition}{Definition}
\newtheorem{assumption}{Assumption}
\newtheorem{theorem}{Theorem}
\newtheorem{lemma}{Lemma}
\newtheorem{remark}{Remark}
\newtheorem{corollary}{Corollary}
\newtheorem{proposition}{Proposition}
\newtheorem{property}{Property}
\begin{document}
\title{Distributed Management of Fluctuating Energy Resources in Dynamic Networked Systems}
\author{Xiaotong Cheng$^\dagger$, Ioannis Tsetis$^\dagger$, Setareh Maghsudi\thanks{$\dagger$ Equal contribution. \\
X. Cheng is with the Department of Electrical Engineering and Information Technology, Ruhr-University Bochum, 44801 Bochum, Germany (email:xiaotong.cheng@ruhr-uni-bochum.de). I. Tsetis is with the Department of Computer Science, University of Tübingen, 72074 Tübingen, Germany (email:ioannis.tseitis@uni-tuebingen.de). S. Maghsudi is with the Department of Electrical Engineering and Information Technology, Ruhr-University Bochum, 44801 Bochum, Germany and with the Fraunhofer Heinrich Hertz Institute, 10587 Berlin, Germany (email:setareh.maghsudi@rub.de).\\
A part of this paper appeared at the 2023 IEEE International Conference on Acoustics, Speech, and Signal Processing (ICASSP 2023).}\\
}
\maketitle
\begin{abstract}
Modern power systems integrate renewable distributed energy resources (DERs) as an environment-friendly enhancement to meet the ever-increasing demands. However, the inherent unreliability of renewable energy renders developing DER management algorithms imperative. We study the energy-sharing problem in a system consisting of several DERs. Each agent harvests and distributes renewable energy in its neighborhood to optimize the network's performance while minimizing energy waste. We model this problem as a bandit convex optimization problem with constraints that correspond to each node's limitations for energy production. We propose distributed decision-making policies to solve the formulated problem, where we utilize the notion of dynamic regret as the performance metric. We also include an adjustment strategy in our developed algorithm to reduce the constraint violations. Besides, we design a policy that deals with the non-stationary environment. Theoretical analysis shows the effectiveness of our proposed algorithm. Numerical experiments using a real-world dataset show superior performance of our proposal compared to state-of-the-art methods.
\end{abstract}
\begin{IEEEkeywords}
Bandit convex optimization, distributed energy resources, resource-sharing, sequential decision-making.
\end{IEEEkeywords}
%
\section{Introduction}
The beneficial characteristics of renewable energy, such as sustainability, make it inevitable to integrate wind- and solar resources into power generation systems \cite{wei2015distributionally} to satisfy the ever-increasing energy demand while remaining environment-friendly. Distributed energy resources (DERs) are small-scale electricity supply or demand resources connected to the electric grid. They include solar power, wind power, geothermal power, hydrothermal power, etc., as clean energy resources \cite{xu2016toward}. They are a potential environmental- and economically valuable solution for power systems \cite{rahbar2014real}. 

Compared to conventional fossil-based energy generation, DERs reduce carbon dioxide emissions and have lower transmission- and distribution costs \cite{xu2016toward}. Besides, compared with traditional large-scale power plants, DERs are small and highly flexible \cite{wu2016distributed}. Such characteristics enable more power generation and increase supply reliability in a cost-effective manner.

However, the DERs' performance depends on several factors, such as time and weather. As such, they suffer from high uncertainties \cite{xu2016toward,yu2015towards}, which exacerbates their integration into power grids. Besides, the variability, unpredictability, and high complexity of the DERs system remain crucial challenges \cite{wang2015dynamic}. The challenge is even more imminent considering the mismatch between renewable generation and load demand, which causes outages or severe energy waste \cite{rahbar2014real}. Therefore, it is essential to distribute energy generated by DERs carefully and without delay. Furthermore, privacy is crucial, especially on the demand side \cite{wang2013online}. In brief, the challenge is to develop an efficient distribution mechanism in DERs to satisfy users' demands while minimizing energy waste under uncertainty and rare information exchange.

Previous research focused on offline energy resource management \cite{atzeni2013noncooperative, chandy2010simple, zhang2013robust} with the ideal assumption that the generated energy and the load demand are either deterministic or known a priori before scheduling. However, in practical applications, generating renewable resources is time-variant and involves uncertain factors that cannot be predicted or controlled precisely \cite{xu2016toward,yu2015towards,9080557}. To solve the real-time energy distribution problem, \cite{wang2013online} proposes online centralized and distributed algorithms based on the gradient method. Reference \cite{wang2015dynamic} uses the alternating direction method of multipliers (ADMM) within a model predictive control (MPC) framework to control and optimize the power scheduling problem. Reference \cite{rahbar2014real} designs an online algorithm for the real-time energy management of microgrid systems by combining the offline optimal solution with window-based sequential optimization. 

In the prior works that investigate online energy management \cite{wang2013online,rahbar2014real}, online convex optimization (OCO) is a popular framework. However, a large body of literature assumes that the gradient at any point in the decision space is accessible, which often does not hold in real-world applications. To mitigate this shortcoming, we develop a gradient-free online algorithm for energy-sharing in DERs. While sharing energy is the main focus of this work, our proposed method applies to a vast spectrum of application domains beyond energy management, such as computational task offloading, distributed learning, and the like.
\subsection{Related Work}
Optimization strategies are well-known solutions to optimal DERs management problems. Our work adds an essential factor to the state-of-the-art methods, namely, uncertainty and lack of information. As such, the research related to ours stems from two main categories, namely, distributed energy resources and online convex optimization under bandit feedback. 

Reference \cite{guo2012optimal} investigates optimal power management for residential customers in a smart grid (SG) combined with renewable energy generation and battery storage. The proposed solution is based on Lyapunov optimization. It can achieve close-to-optimal performance with a trade-off between battery capacity and cost savings. Similarly, \cite{salinas2013dynamic} adopts a Lyapunov optimization method to solve the energy management problem for SG under unpredictable load demands of users. Specifically, users in the energy distribution network have renewable energy resources, an energy storage device, and a connection to the power grid, which collaboratively satisfy their load demands. The proposed dynamic energy management scheme optimally schedules the usage of all energy resources based on the current system state only. However, the strategies above are all central and require a controller/coordinator. Reference investigates the energy management problem by formulating it as a convex optimization problem, and develops a centralized offline solution. Reference \cite{wang2014distributed} extends \cite{wang2013online} by designing distributed algorithms to solve the formulated problem. Similarly,  in \cite{wu2016distributed}, the authors consider an optimal DERs coordination problem over multiple time periods subject to constraints and solve it via a distributed consensus algorithm based on gradient strategy. Reference \cite{salazar2020energy} develops a simplified Markov model for photovoltaic power generation and proposes a stochastic dynamic programming optimization framework to solve the energy management problem. To reduce the communication cost of distributed structure in energy management, reference \cite{ding2018distributed} designs an optimization algorithm with event-triggered communication and control mechanism.

Besides, most strategies require critical information, such as the consumers' demands and usage, which scarifies the users' privacy. Consequently, several authors propose privacy-preserving methods for distributed energy management \cite{nizami2019multiagent,khorasany2020framework,ye2021scalable}.

In addition to the solution based on optimization theory, some prior works study the energy management in DERs from a game-theoretical \cite{mohsenian2010autonomous,nguyen2015decentralized,wang2019incentive,8585045} or a reinforcement learning perspective \cite{ye2021scalable,wan2021price}. In our work, we propose a distributed online convex optimization strategy with privacy preservation. Therefore, we do not elaborate game-theoretical methods intensively. 

Zinkevich \textit{et al.} \cite{zinkevich2003online} design an online convex optimization (OCO) algorithm based on gradient descent inspired by the infinitesimal gradient ascent concept from repeated games. The regret of the learning method is upper bounded by $\mathcal{O}(\sqrt{T})$. Reference \cite{Flaxman2004BCO} extends \cite{zinkevich2003online} to a bandit setting, namely, \textbf{bandit convex optimization (BCO)}, where in each period, only the cost-utility ratio is observable. The proposed gradient-free method obtains a regret bound $\mathcal{O}(T^{3/4})$ for the general case for bounded and Lipschitz-continuous convex loss function. The authors of \cite{hazan2014bandit} propose a near-optimal algorithm for BCO with strongly-convex and smooth loss functions and prove a regret bound of $\tilde{\mathcal{O}}(\sqrt{T})$. Besides, they introduce a self-concordant barrier function to solve constrained BCO problems. In that research, the authors assume that the constraint always holds, and apply a projection step. Nevertheless, considering the constraints at the projection step is computationally expensive \cite{jenatton2016adaptive}. Besides, in practical applications, the learner may be concerned with long-term constraints. To alleviate those difficulties, OCO and BCO under constraints are potential solutions. Reference \cite{mahdavi2012trading} considers the online convex optimization problem with long-term constraints. The proposed algorithm achieves $\mathcal{O}(\sqrt{T})$ regret bound and $\mathcal{O}(T^{3/4})$ bound of constraint violation. Following \cite{mahdavi2012trading}, \cite{jenatton2016adaptive} proposes an adaptive online gradient descent algorithm to solve online convex optimization problems with long-term constraints. Reference \cite{yu2017online} considers the online convex optimization problem with stochastic constraints, which can achieve $\mathcal{O}(\sqrt{T})$ regret bound and constraint violations. It further proves $\mathcal{O}(\sqrt{T} \log T)$ high probability regret bound and constraint violations.

Finally, distributed convex optimization \cite{li2020online,liang2019distributed,Yi2020distr_band_online_conv_opt} and convex optimization in non-stationary environment \cite{zhao2021bandit} attract increasing attention. In this paper, we solve the energy management problem in DERs by formulating it as a distributed BCO problem with constraints in a non-stationary environment. 

\subsection{Contributions}
In this paper, we investigate the real-time energy management problem for networked DERs systems consisting of a number of nodes, each of which is an energy- generator and consumer simultaneously. It is an extension of our previous work \cite{tsetis2023bandit}, however, our previous work only considers the distributed energy management in a stationary environment while in this work we consider a more complicated problem. We list our contributions briefly below.
\begin{itemize}
\item We investigate an energy-sharing problem in a network of systems. We minimize the detrimental impact of energy shortage on systems' performance because of uncertainty in energy production or harvesting. We model this problem as an online convex optimization in a non-stationary environment.
\item In our setting, we consider a bandit feedback, where each node/user only observes the energy satisfaction level of its neighbors. Thus, the nodes in the network can maintain privacy and do not reveal their consumption and generation level to other nodes.
\item To solve the proposed energy resource sharing problem with a gradient-free method, we develop an algorithm, namely, Distributed Resource Sharing with regularized Lagrangian function, inspired by \cite{Yi2020distr_band_online_conv_opt} and its extension in a dynamic environment, which achieve a dynamic regret bound of $\tilde{\mathcal{O}}((1+P_T)^{\frac{1}{2}}T^{\frac{1}{2}} + (1+P_T)^{\frac{1}{4}}T^{\frac{3}{4}})$ and $\tilde{\mathcal{O}}(T^{\frac{3}{4}}(1+P_T)^{\frac{1}{2}})$, respectively, where $T$ refers to the total number of rounds and $P_T$ refers to the path-length of comparators. Table~\ref{tab:stoa} compares our proposed algorithm with the state-of-the-art. 
\item To reduce the constraint violation in real-world implementation, we propose one adjustment step that does not worsen the regret bound. 
\item We evaluate our proposed algorithm using a real-world dataset and compare it with state-of-the-art algorithms. Experimental results show that our proposal reduces energy waste, thereby preventing an environmental damage. 
\end{itemize}
\begin{table*}[!ht]
\centering
\captionsetup{justification=centering}
\caption{Our two proposed algorithms compared to the SOTA research in BCO.}
\label{tab:stoa}
 \begin{tabular}{c|c|c|c|c|c|c}
 \hline
 Reference & Problem Type & Constraint & Feedback & Hyper parameter & Dynamic Regret & Meta-learning \\ [0.5ex] 
 \hline
 \cite{mahdavi2012trading} & Centralized & \XSolidBrush & Bandit & Constant & \XSolidBrush & \XSolidBrush \\
 \hline
 \cite{Yi2020distr_band_online_conv_opt} & Distributed & \Checkmark & Bandit & Time-varying & \Checkmark & \XSolidBrush \\
 \hline
 \cite{zhao2021bandit} & Centralized & \XSolidBrush & Bandit & Constant & \Checkmark & \Checkmark \\
 \hline
 \cite{guo2022online} & Centralized & \Checkmark & Full Information & Time-varying & \Checkmark & \Checkmark \\
 \hline
 DRS & Distributed & \Checkmark & Bandit & Time-varying & \Checkmark & \XSolidBrush \\
 \hline
 MA-NSDRS & Distributed & \Checkmark & Bandit & Time-varying & \Checkmark & \Checkmark \\
 \hline 
 \end{tabular}
\end{table*}
The rest of this paper is organized as follows. In Section~\ref{sec:problem}, we formulate the problem.  Section~\ref{sec:solution} motivates and explains the ``Distributed Resource Sharing'' (DRS) algorithm and its improved version ``Meta Algorithm for Non-stationary Distributed Resource Sharing'' (MA-NSDRS). Section~\ref{sec:solution} includes the theoretical analysis. Section~\ref{sec:experiment} demonstrates the numerical results and Section~\ref{sec:conclusion} summarizes our work. All proofs appear in the Appendix. In Section~\ref{sec:experiment}, we demonstrate the numerical simulations using a real-world dataset from the DERs data of New York State \cite{NY_DERs}. Finally, Section~\ref{sec:conclusion} concludes the work and suggests some future research directions. 

\section{Problem Formulation}
\label{sec:problem}
Consider a smart power system with multiple nodes collected in the set $\mathcal{N} = \{1,2,\ldots,N\}$. Each node has a renewable energy generator and an energy consumption module. Hence, it is simultaneously an energy producer and a consumer, i.e., a prosumer \cite{khorasany2020framework,ye2021scalable}. Each node might have distinct energy production and energy consumption levels. Thus, it is connected to some other nodes to as a potential compensation for his energy shortage, or to donate energy. Specially, if the energy transmission loss between two nodes is expected to exceed a specified threshold fraction of the transmitted energy, the nodes do not exchange energy, resulting in no edge between them in the graph. Conversely, an edge exists if the transmission loss is below this threshold. Thus, our model considers DERs in a small graphical area with little energy transmission loss. In addition to the self-generated energy, each node also connects to the main power grid. Note that the main power grid receives input from the traditional power generation plants, which incurs high power loss during the transmission and relies mainly on fossil fuel consumption. As such, it is not the ideal source of the energy. Each node firstly relies on the local energy generation and the received energy from its neighbors in a small graphical area. The main power grid only assists in vital cases as the last choice, and we consider the simplified model with the main grid with unlimited capacity. Beyond that, it does not play any role in our setting; Therefore, we do not include it in our mathematical model. Figure~\ref{fig:sys-ders} depicts the system model .
\begin{figure}[!ht]
\centering
\includegraphics[width=0.5\linewidth]{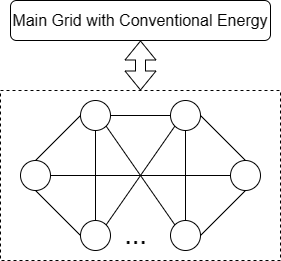}
\caption{System Model.}
\label{fig:sys-ders}
\end{figure}

The DERs transmission follows the undirected network graph $\mathcal{G}=(\mathcal{N},\mathcal{E})$, i.e., alongside the edges $\mathcal{E}$. Let $\mathcal{N}_i=\{j: (i,j) \in \mathcal{E}\}$ be the set of node $i$'s neighbors and $\tilde{\mathcal{N}}_i=\mathcal{N}_i \cup \{i\}$, the set that includes node $i$'s neighbors and node $i$. Similarly, $\textbf{A}$ is the adjacency matrix of the network, where $A_{i,j} = 1$ if $(i,j) \in \mathcal{E}$ and $\tilde{\textbf{A}}=\textbf{A}+\textbf{I}$ when the neighboring nodes include the node itself. 

At each round $t$, node $i$'s energy generation amount is $d_{i,t}$, which is a random variable with expectation $\mu_{i,t}$. For simplicity, we assume the generation of each node is independent of each other in the network. Besides, the energy demand of node $i$ at time step $t$ is $l_{i,t}$. It can keep its generated resources or transfer $x_{i,t}(k)$ to its neighbor $k \in \mathcal{N}_i$. The allocation vector of node $i$ yield $\boldsymbol{x}_{i,t} = [x_{i,t}(1),...,x_{i,t}(N)] \in \mathcal{X}_{i}$, where $\mathcal{X}_{i} \subseteq \mathbb{R}_+^{N}$. Finally, at each round, the total resources of agent $i$ yields $\sum_{k \in \tilde{\mathcal{N}}_i} x_{k,t}(i)$, which includes the received resources $\sum_{k \in \mathcal{N}_i} x_{k,t}(i)$ and the amount node $i$ reserves for itself $x_{i,t}(i)$.

We define node $i$'s utility in terms of its energy satisfaction level compared to its demand $l_{i,t}$. Formally,
\begin{gather}
\bar{f}_{i,t} = \min\{\frac{\sum_{k \in \tilde{\mathcal{N}}_i} x_{k,t}(i)}{l_{i,t}} , 1\}.
\end{gather}
Besides, distributing energy has a hard (feasibility) constraint: The energy that  node $i$ distributes to its neighbors should not exceed the generation. Formally,
\begin{gather}
g_{i,t} = \sum_{j \in \tilde{\mathcal{N}}_i} x_{i,t}(j) - d_{i,t} = \langle \boldsymbol{\tilde{a}}_{i}, \boldsymbol{x_{i,t}} \rangle - d_{i,t} \leq 0, 
\label{eq:constr-g}
\end{gather}
where $\boldsymbol{\tilde{a}}_{i}$ is the $i$-th row of the matrix $\boldsymbol{\tilde{A}}$. Considering the cooperation among agents, we define the loss function as 
\begin{gather}
f_{i,t} = 1-\frac{1}{\vert \tilde{\mathcal{N}}_i \vert} \sum_{j \in \tilde{\mathcal{N}}_i}\bar{f}_{j,t} = 1-\frac{1}{\vert \tilde{\mathcal{N}}_i \vert} \langle \boldsymbol{\tilde{a}}_i, \boldsymbol{\bar{f}}_t \rangle, 
\label{eq:loss_f}
\end{gather}
where $\boldsymbol{\bar{f}}_t = [\bar{f}_{1,t}, \ldots, \bar{f}_{N,t}]$. In other words, the loss function measures the shortage level of each node's neighbors after the cooperative energy allocation. For simplicity, we assume that the neighbor nodes are close enough so that the energy transfer cost is negligible. However, extending our methodology to the case of costly energy transfer is straightforward. Below, we state assumptions concerning the loss function \eqref{eq:loss_f} and the constraint function \eqref{eq:constr-g} and a proposition concerning the loss function. 
\begin{assumption}\label{assump:fg-bound}
The loss function $f$ and the constraint function are bounded. We denote their bounds by $F$ and $G$, respectively.
\end{assumption}
\begin{assumption}\label{assump:x}
For each agent, the action space $\mathcal{X}_i$ is inside a ball with radius $R_i$ and contains a ball with radius $r_i$.
\end{assumption}
\begin{proposition} 
\label{prop:convex-loss}
The loss function (\ref{eq:loss_f}) is convex and Lipschitz continuous with the lipschitz constant $L = \frac{1}{\min_{j \in \mathcal{N}} l_j}$. 
\end{proposition}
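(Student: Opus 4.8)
The plan is to prove the two assertions separately, in each case reducing the claim to elementary stability properties of the pointwise minimum and of nonnegative combinations of functions. Throughout I regard $f_{i,t}$ as a function of the allocation variables that enter it through the satisfaction terms $\bar{f}_{j,t}$ for $j \in \tilde{\mathcal{N}}_i$, and I keep the demands $l_{j,t}$ and generations fixed within round $t$.

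\textbf{Convexity.} First I would observe that each satisfaction term $\bar{f}_{j,t} = \min\{ \frac{\sum_{k \in \tilde{\mathcal{N}}_j} x_{k,t}(j)}{l_{j,t}},\, 1\}$ is a pointwise minimum of an affine map (the ratio, which is affine in the allocations since $l_{j,t}$ is fixed) and the constant $1$. Both arguments are concave, and the pointwise minimum of concave functions is concave, so each $\bar{f}_{j,t}$ is concave. The average $\frac{1}{|\tilde{\mathcal{N}}_i|}\langle \boldsymbol{\tilde{a}}_i, \boldsymbol{\bar{f}}_t \rangle = \frac{1}{|\tilde{\mathcal{N}}_i|}\sum_{j \in \tilde{\mathcal{N}}_i}\bar{f}_{j,t}$ is a nonnegative combination of concave functions, hence concave. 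Finally $f_{i,t}$ equals $1$ minus this concave map; negating a concave function produces a convex one and adding the constant $1$ preserves convexity, so $f_{i,t}$ is convex. This settles the first half with no estimates.

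\textbf{Lipschitz continuity.} Next I would bound a subgradient of $f_{i,t}$. The scalar map $s \mapsto \min\{s,1\}$ is $1$-Lipschitz, and the inner ratio is affine with coefficient $1/l_{j,t}$ attached to each variable $x_{k,t}(j)$ with $k \in \tilde{\mathcal{N}}_j$; hence any subgradient of $\bar{f}_{j,t}$ either vanishes (where the constant $1$ is active) or takes the value $1/l_{j,t}$ in those coordinates and $0$ elsewhere. Differentiating $f_{i,t} = 1 - \frac{1}{|\tilde{\mathcal{N}}_i|}\sum_{j \in \tilde{\mathcal{N}}_i}\bar{f}_{j,t}$ with respect to the decision vector controlled by the node, each nonzero coordinate of the resulting subgradient has magnitude at most $\frac{1}{|\tilde{\mathcal{N}}_i|}\cdot\frac{1}{l_{j,t}}$, with at most $|\tilde{\mathcal{N}}_i|$ such coordinates. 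Collecting these in the Euclidean norm gives $\|\partial f_{i,t}\|_2 \le \frac{1}{|\tilde{\mathcal{N}}_i|}\Bigl(\sum_{j \in \tilde{\mathcal{N}}_i} l_{j,t}^{-2}\Bigr)^{1/2} \le \frac{1}{\sqrt{|\tilde{\mathcal{N}}_i|}}\cdot\frac{1}{\min_{j \in \mathcal{N}} l_j} \le \frac{1}{\min_{j \in \mathcal{N}} l_j}$. Since a convex function with subgradients bounded by $L$ is $L$-Lipschitz, $L = 1/\min_{j \in \mathcal{N}} l_j$ is a valid Lipschitz constant.

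\textbf{Main obstacle.} The one delicate point is the nonsmoothness introduced by the $\min$: $f_{i,t}$ is not differentiable on the switching set $\{s = 1\}$, so I would phrase the gradient bound in terms of subgradients and invoke that a pointwise minimum of Lipschitz functions is Lipschitz with constant the maximum of the constituent constants, rather than differentiating naively. A secondary bookkeeping issue is to confirm that the $0/1$ structure of $\boldsymbol{\tilde{a}}_i$ and the normalization $1/|\tilde{\mathcal{N}}_i|$ combine so that the graph-size factors do not inflate the bound; the displayed chain shows they in fact leave a constant no larger than the stated $1/\min_{j \in \mathcal{N}} l_j$, so the claim holds (indeed with slack, which I would keep since the clean form is what the subsequent regret analysis uses).
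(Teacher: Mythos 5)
Your proof is correct, but it follows a genuinely different route from the paper's. For convexity, the paper verifies Definition~\ref{def:convexity} by brute force: it rewrites each term via the identity $\min(a,b)=\frac{1}{2}(a+b-\vert a-b\vert)$ and reduces the convexity inequality to the triangle inequality for absolute values; you instead invoke composition rules (an affine map and a constant are concave, a pointwise minimum of concave functions is concave, nonnegative averaging preserves concavity, and negation plus a constant yields convexity). Your argument is shorter, and it actually establishes joint convexity in all allocation variables, whereas the paper's manipulation holds the neighbors' contribution $q(j)$ fixed. For Lipschitz continuity, the paper performs a four-case analysis comparing function values according to whether each ratio $\frac{m+x_{i,t}(j)}{l_j}$ and $\frac{m+y_{i,t}(j)}{l_j}$ exceeds $1$; you instead bound the Euclidean norm of a subgradient, using that $s\mapsto\min\{s,1\}$ is $1$-Lipschitz, that each coordinate $x_{i,t}(j)$ enters exactly one satisfaction term $\bar{f}_{j,t}$, and that a convex function with subgradients of norm at most $L$ is $L$-Lipschitz. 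Your route handles the nonsmoothness on the switching set cleanly and yields the sharper constant $\frac{1}{\sqrt{\vert\tilde{\mathcal{N}}_i\vert}\min_{j\in\mathcal{N}}l_j}$; it also keeps the normalization $1/\vert\tilde{\mathcal{N}}_i\vert$ throughout, which the paper's case analysis silently drops (their displayed case bounds are $\ell_1$-type and only give the stated $\ell_2$ constant once that factor is restored). What the paper's approach buys in exchange is self-containedness: it uses nothing beyond the triangle inequality, with no appeal to convex-analysis facts about subgradients or minima of concave functions.
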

\begin{proof}
See Appendix~\ref{app:convex-loss}.
\end{proof}
\begin{remark}
The proposed loss function \eqref{eq:loss_f} measures the energy satisfaction level in each node's neighborhood, which is proportional to the extra energy charged from the main grid. Under this setup, we aim to minimize the total energy cost of the conventional energy drawn from the main grid. 
\end{remark}
\begin{remark}
We can introduce a variation of the loss function \eqref{eq:loss_f} by incorporating a discount factor $c(i,j) \in (0,1]$ for edge $(i,j)$ in the network. This factor models the energy dissipation through transfer. For this modified condition, we define the loss function as
\begin{align}
    f_{i,t} &= 1-\frac{1}{\vert \tilde{\mathcal{N}}_i \vert} \sum_{j \in \tilde{\mathcal{N}}_i}\bar{f}_{j,t} \notag \\
    &= 1-\frac{1}{\vert \tilde{\mathcal{N}}_i \vert} \sum_{j \in \tilde{\mathcal{N}}_i}\min\{\frac{\sum_{k \in \tilde{\mathcal{N}}_i} c(i,k)x_{k,t}(i)}{l_{i,t}} , 1\}.
\end{align}
Note that this adjustment does not compromise the performance of our algorithm, as it is agnostic to the specific form of any convex and Lipschitz continuous loss function. As such, our proposed algorithm remains applicable and effective. In general, a class of loss function that satisfy Assumption~\ref{assump:fg-bound} and \ref{assump:x} is workable under our proposed algorithm. For simplicity and consistency, we continue to utilize the loss function defined in \eqref{eq:loss_f}. 
\end{remark}
\subsection{Problem Formulation}
The energy management procedure follows in successive rounds $t = 1,\ldots,T$. It is widely recognized that the generation of renewable energy resources has random and intermittent characteristics inherently \cite{Maghsudi17:DUA}; Hence, we model it as a non-stationary process. Consequently, the energy shortage at each node is also non-stationary. 

Our objective is to develop an algorithm for each node to minimize the cumulative loss, given by \eqref{eq:loss_f}, with minimized constraint violations, given by \eqref{eq:constr-g}. That is equivalent to regret minimization in a bandit setting. We assume that nodes fully comply with the output of the algorithm. For a fixed time horizon $T$, the static regret is the cost of not playing the optimal action in the hindsight. Formally,
\begin{gather}
S-R_T^i = \mathbb{E}[\sum_{t=1}^T f_{i,t}(\boldsymbol{x}_{i,t})] - \min_{\boldsymbol{x} \in \mathcal{X}}\sum_{t=1}^T f_{i,t}(\boldsymbol{x}).  
\label{eq:s-rgt}
\end{gather}
The static regret implicitly assumes that there is a reasonably good decision over all iterations \cite{zhao2021bandit}. However, in non-stationary environments, the underlying distribution of the online reward function changes \cite{sugiyama2012machine,gama2014survey,zhao2019distribution}. To address this limitation, we define the dynamic regret as the difference between the cumulative loss of the player $i$ and that of a comparator sequence $\boldsymbol{u}_{i,t} \in \mathcal{X}$. Formally, 
\begin{gather}
    D-R_T^i(\boldsymbol{u}_{i,1}, \ldots,\boldsymbol{u}_{i,T}) = \mathbb{E}[\sum_{t=1}^T f_{i,t}(\boldsymbol{x}_{i,t})] - \sum_{t=1}^T f_{i,t}(\boldsymbol{u}_{i,t}).  
\label{eq:d-rgt}
\end{gather}
In contrast to the static regret, the dynamic regret \eqref{eq:d-rgt} compares with a sequence of changing comparators; Hence, it is more suitable to measure the algorithms' performance under a non-stationary environment. 

Besides regret, constraint violation, defined below, is a performance metric of the algorithm. 
\begin{gather}
V_T^i = \sum_{t=1}^T g_{i,t}^+(\boldsymbol{x}_{i,t}),
\end{gather}
where the operator $(\cdot)^+ = \max (\cdot, 0)$ adds all the violations over all iterations. We call it hard constraint and the metric cumulative hard constraint violation \cite{guo2022online}. 
\section{Proposed Solution}
\label{sec:solution}
We first propose an algorithm to solve the formulated problem in a stationary environment. Afterward, we generalize the algorithm to deal with the non-stationary environment. 
\subsection{Distributed Resource-sharing Algorithm}
We summarize our developed distributed resource-sharing algorithm (DRS) with bandit feedback in \textbf{Algorithm~\ref{alg:drs}}. Detailed description follows.
\begin{algorithm}[!ht]
\caption{DRS for $i \in \mathcal{N}$}\label{alg:drs}
\begin{algorithmic}[1]
\STATE \textbf{Initialize:} 
$\boldsymbol{u}_{i,1} \in \mathcal{S}^{\tilde{\mathcal{N}}_i}$, $\boldsymbol{z}_{i,1} \in (1-\xi_{i,t})\mathcal{X}_i$, $\boldsymbol{x}_{i,1}=\boldsymbol{z}_{i,1}+\delta_{i,1}\boldsymbol{u}_{i,1}$, $\boldsymbol{x}_{i,1}'= \boldsymbol{x}_{i,1}$, $i \in [n]$.
\FOR{$t=2,...,T$} 
    \STATE Play action $\boldsymbol{x}'_{i,t-1}$.
    \STATE Get $\bar{f}_{i,t-1}(\boldsymbol{x}'_{i,t-1})$ and $g_{i,t-1}(\boldsymbol{x}'_{i,t-1})$.
    \STATE Communicate with neighbors and obtain $f_{i,t-1}$ according to (\ref{eq:loss_f}).
    \STATE Select random unit vector $\boldsymbol{u}_{i,t}$. Let 
    \begin{align}
        \boldsymbol{z}_{i,t} &=\mathcal{P}_{(1-\xi_{i,t})\mathcal{X}_i} (\boldsymbol{z}_{i,t-1}-\eta_{i,t}\hat{\boldsymbol{\Gamma}}_{i,t}) 
        \label{eq:proj-z}\\  \boldsymbol{x}_{i,t}&=\boldsymbol{z}_{i,t}+\delta_{i,t}\boldsymbol{u}_{i,t} 
        \label{eq:x} \\
        q_{i,t} &= (q_{i,t-1}+\gamma_{i,t}(g_{i,t-1}(\boldsymbol{x}_{i,t-1}')-\beta_{i,t}q_{i,t-1}))_{+} \label{eq:proj-q}
    \end{align}
    \IF{$\sum_{j \in \tilde{\mathcal{N}}_i} x_{i,t}(j) > d_{i,t}$}
    \STATE 
    \begin{gather}
        x'_{i,t}(i)= \frac{x_{i,t}(i)}{\sum_{j=1}^{N} x_{i,t}(j)} d_{i,t}, \label{eq:sio}
    \end{gather}
    \ELSE 
    \STATE $\boldsymbol{x}_{i,t}' = \boldsymbol{x}_{i,t}$.
    \ENDIF
\ENDFOR
\end{algorithmic}
\end{algorithm}  

In DRS, the agents run the online bandit convex optimization strategy in parallel. Let $\boldsymbol{z}$ be an updating variable. The action vector $\boldsymbol{x}$ denotes an estimation of $\boldsymbol{z}$, i.e., $\boldsymbol{x} = \boldsymbol{z} + \delta \boldsymbol{u}$, where $\boldsymbol{u}$ is a randomly generated unit vector. The DRS algorithm firstly initializes the parameter by randomly sampling $\boldsymbol{u}_{i,1} \in \mathcal{S}^{\tilde{\mathcal{N}}_i}$ and $\boldsymbol{z}_{i,1} \in (1-\xi_{i,t})\mathcal{X}_i$. It then calculates the corresponding $\boldsymbol{x}_{i,1}$ and $\boldsymbol{x}'_{i,1}$.

At every round $t$, each agent $i$ plays an action $\boldsymbol{x}_{i,t}'$ that yields some individual loss $\bar{f}_{i,t-1}$. After receiving feedback from its neighbors, each agent obtains the final loss $f_{i,t-1}$ based on its own and neighbors' losses. Subsequently, the algorithm updates the estimations using the bandit gradient descent method via two projection functions. Motivated by \textit{regularized Lagrangian} proposed in \cite{mahdavi2012trading,jenatton2016adaptive,Yi2020distr_band_online_conv_opt}, we define the regularized Lagrangian function for DRS as
\begin{gather}
\mathcal{L}_{i,t}(\boldsymbol{x},q) = f_{i,t}(\boldsymbol{x}) + q g_{i,t}(\boldsymbol{x}) - \frac{\beta}{2} q^2, \label{eq:lagr}
\end{gather}
where $f_{i,t}$ and $g_{i,t}$ respectively denote the loss- and constraint functions, while $q \in \mathbb{R}_{+}$ is the dual variable. The update steps (\ref{eq:proj-z}) and (\ref{eq:proj-q}) are originally from
\begin{gather}
\boldsymbol{x}_{i,t+1} = \mathcal{P}_{\mathcal{X}}(\boldsymbol{x}_{i,t} - \eta_i \nabla_{\boldsymbol{x}} \mathcal{L}_{i,t}(\boldsymbol{x}_{i,t},q_{i,t})), 
\label{eq:proj_x}\\
q_{i,t+1} = (q_{i,t}+\gamma_i\nabla_{q} \mathcal{L}_{i,t}(\boldsymbol{x}_{i,t},q_{i,t}))_+,
\label{eq:p_q}
\end{gather}
where $\nabla_{\boldsymbol{x}} \mathcal{L}_{i,t}(\boldsymbol{x}_{i,t},q_{i,t})) = \nabla f_{i,t}(\boldsymbol{x}_{i,t})+(\nabla g_{i,t}(\boldsymbol{x}_{i,t}))^Tq_{i,t}$ and $\nabla_{q} \mathcal{L}_{i,t}(\boldsymbol{x}_{i,t},q_{i,t})) = g_{i,t}(\boldsymbol{x}_{i,t}) - \beta q_{i,t}$. Lemma~\ref{lem:opeg} offers the approximation of the gradient under bandit feedback.
\begin{lemma}[\cite{Flaxman2004BCO}]
\label{lem:opeg}
Fix $\delta > 0$ as a small smoothing parameter. Over the random unit vectors $\boldsymbol{u}$,
\begin{gather}
\mathbb{E}_{\boldsymbol{u}}[f(\boldsymbol{x}+\delta\boldsymbol{u})] = \hat{f}(\boldsymbol{x}) \\
\mathbb{E}_{\boldsymbol{u}}[f(\boldsymbol{x}+\delta \boldsymbol{u})\boldsymbol{u}] = \frac{\delta}{N}\hat{\nabla}f(\boldsymbol{x}), \label{eq:est-grad}
\end{gather}
where $N$ is the dimension of $\boldsymbol{x}$.
\end{lemma}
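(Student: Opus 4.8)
The plan is to recognize this as the standard one-point gradient estimator of Flaxman, Kalai, and McMahan, whose proof rests on the divergence theorem. First I would fix the smoothing distribution precisely: define the smoothed surrogate $\hat{f}(\boldsymbol{x}) = \mathbb{E}[f(\boldsymbol{x} + \delta \boldsymbol{v})]$, where the averaging is over $\boldsymbol{v}$ drawn uniformly from the \emph{unit ball}. With this definition the first identity is immediate --- it simply says that averaging $f$ over the perturbed point reproduces $\hat{f}$ --- so no real work is required there. The one subtlety worth flagging is that the ball-uniform average defining $\hat{f}$ and the sphere-uniform average $\mathbb{E}_{\boldsymbol{u}}$ appearing in the gradient estimate are \emph{different} distributions; the clean gradient identity below holds only for the ball-averaged $\hat{f}$, so keeping the two straight is where the notation can mislead.

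The substance is the second identity. I would write $\hat{f}$ explicitly as a normalized volume integral,
\begin{gather}
\hat{f}(\boldsymbol{x}) = \frac{1}{V_N \delta^N} \int_{\|\boldsymbol{w}\| \le \delta} f(\boldsymbol{x} + \boldsymbol{w}) \, d\boldsymbol{w},
\end{gather}
where $V_N$ is the volume of the unit ball in $\mathbb{R}^N$, and then differentiate under the integral sign. Since the $\boldsymbol{x}$-dependence already sits inside the integrand, differentiating gives $\nabla \hat{f}(\boldsymbol{x}) = \frac{1}{V_N \delta^N}\int_{\|\boldsymbol{w}\|\le\delta} \nabla f(\boldsymbol{x}+\boldsymbol{w})\, d\boldsymbol{w}$.

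Next I would convert this volume integral of the gradient into a surface integral over the sphere using the divergence theorem (componentwise, the identity $\int_B \nabla f \, d\boldsymbol{w} = \int_{\partial B} f\, \boldsymbol{n}\, dS$, with $\boldsymbol{n}$ the outward unit normal). On the sphere of radius $\delta$ the outward normal at $\boldsymbol{w}$ is $\boldsymbol{w}/\delta$, so parametrizing $\boldsymbol{w} = \delta \boldsymbol{u}$ with $\boldsymbol{u}$ on the unit sphere and absorbing the Jacobian $\delta^{N-1}$ of the surface element yields a surface integral of $f(\boldsymbol{x}+\delta\boldsymbol{u})\boldsymbol{u}$. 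Rewriting that integral as an expectation over the uniform distribution on the sphere introduces the surface area $S_{N-1}$ of the unit sphere. Finally I would invoke the elementary geometric identity $S_{N-1} = N V_N$; after cancellation the constant collapses to $N/\delta$, giving $\nabla\hat{f}(\boldsymbol{x}) = \frac{N}{\delta}\mathbb{E}_{\boldsymbol{u}}[f(\boldsymbol{x}+\delta\boldsymbol{u})\boldsymbol{u}]$, which is exactly \eqref{eq:est-grad} after rearrangement.

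I expect the main obstacle to be the careful bookkeeping in the two calculus steps: justifying differentiation under the integral sign (which needs $f$ to be, say, bounded and integrable --- guaranteed here by Assumption~\ref{assump:fg-bound}, with differentiability obtainable by a standard mollification argument since $f$ is only Lipschitz by Proposition~\ref{prop:convex-loss}) and applying the divergence theorem with the correct normalization. Tracking the powers of $\delta$ from the change of variables and the surface element, together with the $S_{N-1}=N V_N$ identity, is the step most prone to error; everything else is formal.
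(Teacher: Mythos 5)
The paper does not actually prove this lemma---it is imported verbatim from the cited reference \cite{Flaxman2004BCO}---and your reconstruction is precisely the standard argument given in that source: write $\hat{f}$ as a normalized ball integral, differentiate under the integral sign, convert the volume integral of the gradient to a sphere integral via the divergence theorem, and collapse the constants with $S_{N-1} = N V_N$. Your proof is correct, and you also rightly flag the one subtlety the paper's statement glosses over, namely that the gradient identity \eqref{eq:est-grad} holds for the \emph{ball}-averaged $\hat{f}$ while the expectation on its left-hand side is over the unit \emph{sphere}, so the two displayed equations cannot both be read with the same distribution for $\boldsymbol{u}$.
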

By Lemma~\ref{lem:opeg}, the updated direction information under bandit feedback yields
\begin{align}
\hat{\boldsymbol{\Gamma}}_{i,t} &=\hat{\nabla}f_{i,t-1}(\boldsymbol{z}_{i,t-1})+(\hat{\nabla}g_{i,t-1}(\boldsymbol{z}_{i,t-1}))^Tq_{i,t} \label{eq:a} \\
&=\frac{N}{\delta_{i,t-1}}[ f_{i,t}(\boldsymbol{x}_{i,t-1})\boldsymbol{u}_{i,t-1}+(g_{i,t-1}(\boldsymbol{x}_{i,t-1})\boldsymbol{u}_{i,t-1})^T q_{i,t}]. \label{eq:est-grad} 
\end{align}
Besides, (\ref{eq:proj-q}) is the same as projection function (\ref{eq:p_q}) without any approximations. 

Most cutting-edge methods to handle time-varying constraints prevent constraint violations in the long run, e.g., on average, while ignoring intermediary rounds. However, in reality, such a setting can be severely harmful and even make the problem infeasible. To deal with that issue, we propose adjustments in our method that reduce or eliminate constraint violations. If the aggregated energy distribution of a node exceeds its existing generated resources, then it calculates a new distribution profile to its neighborhood (including itself) $\boldsymbol{x}_i \in \mathbb{R}^{\tilde{\mathcal{N}}_i}$ according to \eqref{eq:sio}. 
\subsubsection{Performance Guarantees}
The following theorem states the regret bound of our proposed algorithm.
\begin{theorem}
\label{the:drs-drgt}
Set $\delta_{i,t} = (\frac{|\mathcal{N}_i|F_i}{\tilde{L}})^{\frac{1}{2}}(\frac{R^2/2+RP_T}{t})^{\frac{1}{4}}$, $\eta_{i,t} = (\frac{1}{|\mathcal{N}_i|F_i\tilde{L}})^{\frac{1}{2}}(\frac{R^2/2+RP_T}{t})^{\frac{3}{4}}$, $\beta_{i,t} = \frac{1}{G_it^{1/2}}$, $\gamma_{i,t} = \frac{1}{G_i^2t^{1/2}}$ and $\xi_{i,t} = \delta_{i,t}/r_i$ with $\tilde{L} = 3L + LR/r$ and Assumption~\ref{assump:fg-bound} and \ref{assump:x} hold. The expected dynamic regret and the violation of constraints of \textbf{Algorithm~\ref{alg:drs}} satisfy
\begin{gather}
D-R_T^i \leq \tilde{\mathcal{O}}((1+P_T)^{\frac{1}{2}}T^{\frac{1}{2}} + (1+P_T)^{\frac{1}{4}}T^{\frac{3}{4}}) \\
V_T^i \leq \tilde{\mathcal{O}}([(1+P_T)^{\frac{1}{2}}T+(1+P_T)^{\frac{1}{4}}T^{\frac{5}{4}}]^{\frac{1}{2}}),
\end{gather}
for any comparator sequence $\boldsymbol{u}_1, \ldots, \boldsymbol{u}_T \in \mathcal{X}$ and $\tilde{\mathcal{O}}(\cdot)$ omits $\log T$ factors. In above, $P_T = \sum_{t=2}^T\norm{\boldsymbol{u}_t - \boldsymbol{u}_{t-1}}$ is the path-length.
\end{theorem}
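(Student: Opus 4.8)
The plan is to run the standard ``regularized Lagrangian'' dynamic-regret analysis, but to carry it out on the \emph{smoothed} surrogates of the loss and constraint that the bandit estimator actually targets, and then to pay for three separate approximations at the very end: the smoothing gap, the domain shrinkage, and the bandit variance.

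First I would establish the one-step potential inequality. Since \eqref{eq:proj-z} projects a gradient step onto the convex set $(1-\xi_{i,t})\mathcal{X}_i$, nonexpansiveness of the projection gives, for any $\boldsymbol{v}$ in that shrunk set,
\begin{gather}
\norm{\boldsymbol{z}_{i,t+1}-\boldsymbol{v}}^2 \leq \norm{\boldsymbol{z}_{i,t}-\boldsymbol{v}}^2 - 2\eta_{i,t}\langle \hat{\boldsymbol{\Gamma}}_{i,t}, \boldsymbol{z}_{i,t}-\boldsymbol{v}\rangle + \eta_{i,t}^2\norm{\hat{\boldsymbol{\Gamma}}_{i,t}}^2.
\end{gather}
Taking expectation over the random unit vector and invoking Lemma~\ref{lem:opeg}, the estimator $\hat{\boldsymbol{\Gamma}}_{i,t}$ is an unbiased gradient of the smoothed Lagrangian, so convexity of the smoothed loss and constraint (Proposition~\ref{prop:convex-loss}) converts the inner product into a Lagrangian difference $\hat{\mathcal{L}}_{i,t}(\boldsymbol{z}_{i,t},q_{i,t}) - \hat{\mathcal{L}}_{i,t}(\boldsymbol{v},q_{i,t})$. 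This isolates $\hat f_{i,t}(\boldsymbol{z}_{i,t}) - \hat f_{i,t}(\boldsymbol{v})$ on one side, carrying along the dual coupling $q_{i,t}(\hat g_{i,t}(\boldsymbol{z}_{i,t}) - \hat g_{i,t}(\boldsymbol{v}))$.

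Next I would sum over $t$ with $\boldsymbol{v}$ set to the (shrunk) comparator $\boldsymbol{u}_{i,t}$. The distance terms do not telescope cleanly because the comparator moves; using Assumption~\ref{assump:x} (domain radius $R$), the comparator shift is absorbed as $\norm{\boldsymbol{z}-\boldsymbol{u}_{i,t+1}}^2 - \norm{\boldsymbol{z}-\boldsymbol{u}_{i,t}}^2 \leq 2R\norm{\boldsymbol{u}_{i,t+1}-\boldsymbol{u}_{i,t}}$, whose sum is the path length $P_T$. The bandit variance enters through $\norm{\hat{\boldsymbol{\Gamma}}_{i,t}}^2 \lesssim (N/\delta_{i,t-1})^2(F_i + G_i q_{i,t})^2$ by Assumption~\ref{assump:fg-bound}, which is where the $1/\delta^2$ blow-up appears; the chosen $\delta_{i,t},\eta_{i,t}$ are exactly those that balance the accumulated $\sum_t \eta_{i,t}^{-1}(\ldots)$ against $\sum_t \eta_{i,t}\delta_{i,t}^{-2}(\ldots)$ after substituting the $(R^2/2+RP_T)/t$ schedule. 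In parallel I would derive the companion inequality from the dual update \eqref{eq:proj-q}: because the regularizer $-\tfrac{\beta}{2}q^2$ makes the $q$-recursion a contraction toward $0$, it yields a uniform bound on $q_{i,t}$ and turns $\sum_t q_{i,t}g_{i,t}$ into a telescoping quadratic in $q$; combined with the primal bound this supplies the drift needed to control $V_T^i$.

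Finally I would pay the three approximation costs: $\abs{\hat f_{i,t}-f_{i,t}}\leq \delta_{i,t}L$ and $\norm{\boldsymbol{x}_{i,t}-\boldsymbol{z}_{i,t}}=\delta_{i,t}$ from \eqref{eq:x}, and the shrinkage cost, which is $O(\xi_{i,t}RL)=O(\delta_{i,t}RL/r)$ per round since $\xi_{i,t}=\delta_{i,t}/r_i$; these are precisely why $\tilde L = 3L + LR/r$ enters the tuning. Substituting the schedules collapses the bound to $\tilde{\mathcal{O}}((1+P_T)^{1/2}T^{1/2} + (1+P_T)^{1/4}T^{3/4})$, and propagating the same estimates through the dual inequality gives the stated $V_T^i$ bound as a square root of a regret-type quantity. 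I expect the main obstacle to be the simultaneous control of the dual iterate and the bandit variance: the term $\eta_{i,t}^2 q_{i,t}^2/\delta_{i,t}^2$ couples the primal step-size schedule to the \emph{a~priori} unbounded $q_{i,t}$, so the regularization bound on $q_{i,t}$ must be secured \emph{before} the primal summation can be closed, and the two recursions must be combined so that neither diverges—this interlocking, with the dynamic path-length terms layered on top, is the delicate heart of the argument.
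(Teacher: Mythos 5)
Your proposal is correct and follows essentially the same route as the paper: the paper likewise decomposes the dynamic regret into the smoothed-surrogate regret plus smoothing and shrinkage gaps (yielding $\tilde L = 3L + LR/r$), bounds the first part via a primal--dual one-step projection inequality for the regularized Lagrangian with the $2R\norm{\boldsymbol{u}_{t+1}-\boldsymbol{u}_t}$ path-length absorption and the $1/\delta^2$ bandit-variance balancing, and extracts $V_T^i$ by maximizing over the free dual variable. The only cosmetic difference is that the paper packages the primal--dual summation as a standalone auxiliary theorem and neutralizes the $q_t^2$ terms through the condition $\beta_t > 2G_f^2\eta_t + 2G_g^2\gamma_t\beta_t^2$ rather than through an explicit uniform bound on $q_{i,t}$, but this serves the same purpose as your contraction argument.
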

\begin{proof}
See appendix~\ref{app:drs-drgt}.
\end{proof}
The following corollary characterises the regret of Algorithm~\ref{alg:drs} for the path length $P_T = 0$.
\begin{corollary}
\label{cor:drs-srgt}
%
%
Set the hyper parameters $\delta_{i,t} = (\frac{|\mathcal{N}_i|F_i}{\tilde{L}})^{\frac{1}{2}}(\frac{R^2/2}{t})^{\frac{1}{4}}$, $\eta_{i,t} = (\frac{1}{|\mathcal{N}_i|F_i\tilde{L}})^{\frac{1}{2}}(\frac{R^2/2}{t})^{\frac{3}{4}}$, $\beta_{i,t} = \frac{1}{G_it^{1/2}}$, $\gamma_{i,t} = \frac{1}{G_i^2t^{1/2}}$s and $\xi_{i,t} = \delta_{i,t}/r_i$. If Assumptions~\ref{assump:fg-bound} and \ref{assump:x} hold, the static regret of \textbf{Algorithm~\ref{alg:drs}} satisfies
\begin{gather}
S-R_T^i \leq \tilde{\mathcal{O}}(T^{3/4}), \quad V_T^i \leq \tilde{\mathcal{O}}(T^{\frac{5}{8}}).
\end{gather}
\end{corollary}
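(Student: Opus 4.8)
The plan is to recognize the static regret as the special case of the dynamic regret already bounded in Theorem~\ref{the:drs-drgt} in which the comparator sequence is held constant. Comparing \eqref{eq:s-rgt} with \eqref{eq:d-rgt}, I would first observe that $S-R_T^i$ coincides with $D-R_T^i(\boldsymbol{u}_{i,1},\ldots,\boldsymbol{u}_{i,T})$ upon choosing $\boldsymbol{u}_{i,t} \equiv \boldsymbol{x}_i^\star$ for every $t$, where $\boldsymbol{x}_i^\star := \arg\min_{\boldsymbol{x}\in\mathcal{X}}\sum_{t=1}^T f_{i,t}(\boldsymbol{x})$ is the best fixed action in hindsight. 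This minimizer exists since $\mathcal{X}$ is compact by Assumption~\ref{assump:x} and each $f_{i,t}$ is continuous and bounded by Proposition~\ref{prop:convex-loss} and Assumption~\ref{assump:fg-bound}. For this constant sequence the path-length vanishes, $P_T = \sum_{t=2}^T\norm{\boldsymbol{u}_{i,t}-\boldsymbol{u}_{i,t-1}} = 0$.

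Next I would check that the hyperparameter schedule named in the corollary is precisely that of Theorem~\ref{the:drs-drgt} specialized to $P_T = 0$. Setting $P_T=0$ collapses the factor $R^2/2 + R P_T$ inside $\delta_{i,t}$ and $\eta_{i,t}$ to $R^2/2$, which reproduces the corollary's $\delta_{i,t}$ and $\eta_{i,t}$ exactly, while $\beta_{i,t}$, $\gamma_{i,t}$, and the relation $\xi_{i,t}=\delta_{i,t}/r_i$ carry no dependence on $P_T$ and are therefore identical. Consequently the hypotheses of Theorem~\ref{the:drs-drgt} are met with these parameters and $P_T=0$, so both of its conclusions transfer directly.

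Finally I would simplify the two inherited bounds. The regret term $(1+P_T)^{1/2}T^{1/2}+(1+P_T)^{1/4}T^{3/4}$ becomes $T^{1/2}+T^{3/4}$, dominated by $T^{3/4}$, yielding $S-R_T^i\leq\tilde{\mathcal{O}}(T^{3/4})$; the violation term $[(1+P_T)^{1/2}T+(1+P_T)^{1/4}T^{5/4}]^{1/2}$ becomes $[T+T^{5/4}]^{1/2}$, whose leading contribution $(T^{5/4})^{1/2}=T^{5/8}$ gives $V_T^i\leq\tilde{\mathcal{O}}(T^{5/8})$. I do not expect a genuine analytical obstacle: all the work resides in Theorem~\ref{the:drs-drgt}. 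The only points that require care are justifying the constant-comparator reduction so that the substitution $P_T=0$ is legitimate, and correctly identifying the dominant exponent in each sum.
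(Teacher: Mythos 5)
Your proposal is correct and follows exactly the route the paper intends: the corollary is stated as the specialization of Theorem~\ref{the:drs-drgt} to a constant comparator (the best fixed action in hindsight), for which $P_T=0$, after which the two bounds collapse to $\tilde{\mathcal{O}}(T^{3/4})$ and $\tilde{\mathcal{O}}(T^{5/8})$ by identifying the dominant terms. Your added care about existence of the minimizer and the hyperparameter match is sound but not a departure from the paper's (implicit) argument.
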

\begin{corollary}
\label{cor:drs-adj}
Set the hyperparameters $\delta_{i,t} = (\frac{|\mathcal{N}_i|F_i}{\tilde{L}})^{\frac{1}{2}}(\frac{R^2/2+RP_T}{t})^{\frac{1}{4}}$, $\eta_{i,t} = (\frac{1}{|\mathcal{N}_i|F_i\tilde{L}})^{\frac{1}{2}}(\frac{R^2/2+RP_T}{t})^{\frac{3}{4}}$, $\beta_{i,t} = \frac{1}{G_it^{1/2}}$, $\gamma_{i,t} = \frac{1}{G_i^2t^{1/2}}$ and $\xi_{i,t} = \delta_{i,t}/r_i$. If Assumptions~\ref{assump:fg-bound} and \ref{assump:x} hold, with our adjustment steps (Line 7-10) in Algorithm \ref{alg:drs}, the dynamic regret bound and static regret bound have the same growth order as 
\begin{gather}
D-R_T^i \leq  \tilde{\mathcal{O}}((1+P_T)^{\frac{1}{2}}T^{\frac{1}{2}} + (1+P_T)^{\frac{1}{4}}T^{\frac{3}{4}}) \\
S-R_T^i \leq \tilde{\mathcal{O}}(T^{3/4}). 
\end{gather}    
\end{corollary}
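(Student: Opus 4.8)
The plan is to treat the adjustment step \eqref{eq:sio} as a bounded perturbation of the internal iterate that Theorem~\ref{the:drs-drgt} already controls, and to charge the extra loss it introduces against the cumulative constraint violation $V_T^i$ from that same theorem. Since Theorem~\ref{the:drs-drgt} bounds the regret of the iterate $\boldsymbol{x}_{i,t}$, whereas the adjusted algorithm actually plays $\boldsymbol{x}'_{i,t}$, I would first split the dynamic regret of the played sequence as
\begin{align}
&\mathbb{E}\Big[\sum_{t=1}^T f_{i,t}(\boldsymbol{x}'_{i,t})\Big] - \sum_{t=1}^T f_{i,t}(\boldsymbol{u}_{i,t}) \notag \\
&= \Big(\mathbb{E}\Big[\sum_{t=1}^T f_{i,t}(\boldsymbol{x}_{i,t})\Big] - \sum_{t=1}^T f_{i,t}(\boldsymbol{u}_{i,t})\Big) \notag \\
&\quad + \mathbb{E}\Big[\sum_{t=1}^T\big(f_{i,t}(\boldsymbol{x}'_{i,t}) - f_{i,t}(\boldsymbol{x}_{i,t})\big)\Big],
\end{align}
where the parenthesized term is exactly $D\text{-}R_T^i$ bounded by Theorem~\ref{the:drs-drgt} (and, for $P_T=0$, $S\text{-}R_T^i$ by Corollary~\ref{cor:drs-srgt}). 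It then suffices to show that the perturbation term is of strictly lower order than those regret bounds.

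For the perturbation term I would invoke Lipschitz continuity of the loss (Proposition~\ref{prop:convex-loss}), giving $|f_{i,t}(\boldsymbol{x}'_{i,t}) - f_{i,t}(\boldsymbol{x}_{i,t})| \le L\norm{\boldsymbol{x}'_{i,t} - \boldsymbol{x}_{i,t}}$, and then estimate the displacement using the explicit form of \eqref{eq:sio}. The adjustment fires only when $g_{i,t}(\boldsymbol{x}_{i,t}) = \sum_{j\in\tilde{\mathcal{N}}_i} x_{i,t}(j) - d_{i,t} > 0$, in which case $\boldsymbol{x}'_{i,t}$ is the uniform rescaling $\tfrac{d_{i,t}}{\sum_j x_{i,t}(j)}\boldsymbol{x}_{i,t}$, so that
\begin{align}
\norm{\boldsymbol{x}'_{i,t} - \boldsymbol{x}_{i,t}}
&= \frac{\sum_j x_{i,t}(j) - d_{i,t}}{\sum_j x_{i,t}(j)}\,\norm{\boldsymbol{x}_{i,t}}_{2} \notag \\
&= \frac{g_{i,t}^+(\boldsymbol{x}_{i,t})}{\norm{\boldsymbol{x}_{i,t}}_{1}}\,\norm{\boldsymbol{x}_{i,t}}_{2} \le g_{i,t}^+(\boldsymbol{x}_{i,t}),
\end{align}
where the last step uses $\norm{\boldsymbol{x}_{i,t}}_{2}\le\norm{\boldsymbol{x}_{i,t}}_{1}$, valid because $\boldsymbol{x}_{i,t}\in\mathbb{R}_+^N$ has nonnegative entries; when the adjustment does not fire both sides vanish. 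Summing over $t$ collapses the perturbation into the cumulative hard-constraint violation, $\sum_{t=1}^T\norm{\boldsymbol{x}'_{i,t}-\boldsymbol{x}_{i,t}} \le V_T^i$, so the extra expected cost is at most $L\,\mathbb{E}[V_T^i]$.

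To close, I would substitute the violation bound of Theorem~\ref{the:drs-drgt}, which (via $\sqrt{a+b}\le\sqrt a+\sqrt b$) reads $V_T^i \le \tilde{\mathcal{O}}((1+P_T)^{1/4}T^{1/2} + (1+P_T)^{1/8}T^{5/8})$, and compare it exponent-by-exponent with the dynamic regret bound $\tilde{\mathcal{O}}((1+P_T)^{1/2}T^{1/2} + (1+P_T)^{1/4}T^{3/4})$. Every power of $T$ and of $(1+P_T)$ appearing in $V_T^i$ is no larger than the corresponding power in the regret bound (in particular $5/8 < 3/4$ and $1/8 < 1/4$), so $L\,V_T^i$ is dominated and the dynamic regret retains the same growth order. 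Setting $P_T=0$ and using $V_T^i \le \tilde{\mathcal{O}}(T^{5/8})$ against the static bound $\tilde{\mathcal{O}}(T^{3/4})$ yields the static claim identically.

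The one genuinely delicate point is the displacement estimate: the adjustment is a multiplicative rescaling rather than a Euclidean projection, so the clean bound $\norm{\boldsymbol{x}'_{i,t}-\boldsymbol{x}_{i,t}} \le g_{i,t}^+(\boldsymbol{x}_{i,t})$ rests on the nonnegativity of the allocation to pass from $\norm{\cdot}_{2}$ to $\norm{\cdot}_{1}$; without it the ratio $\norm{\boldsymbol{x}_{i,t}}_{2}/\norm{\boldsymbol{x}_{i,t}}_{1}$ need not be bounded by one and the charging argument would fail. Everything else — the Lipschitz split and the exponent comparison — is routine once the violation bound $V_T^i$ is reused as a black box, which is precisely what makes the corollary a short consequence of Theorem~\ref{the:drs-drgt}.
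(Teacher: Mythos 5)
Your proposal is correct and follows essentially the same route as the paper's own proof: decompose the regret of the played sequence into the unadjusted regret plus the Lipschitz-controlled perturbation, bound the per-round displacement by $g_{i,t}^+(\boldsymbol{x}_{i,t})$ (your $\norm{\cdot}_2 \le \norm{\cdot}_1$ step is exactly the paper's passage from the Euclidean norm to the coordinatewise sum of adjustments), and absorb $L\,V_T^i$ into the regret bound since the violation bound of Theorem~\ref{the:drs-drgt} is of strictly lower order. Your explicit exponent-by-exponent comparison and the treatment of the static case are slightly more careful than the paper's one-line conclusion, but the argument is the same.
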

\begin{proof}
See appendix~\ref{app:drs-adj}.
\end{proof}
\subsection{Meta Algorithm for Non-stationary DRS}
As stated by Theorem~\ref{the:drs-drgt}, selecting the hyperparameters that minimize the regret depends on $P_T$; Nevertheless, $P_T$ is unknown and difficult to predict in real-world applications. The online ensemble is a solution to address that problem. It maintains candidates in parallel and uses expert-tracking algorithms to combine predictions and track the best parameter in a grid search manner \cite{van2016metagrad}. 

Nonetheless, in bandit convex optimization, implementing the online ensemble is cumbersome since only the reward/loss function value is available. The parameter-free algorithm proposed in \cite{zhao2021bandit} mitigates this limitation in the bandit convex optimization without considering the constraint violation. Motivated by \cite{zhao2021bandit} and \cite{guo2022online}, we propose our MA-NSDRS (Meta algorithm for non-stationary distributed resource sharing) algorithm, which is also an online ensemble method. More precisely, it combines a meta algorithm and several expert algorithms with our proposed DRS algorithm (Algorithm~\ref{alg:drs}). Details follow. 
\paragraph{Expert Algorithm} Algorithm~\ref{alg:ma-nsdrs} maintains the pool of candidate step sizes $\mathcal{H}$ for each agent and initializes an expert for each candidate step size. 

According to Theorem~\ref{the:drs-drgt}, the best possible step size is $\eta_{i,t}^{\dagger} = \sqrt{R_i(2R_i^2+R_iP_T)/(|\mathcal{N}_i|F_i\tilde{L})}\cdot t^{-3/4}$. Due to the unknown path-length $P_T$, this step size is not available. Since $ 0 \leq P_T \leq 2R_iT$, one can conclude that the best step size lies in the range
\begin{gather}
\sqrt{\frac{2R_i^3}{|\mathcal{N}_i|F_i\tilde{L}}}t^{-\frac{3}{4}} \leq \eta^{\dagger} \leq  \sqrt{\frac{2R_i^3(1+T)}{|\mathcal{N}_i|F_i\tilde{L}}}t^{-\frac{3}{4}}.
\end{gather}
Therefore, the pool of candidate step sizes $\mathcal{H}$ yields 
\begin{gather}
\mathcal{H} = \Big\{\eta_k = 2^{k-1}\sqrt{\frac{2R^3}{2|\mathcal{N}_i|F_i\tilde{L}}}\cdot t^{-\frac{3}{4}}| k= 1,2,\ldots, K \Big \}, 
\label{eq:pool}
\end{gather}
where $K = \lceil \frac{1}{2}\log_2(1+T)\rceil + 1$ is the number of candidate step sizes. The configuration \eqref{eq:pool} ensures the existence of an index $k^* \in [K]$ such that $\eta_{k^*} \leq \eta^{\dagger} \leq 2\eta_{k^*}$; i.e., there exists one step size in $\mathcal{H}$ which, despite being suboptimal, is sufficiently close to the optimal step size $\eta^{\dagger}$.     

Define the surrogate loss $l_t:(1-\xi)\mathcal{X} \rightarrow \mathbb{R}$ as
\begin{gather}
    l_t(\boldsymbol{x}) = \langle \hat{\nabla}f_{i,t}(\boldsymbol{x}), \boldsymbol{x} - \boldsymbol{x}_t \rangle, 
\end{gather}
where $\hat{\nabla}f_{i,t}(\boldsymbol{x})$ is the estimated gradient. The properties of the surrogate loss are as follow \cite{zhao2021bandit}.
\begin{property}
\label{prop:sur-loss1}
For any $\boldsymbol{x} \in (1-\xi)\mathcal{X}$, $\nabla l_t(\boldsymbol{x}) = \hat{\nabla}f_{i,t}(\boldsymbol{x})$. 
\end{property}
\begin{property}
For any $\boldsymbol{v} \in (1-\xi)\mathcal{X}$, $\mathbb{E}[\hat{f}_t(\boldsymbol{x}_t) - \hat{f}_t(\boldsymbol{v})] \leq \mathbb{E}[l_t(\boldsymbol{x}_t) -l_t(\boldsymbol{v})]$
\end{property}
At time step $t$, each expert $k$ runs the online gradient descent
\begin{align}
    \boldsymbol{z}_{i,t}^k &=\mathcal{P}_{(1-\xi_{i,t})\mathcal{X}_i} (\boldsymbol{z}_{i,t-1}^j-\eta_k\hat{\boldsymbol{\Gamma}}_{i,t}), \label{eq:proj-ez}
\end{align}
where $\eta_k \in \mathcal{H}$ is the step size of the expert $k$, $\hat{\boldsymbol{\Gamma}}_{i,t}$ is defined in \eqref{eq:est-grad}. Thanks to the Property~\ref{prop:sur-loss1}, all experts can perform the exact online gradient descent in the same direction $\hat{\boldsymbol{\Gamma}}_{i,t}$ at each time step $t$ \cite{zhao2021bandit}. 

\paragraph{Meta Algorithm} To combine the predictions from multiple experts, we utilize the exponentially weighted average forecaster algorithm \cite{cesa2006prediction} with non-uniform initial weights as meta algorithm (Line 16). The modified meta algorithm, together with our proposed DRS algorithm, yields our proposed MA-NSDRS algorithm for minimizing the dynamic regret of our formulated problem in non-stationary environments. The MA-NSDRS algorithm is summarized in Algorithm~\ref{alg:ma-nsdrs}.

\begin{algorithm}[!ht]
\caption{MA-NSDRS: Meta algorithm for non-stationary distributed resource sharing}\label{alg:ma-nsdrs}
\begin{algorithmic}[1]
\STATE \textbf{Input:} time horizon $T$, number of experts $K$, pool of candidate step sizes $\mathcal{H} = \{\eta_1,\ldots,\eta_K\}$, learning rate of the meta algorithm $\epsilon$.
\STATE \textbf{Initialize:} 
\begin{itemize}
    \item $\boldsymbol{u}_{i,0} \in \mathcal{S}^{\tilde{\mathcal{N}}_i}$, $\boldsymbol{z}_{i,0} \in (1-\xi_i)\mathcal{X}_i$, $\boldsymbol{x}_{i,0}=\boldsymbol{z}_{i,0}+\delta_i\boldsymbol{u}_{i,0}$, $\boldsymbol{x}_{i,0}'= \boldsymbol{x}_{i,0}$, $i \in [N]$.
    \item (K experts) $\omega_{i,1}^k = \frac{K+1}{K}\frac{1}{k(k+1)}$
\end{itemize} 
\FOR{$t=1,\ldots,T$} 
\STATE Receive $\boldsymbol{x}_{i,t}^k$ from each expert $k$
\begin{align}
    \boldsymbol{z}_{i,t}^k &=\mathcal{P}_{(1-\xi_i)\mathcal{X}_i} (\boldsymbol{z}_{i,t-1}^j-\eta_k\hat{\boldsymbol{\Gamma}}_{i,t}) \label{eq:proj-ez} 
    \end{align}
\STATE Obtain 
$\boldsymbol{z}_{i,t} = \sum_{k \in [K]} \omega_{i,t}^k \boldsymbol{z}_{i,t}^k$
\STATE Obtain $\boldsymbol{x}_{i,t}=\boldsymbol{z}_{i,t}+\delta_i\boldsymbol{u}_{i,t}$ 
\IF{$\sum_{j \in \tilde{\mathcal{N}}_i} x_{i,t}(j) > d_{i,t}$}
    \STATE 
    \begin{gather}
        x'_{i,t}(i)= \frac{x_{i,t}(i)}{\sum_{j=1}^{N} x_{i,t}(j)} d_{i,t}, 
    \end{gather}
\ELSE
    \STATE $\boldsymbol{x}_{i,t}' = \boldsymbol{x}_{i,t}$.
\ENDIF
\STATE Update 
\begin{gather}
    q_{i,t} = (q_{i,t-1}+\gamma_i(g_{i,t-1}(\boldsymbol{x}_{i,t-1}')-\beta_iq_{i,t-1}))_{+} 
\end{gather}
\STATE Play action $\boldsymbol{x}_{i,t-1}'$ and receive $\bar{f}_{i,t-1}(\boldsymbol{x}'_{i,t-1})$ and $g_{i,t-1}(\boldsymbol{x}'_{i,t-1})$.
\STATE Communicate with neighbors and obtain $f_{i,t-1}$ according to (\ref{eq:loss_f}).
\STATE Compute gradient and construct surrogate loss as 
\begin{gather}
    l_{i,t}(\boldsymbol{z}) = \langle \hat{\nabla}f_{i,t}, \boldsymbol{z}- \boldsymbol{z}_{i,t} \rangle.  \label{eq:sur-loss}
\end{gather}
\STATE Update the weight of each expert $k$ by 
\begin{gather}
    \omega_{i,t+1}^k = \frac{\omega_{i,t}^k \exp(-\epsilon l_{i,t}(\boldsymbol{z}_{i,t}^k))}{\sum_{k\in [K]}\omega_{i,t}^k \exp(-\epsilon l_{i,t}(\boldsymbol{z}_{i,t}^k))}
\end{gather}
\ENDFOR
\end{algorithmic}
\end{algorithm}  
\subsubsection{Performance Guarantees}
\begin{theorem}
\label{the:nsdrs-drgt}
Select the step sizes according to \eqref{eq:pool}, $\delta_{i,t} = (\frac{|\mathcal{N}_i|F_i}{\tilde{L}})^{\frac{1}{2}}(\frac{R^2/2+RP_T}{t})^{\frac{1}{4}}$, $\beta_{i,t} = \frac{1}{G_it^{1/2}}$, $\gamma_{i,t} = \frac{1}{G_i^2t^{1/2}}$ and $\xi_{i,t} = \delta_{i,t}/r_i$. Let Assumption~\ref{assump:fg-bound} and \ref{assump:x} hold. By selecting a proper learning rate $\epsilon$ for the meta algorithm, the MA-NSDRS algorithm guarantees the following bound on the expected dynamic regret
\begin{gather}
\mathbb{E}[D-R_T^i] \leq \tilde{\mathcal{O}}(T^{\frac{3}{4}}(1+P_T)^{\frac{1}{2}}),
\end{gather}
for any comparator sequence $\boldsymbol{u}_1, \ldots, \boldsymbol{u}_T \in \mathcal{X}$. Besides, $\tilde{\mathcal{O}}(\cdot)$ omits $\log T$ factors, and $P_T = \sum_{t=2}^T\norm{\boldsymbol{u}_t - \boldsymbol{u}_{t-1}}$ is the path-length.
\end{theorem}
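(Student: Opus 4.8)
The plan is to follow the meta-expert (online ensemble) regret decomposition that underlies \cite{zhao2021bandit,guo2022online}, bounding the meta-regret and the expert-regret separately and then balancing them against the smoothing error of the bandit gradient estimator. Fix the comparator sequence $\boldsymbol{u}_{i,1},\ldots,\boldsymbol{u}_{i,T}$ and let $k^*$ be the expert index guaranteed by the grid construction \eqref{eq:pool}, i.e.\ the one satisfying $\eta_{k^*} \leq \eta^{\dagger} \leq 2\eta_{k^*}$, with iterate $\boldsymbol{z}_{i,t}^{k^*}$. First I would write
\begin{align}
\mathbb{E}[D\text{-}R_T^i] &= \underbrace{\mathbb{E}\Big[\sum_{t} \hat{f}_{i,t}(\boldsymbol{z}_{i,t}) - \hat{f}_{i,t}(\boldsymbol{z}_{i,t}^{k^*})\Big]}_{\text{meta-regret}} \notag \\
&\quad + \underbrace{\mathbb{E}\Big[\sum_{t} \hat{f}_{i,t}(\boldsymbol{z}_{i,t}^{k^*}) - \hat{f}_{i,t}(\boldsymbol{u}_{i,t})\Big]}_{\text{expert-regret}} + \mathcal{E}_\delta,
\end{align}
where $\mathcal{E}_\delta$ collects the smoothing and domain-shrinkage errors relating $f_{i,t}(\boldsymbol{x}_{i,t})$ to $\hat{f}_{i,t}(\boldsymbol{z}_{i,t})$ and $f_{i,t}(\boldsymbol{u}_{i,t})$ to $\hat{f}_{i,t}(\boldsymbol{u}_{i,t})$. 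By Lemma~\ref{lem:opeg}, $\mathbb{E}_{\boldsymbol{u}}[f_{i,t}(\boldsymbol{z}_{i,t}+\delta_{i,t}\boldsymbol{u}_{i,t})] = \hat{f}_{i,t}(\boldsymbol{z}_{i,t})$, and the Lipschitz continuity of Proposition~\ref{prop:convex-loss} bounds each shrinkage term by $O(\delta_{i,t}+\xi_{i,t})$ per round, exactly as in the proof of Theorem~\ref{the:drs-drgt}; the adjustment step is handled as in Corollary~\ref{cor:drs-adj}. The second surrogate-loss property then lets me replace the $\hat{f}_{i,t}$ differences by their surrogate-loss counterparts in both the meta- and expert-regret.

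For the expert-regret I would invoke the single-expert analysis. Since $l_{i,t}$ is linear with $\nabla l_{i,t} = \hat{\nabla}f_{i,t}$ by Property~\ref{prop:sur-loss1}, expert $k^*$ performs exact online gradient descent along the common direction $\hat{\boldsymbol{\Gamma}}_{i,t}$, so the standard OGD dynamic-regret inequality gives
\begin{gather}
\sum_t \langle \hat{\boldsymbol{\Gamma}}_{i,t}, \boldsymbol{z}_{i,t}^{k^*} - \boldsymbol{u}_{i,t} \rangle \leq \frac{R^2 + 2RP_T}{2\eta_{k^*}} + \frac{\eta_{k^*}}{2}\sum_t \norm{\hat{\boldsymbol{\Gamma}}_{i,t}}^2.
\end{gather}
The right-hand side is precisely the expression minimized by $\eta^{\dagger}$; because $\eta_{k^*} \in [\tfrac{1}{2}\eta^{\dagger},\eta^{\dagger}]$, this is within a factor two of the optimum, and substituting $\delta_{i,t}$ together with $\norm{\hat{\boldsymbol{\Gamma}}_{i,t}} = O(NF_i/\delta_{i,t})$ reproduces the $\tilde{\mathcal{O}}(T^{3/4}(1+P_T)^{1/2})$ rate.

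For the meta-regret I would use the exponentially weighted average forecaster bound. Since $\boldsymbol{z}_{i,t} = \sum_k \omega_{i,t}^k \boldsymbol{z}_{i,t}^k$ and $l_{i,t}$ is linear, we have $l_{i,t}(\boldsymbol{z}_{i,t}) = \sum_k \omega_{i,t}^k l_{i,t}(\boldsymbol{z}_{i,t}^k)$, and the Hedge analysis with the non-uniform initialization yields
\begin{gather}
\sum_t l_{i,t}(\boldsymbol{z}_{i,t}) - \sum_t l_{i,t}(\boldsymbol{z}_{i,t}^{k^*}) \leq \frac{\ln(1/\omega_{i,1}^{k^*})}{\epsilon} + \epsilon \sum_t (l_{i,t}^{\max})^2.
\end{gather}
The choice $\omega_{i,1}^{k^*} = \tfrac{K+1}{K}\tfrac{1}{k^*(k^*+1)}$ gives $\ln(1/\omega_{i,1}^{k^*}) = O(\ln k^*) = O(\ln K) = O(\ln\ln T)$, so after tuning $\epsilon$ the meta-regret is $\tilde{\mathcal{O}}(\sqrt{\sum_t (l_{i,t}^{\max})^2})$, of strictly lower order than the expert-regret. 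Summing the three contributions and inserting the stated parameters yields the claimed bound.

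I expect the main obstacle to be controlling the meta-regret despite the time-varying smoothing parameter $\delta_{i,t}$: the estimated gradient grows like $t^{1/4}$, so the surrogate-loss range $l_{i,t}^{\max}$ is itself time-varying, and I must verify that $\sqrt{\sum_t (l_{i,t}^{\max})^2 \ln K}$, together with the $O(\ln\ln T)$ weight factor, stays dominated by $T^{3/4}(1+P_T)^{1/2}$. The most delicate accounting is the interaction between the unknown $P_T$ (which enters $\eta^{\dagger}$ and $\delta_{i,t}$) and the $P_T$-free construction of the candidate grid $\mathcal{H}$; establishing the coverage property $\eta_{k^*} \leq \eta^{\dagger} \leq 2\eta_{k^*}$ via $0 \leq P_T \leq 2R_iT$ and confirming that the factor-two step-size suboptimality is harmless is where the argument must be made carefully.
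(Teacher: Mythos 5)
Your proposal is correct and follows essentially the same route as the paper's proof: the same decomposition into smoothing/domain-shrinkage error, meta-regret, and expert-regret over the surrogate losses, with the expert-regret handled by the OGD dynamic-regret bound plus the grid-coverage argument $\eta_{k^*} \leq \eta^{\dagger} \leq 2\eta_{k^*}$, and the meta-regret by the exponentially weighted forecaster with the non-uniform prior. One small correction: since the estimated gradient scales as $t^{1/4}$, the tuned meta-regret is itself $\tilde{\mathcal{O}}(T^{3/4})$ — the same order in $T$ as the expert-regret rather than strictly lower — but, as your final paragraph anticipates, it is still dominated by the claimed $\tilde{\mathcal{O}}(T^{\frac{3}{4}}(1+P_T)^{\frac{1}{2}})$ bound, so the conclusion stands.
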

\begin{proof}
See Appendix~\ref{app:nsdrs-drgt}.
\end{proof}
\begin{corollary}
Follow the hyper parameter selection in Theorem~\ref{the:nsdrs-drgt}: set step sizes according to \eqref{eq:pool}, $\delta_{i,t} = (\frac{|\mathcal{N}_i|F_i}{\tilde{L}})^{\frac{1}{2}}(\frac{R^2/2}{t})^{\frac{1}{4}}$, $\beta_{i,t} = \frac{1}{G_it^{1/2}}$, $\gamma_{i,t} = \frac{1}{G_i^2t^{1/2}}$. Assume that Assumption~\ref{assump:fg-bound} and \ref{assump:x} hold. the static regret of \textbf{Algorithm~\ref{alg:ma-nsdrs}} satisfies
\begin{gather}
S-R_T^i \leq \tilde{\mathcal{O}}(T^{3/4}).
\end{gather}
\end{corollary}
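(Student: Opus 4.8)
The plan is to obtain the static-regret bound as an immediate specialization of the dynamic-regret guarantee established in Theorem~\ref{the:nsdrs-drgt}. The central observation is that the static regret \eqref{eq:s-rgt} is nothing but the dynamic regret \eqref{eq:d-rgt} evaluated at a constant comparator sequence. Concretely, I would let $\boldsymbol{x}^* \in \arg\min_{\boldsymbol{x} \in \mathcal{X}} \sum_{t=1}^T f_{i,t}(\boldsymbol{x})$ denote the best fixed action in hindsight and set $\boldsymbol{u}_{i,t} = \boldsymbol{x}^*$ for every $t \in \{1,\ldots,T\}$. Under this choice the comparator term $\sum_{t=1}^T f_{i,t}(\boldsymbol{u}_{i,t})$ collapses to $\min_{\boldsymbol{x} \in \mathcal{X}} \sum_{t=1}^T f_{i,t}(\boldsymbol{x})$, so that $S-R_T^i = D-R_T^i(\boldsymbol{x}^*,\ldots,\boldsymbol{x}^*)$ exactly.

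Next I would evaluate the path-length of this particular comparator. Because the sequence is constant, each increment $\norm{\boldsymbol{u}_{i,t} - \boldsymbol{u}_{i,t-1}}$ vanishes, whence $P_T = \sum_{t=2}^T \norm{\boldsymbol{u}_{i,t} - \boldsymbol{u}_{i,t-1}} = 0$. Substituting $P_T = 0$ into the hyperparameter schedule of Theorem~\ref{the:nsdrs-drgt} recovers precisely the schedule stated in this corollary; in particular $\delta_{i,t} = (\frac{|\mathcal{N}_i|F_i}{\tilde{L}})^{\frac{1}{2}}(\frac{R^2/2}{t})^{\frac{1}{4}}$. This confirms that the algorithm instance analyzed by the theorem with the constant comparator is identical to the one described in the corollary, so the theorem's conclusion applies verbatim.

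Finally I would invoke the dynamic-regret bound itself: with $P_T = 0$ the guarantee $\mathbb{E}[D-R_T^i] \leq \tilde{\mathcal{O}}(T^{\frac{3}{4}}(1+P_T)^{\frac{1}{2}})$ simplifies to $\tilde{\mathcal{O}}(T^{\frac{3}{4}}(1+0)^{\frac{1}{2}}) = \tilde{\mathcal{O}}(T^{\frac{3}{4}})$. Combining this with the identity $S-R_T^i = D-R_T^i(\boldsymbol{x}^*,\ldots,\boldsymbol{x}^*)$ from the first step yields $S-R_T^i \leq \tilde{\mathcal{O}}(T^{3/4})$, as claimed. Since the argument is essentially a substitution, I do not expect any genuine difficulty; the only points deserving care are checking that Theorem~\ref{the:nsdrs-drgt} indeed holds uniformly over all admissible comparator sequences (so that the constant sequence is a legitimate choice) and noting that the expert pool \eqref{eq:pool} still contains a near-optimal step size when $P_T=0$, so the meta-learning step remains valid in this degenerate regime.
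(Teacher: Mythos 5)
Your proposal is correct and matches the paper's (implicit) reasoning exactly: the paper states this corollary without a separate proof, treating it as the immediate specialization of Theorem~\ref{the:nsdrs-drgt} to a constant comparator sequence, for which $P_T = 0$ and the hyperparameter schedule reduces to the one stated. Your additional checks---that the theorem holds for any admissible comparator and that the step-size pool \eqref{eq:pool} still contains a near-optimal $\eta_{k^*}$ when $P_T=0$ (namely $k^*=1$)---are sound and only strengthen the argument.
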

\begin{corollary}
Select the step sizes according to \eqref{eq:pool}, $\delta_{i,t} = (\frac{|\mathcal{N}_i|F_i}{\tilde{L}})^{\frac{1}{2}}(\frac{R^2/2+RP_T}{t})^{\frac{1}{4}}$, $\beta_{i,t} = \frac{1}{G_it^{1/2}}$, $\gamma_{i,t} = \frac{1}{G_i^2t^{1/2}}$ and $\xi_{i,t} = \delta_{i,t}/r_i$. Let Assumption~\ref{assump:fg-bound} and \ref{assump:x} hold. Since the meta algorithm only influences on the update step \eqref{eq:proj-z}, the constraint violation of \textbf{Algorithm~\ref{alg:ma-nsdrs}} remains as stated in Theorem~\ref{the:drs-drgt} in the non-stationary environment
\begin{gather}
    V_T^i \leq \tilde{\mathcal{O}}([(1+P_T)^{\frac{1}{2}}T+(1+P_T)^{\frac{1}{4}}T^{\frac{5}{4}}]^{\frac{1}{2}}).
\end{gather}
Similarly, the constraint violation of \textbf{Algorithm~\ref{alg:ma-nsdrs}} in stationary environment ($P_T = 0$) yields
\begin{gather}
    V_T^i \leq \tilde{\mathcal{O}}(T^{\frac{5}{8}}).
\end{gather}
Besides, when the adjustment step is applied in MA-NSDRS, the dynamic regret bound and the static regret bound follow the same growth-rate as
\begin{gather}
    \mathbb{E}[D-R_T^i] \leq \tilde{\mathcal{O}}(T^{\frac{3}{4}}(1+P_T)^{\frac{1}{2}}),
    S-R_T^i \leq \tilde{\mathcal{O}}(T^{3/4}).
\end{gather}
\end{corollary}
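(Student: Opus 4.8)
The plan is to exploit the structural separation between the primal and the dual updates in MA-NSDRS. The central observation, already flagged in the statement, is that the meta-algorithm alters only the computation of the primal iterate $\boldsymbol{z}_{i,t}$—replacing the single online-gradient step of Algorithm~\ref{alg:drs} by the weighted combination $\boldsymbol{z}_{i,t} = \sum_{k\in[K]}\omega_{i,t}^k \boldsymbol{z}_{i,t}^k$—while the dual recursion for $q_{i,t}$, the smoothing parameter $\delta_{i,t}$, and the constraint evaluation $g_{i,t}(\boldsymbol{x}'_{i,t})$ are inherited verbatim from DRS. I would therefore argue that the constraint-violation portion of the proof of Theorem~\ref{the:drs-drgt} transfers unchanged, and that the regret claims follow from Theorem~\ref{the:nsdrs-drgt} together with the adjustment argument of Corollary~\ref{cor:drs-adj}.

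First I would establish feasibility preservation: each expert iterate satisfies $\boldsymbol{z}_{i,t}^k \in (1-\xi_{i,t})\mathcal{X}_i$ by the projection \eqref{eq:proj-ez}, and since this shrunken set is convex and the weights $\omega_{i,t}^k$ form a probability vector (they are normalized by the exponential-weights update, and the initialization $\omega_{i,1}^k = \tfrac{K+1}{K}\tfrac{1}{k(k+1)}$ sums to one), the combined iterate $\boldsymbol{z}_{i,t}$ lies in the same set. Hence $\boldsymbol{x}_{i,t} = \boldsymbol{z}_{i,t} + \delta_{i,t}\boldsymbol{u}_{i,t}$ has exactly the feasibility structure of the DRS iterate. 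Next I would note that the dual recursion and the bound $|g_{i,t}| \le G_i$ from Assumption~\ref{assump:fg-bound} are identical, so the drift analysis governing $q_{i,t}$ and the resulting bound on $V_T^i$ depend only on $G_i$, the step sizes $\gamma_{i,t},\beta_{i,t}$, and the perturbation magnitude $\delta_{i,t}$, none of which the meta layer touches. Since the $P_T$-dependence of the Theorem~\ref{the:drs-drgt} violation bound enters solely through $\delta_{i,t}\propto(1+P_T)^{1/4}$, the identical bound $V_T^i \le \tilde{\mathcal{O}}([(1+P_T)^{1/2}T + (1+P_T)^{1/4}T^{5/4}]^{1/2})$ follows, and setting $P_T=0$ recovers $\tilde{\mathcal{O}}(T^{5/8})$.

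For the regret claims under the adjustment step, I would invoke Theorem~\ref{the:nsdrs-drgt} for the unadjusted dynamic regret $\tilde{\mathcal{O}}(T^{3/4}(1+P_T)^{1/2})$ and reuse the mechanism of Corollary~\ref{cor:drs-adj}. The adjustment replaces $\boldsymbol{x}_{i,t}$ by $\boldsymbol{x}'_{i,t}$ only on rounds where $\sum_j x_{i,t}(j) > d_{i,t}$, and the displacement $\|\boldsymbol{x}'_{i,t}-\boldsymbol{x}_{i,t}\|$ is controlled by the corresponding per-round violation. By the Lipschitz continuity of the loss (Proposition~\ref{prop:convex-loss}), $\sum_t [f_{i,t}(\boldsymbol{x}'_{i,t}) - f_{i,t}(\boldsymbol{x}_{i,t})] \le L \sum_t \|\boldsymbol{x}'_{i,t}-\boldsymbol{x}_{i,t}\|$, a quantity of order $V_T^i = \tilde{\mathcal{O}}(T^{5/8}(1+P_T)^{1/4})$ that is strictly lower-order than the regret; the growth rate is therefore unchanged, and taking $P_T=0$ gives the static bound $\tilde{\mathcal{O}}(T^{3/4})$.

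The main obstacle I anticipate is making the ``transfers verbatim'' claim airtight: I must re-inspect the proof of Theorem~\ref{the:drs-drgt} to confirm that the violation analysis nowhere uses the specific single-expert update law for $\boldsymbol{z}_{i,t}$, only its range and the dual recursion. In particular, I would check that any step decomposing $g_{i,t}(\boldsymbol{x}_{i,t}) = g_{i,t}(\boldsymbol{z}_{i,t}) + \delta_{i,t}\langle \boldsymbol{\tilde{a}}_i, \boldsymbol{u}_{i,t}\rangle$ relies only on feasibility of $\boldsymbol{z}_{i,t}$ and the linearity of $g$ in \eqref{eq:constr-g}, and that the displacement-to-violation bound used in the adjustment argument does not implicitly assume $\boldsymbol{x}_{i,t}$ is a single-expert iterate rather than a convex combination.
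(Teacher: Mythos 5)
Your proposal is correct and follows essentially the same route as the paper: the paper's own justification is exactly that the meta layer touches only the primal update \eqref{eq:proj-z} while the dual recursion for $q_{i,t}$ and the constraint evaluation are inherited from DRS (so the violation bound of Theorem~\ref{the:drs-drgt} carries over verbatim), and the adjustment-step regret claim is handled by the Lipschitz-plus-violation argument of Corollary~\ref{cor:drs-adj}, which is dominated by the Theorem~\ref{the:nsdrs-drgt} bound. Your added checks—that the exponentially weighted combination stays in the convex shrunken set $(1-\xi_{i,t})\mathcal{X}_i$ because the weights form a probability vector, and that the violation analysis depends only on $G_i$, $\gamma_{i,t}$, $\beta_{i,t}$, $\delta_{i,t}$—merely make explicit what the paper leaves implicit.
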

\begin{remark}
DRS algorithm has a storage and computational complexity independent of $t$ since it stores and updates only $\boldsymbol{z}_{i,t}$ and $a_{i,t}$. In contrast, the MA-NSDRS algorithm has a $O(T)$ storage complexity because there are $K = \lceil \frac{1}{2}\log_2(1+T) \rceil + 1$ experts running at each time step. The computational complexity of the MA-NSDRS algorithm is $O(T^2)$. Both the DRS and MA-NSDRS algorithms operate distributedly. However, due to the higher computational costs associated with larger network sizes or longer time horizons, the MA-NSDRS algorithm faces scalability limitations.
\end{remark}
\section{Numerical Results}
\label{sec:experiment}
We evaluate our proposed resource-sharing policy using a real-world dataset, which records the energy production and position of several DER facilities in the state of New York \cite{NY_DERs}.\footnote{The data is available at \url{https://der.nyserda.ny.gov/map/}}~According to our problem formulation, we consider the connected DERs in a relative small graphical area. Therefore, we select facilities in Staten Island. Figure~\ref{fig:ny-ders} shows an overview of the locations of different kinds of DERs facilities. It includes 20-solar photovoltaic system (PV), six combined heat and power (CHP), zero anaerobic digester (ADG), six fuel cells and one energy storage in Staten Island. Here, we only consider the PVs. 

In our setting, each graph vertex is an electricity generator and consumer simultaneously. We assume each node's generation capacity is determined by the dataset. Besides, as the dataset does not include a demand model, we consider a balanced demand across the network. Specifically, the total demand of all nodes equals the total generation capacity, and each node has an equal share of this demand. Besides, we assume that if the distance between two PVs is within two districts, one can consider them connected. Each node can transmit the energy produced to its neighboring nodes as reflected by the graph structure. As our algorithm operates under a bandit feedback model, the specific demand model at each node does not impact its performance in experiments.
\begin{figure}[!ht]
\centering
\includegraphics[width=0.5\linewidth]{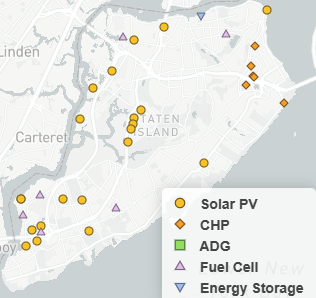}
\caption{Map of DERs facilities.}
\label{fig:ny-ders}
\end{figure}
\subsection{Experiment I}
First, we evaluate our proposed method with a small size network with only six PV facilities (six nodes). Figure~\ref{fig:e1-network} and Figure~\ref{fig:e1-ns-distr} show the network structure and the time-varying generated capacity for each node respectively.
\begin{figure}[!ht]
\begin{subfigure}{.20\textwidth}
 \centering
  \vspace{-10pt}
  \includegraphics[width=.99\linewidth]{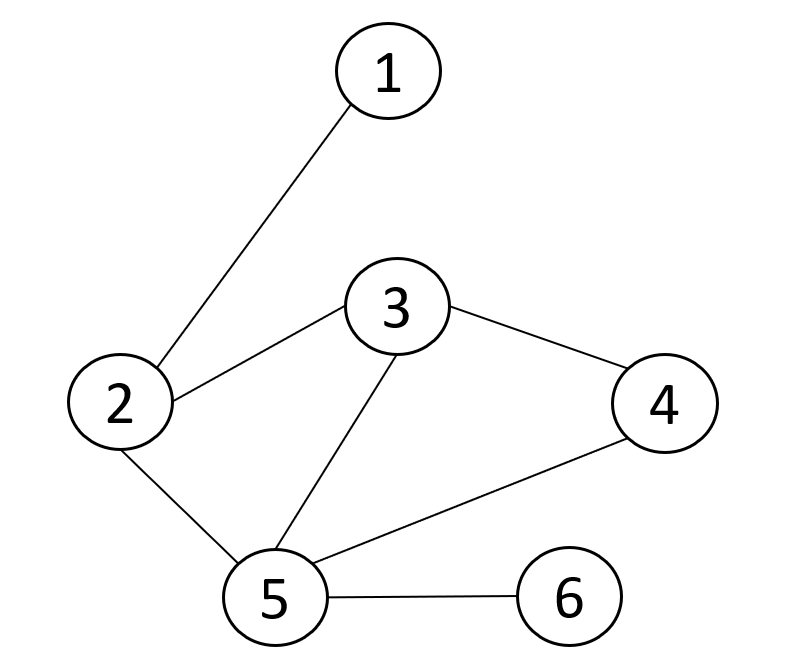}
  \vspace{10pt}
  \caption{Network structure.}
  \label{fig:e1-network}
\end{subfigure}
\begin{subfigure}{.27\textwidth}
  \centering
  \includegraphics[width=.99\linewidth]{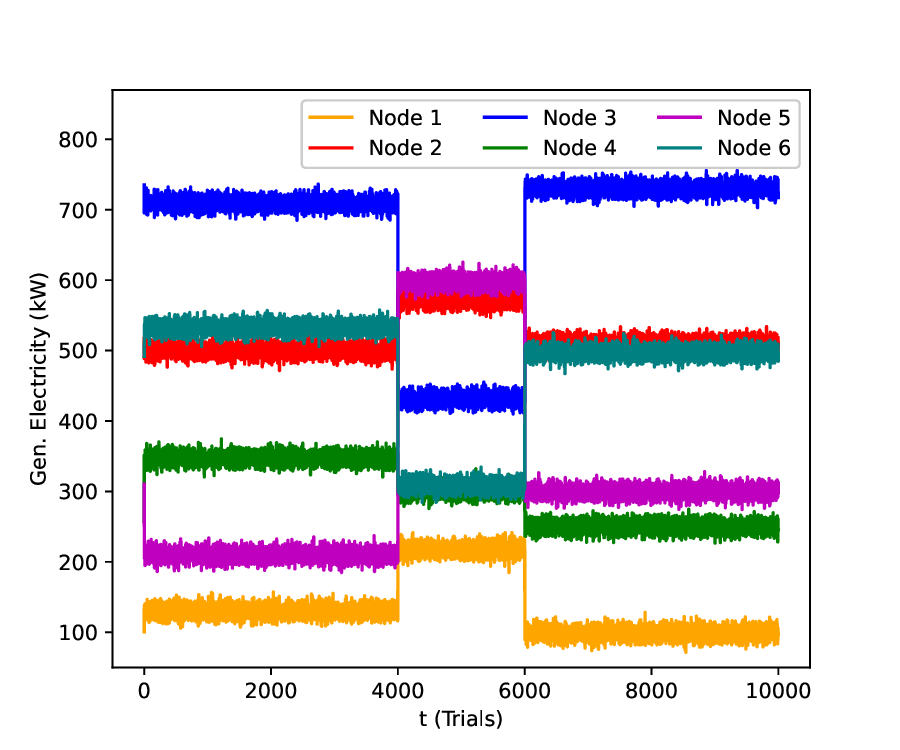} 
  \caption{Time-varying electricity generation.}
  \label{fig:e1-ns-distr}
\end{subfigure}
\caption{Setting of Experiment I.}
\label{fig:e1}
\end{figure}

The results appear in Figure~\ref{fig:e1}. Figure~\ref{fig:e1-rgt} shows the system's cumulative loss. Figure~\ref{fig:e1-constr} depicts the corresponding constraint violation. In addition, the figures illustrate the difference between the two cases when the adjustment is utilized and otherwise. 
\begin{figure}[!ht]
\begin{subfigure}{.24\textwidth}
 \centering
  \includegraphics[width=.99\linewidth]{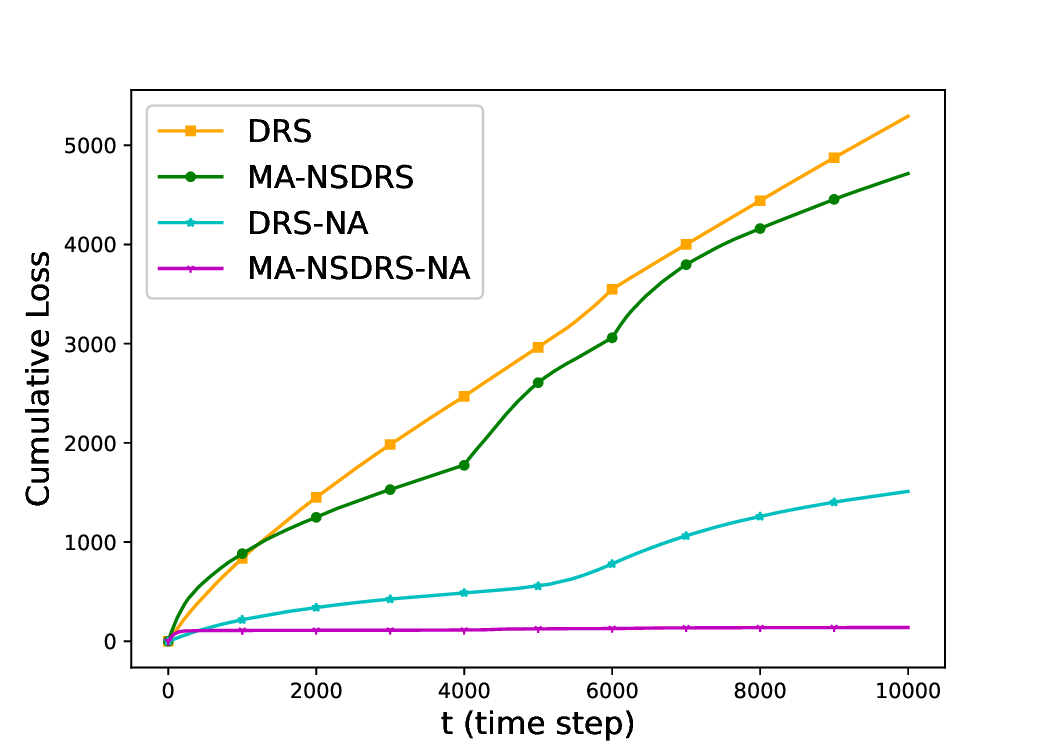}
  \caption{Average cumulative loss.}
  \label{fig:e1-rgt}
\end{subfigure}
\begin{subfigure}{.24\textwidth}
  \centering
  \includegraphics[width=.99\linewidth]{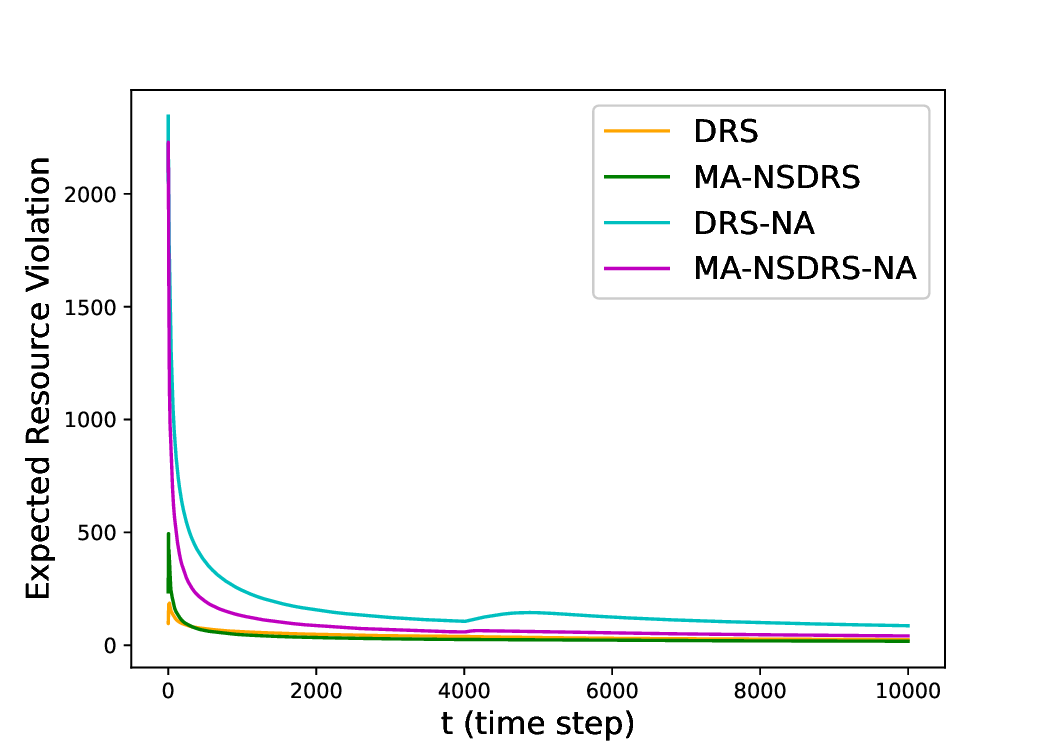} 
  \caption{Constraint violation.}
  \label{fig:e1-constr}
\end{subfigure}
\caption{Result of Experiment I.}
\label{fig:e1}
\end{figure}
Figure~\ref{fig:e1} shows that MA-NSDRS-NA, which is our proposed Algorithm~\ref{alg:ma-nsdrs} without the adjustment step, has the best performance concerning both regret and constraint violation. Indeed, it has the lowest cumulative loss and its long term constraint violation are also guaranteed to converge to zero.

Compared to DRS algorithms, MA-NSDRS algorithms have better performance; that is, the meta algorithm in MA-NSDRS can track the optimal hyperparameter in the non-stationary environment. 

Besides, after adding the adjustment step, the regret of both algorithms (DRS and MA-NSDRS) increases, which is consistent with our theoretical analysis. Figure~\ref{fig:e1-constr} shows that the adjustment step reduces the constraint violation to zero, whereas the algorithms without the adjustment step offer only long-term guarantees. Thus, there is a trade-off between minimizing the regret and constraint violation. To evaluate our algorithm from the energy-efficient perspective, in Figure~\ref{fig:e1-res}, we depict the percentage of the energy level compared with the required amount in each node. We also include the initial energy level as the reference. 
\begin{figure}[!ht]
\centering
\includegraphics[width=0.9\linewidth]{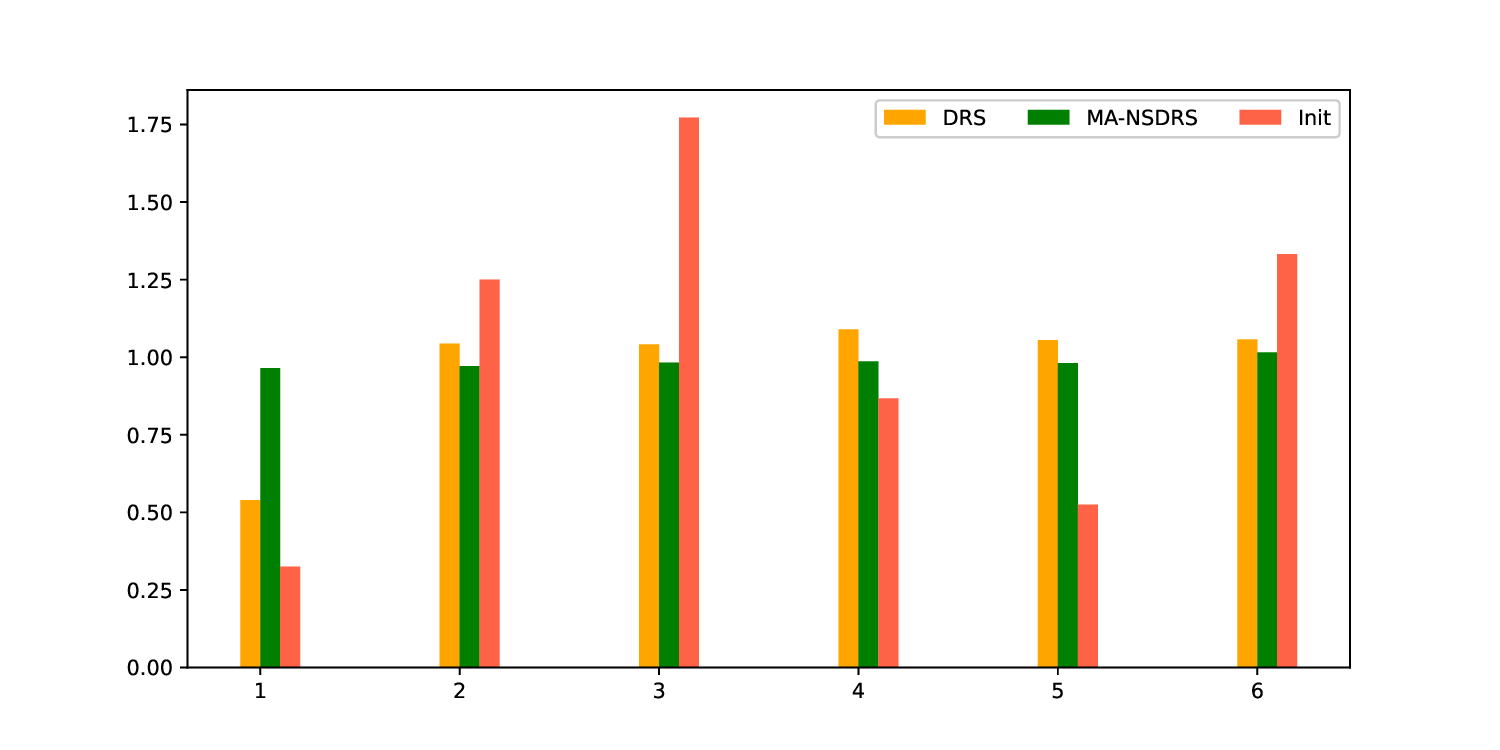}
\caption{Resource distribution of each node.}
\label{fig:e1-res}
\end{figure}

From Figure~\ref{fig:e1-res}, one can conclude that our proposed algorithms, especially MA-NSDRS, result in an efficient energy distribution among nodes, meaning that most of them achieve their required level. Particularly, the proposed algorithms change the situation of nodes 1, 4, and 5 from energy-inadequate to full satisfaction, i.e., above the threshold 1.0. At the same time, they prevent wasting the energy in node 2,3,6.  
\subsection{Experiment II}
In this section, we evaluate our proposed algorithms in a network with 20 PV facilities. Figure~\ref{fig:e2-network} shows the network structure. Besides, Table~\ref{tab:e2-pt} includes the path length $P_T$ of the optimal comparator sequence under the time-varying energy distribution. We compare our proposed algorithms with one benchmark, i.e., an adapted version of BanSaP algorithm proposed in \cite{chen2018bandit}. The BanSaP algorithm addresses the challenge of online convex optimization with time-varying loss functions and constraints, aligning well with our problem formulation. BanSaP algorithm has the same computational and storage complexity as our proposed DRS algorithm. Besides, it was specifically designed to tackle the task allocation problem within IoT management, a scenario analogous to our energy resource allocation challenge. Therefore, we choose to benchmark our proposed algorithm against BanSaP to evaluate its effectiveness.
\begin{figure}[!ht]
\centering
\includegraphics[width=0.5\linewidth]{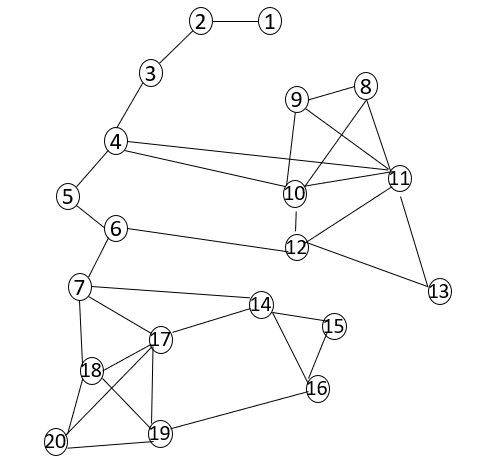}
\caption{Network structure of Experiment II.}
\label{fig:e2-network}
\end{figure}
\begin{table}[!ht]
    \centering
    \begin{tabular}{|c|c|c|c|c|c|}
    \hline
    No. & $P_T$ & No. & $P_T$ & No. & $P_T$\\
    \hline \hline
    1 & 974.44 & 8 & 701.17 & 15 & 1008.00\\
    \hline
    2 & 856.72 & 9 & 520.08 & 16 & 724.83 \\
    \hline
    3 & 1272.47 & 10 & 959.51 & 17 & 697.88\\
    \hline
    4 & 1005.35 & 11 & 895.04 & 18 & 644.84\\
    \hline
    5 & 1372.40 & 12 & 2227.02 & 19 & 460.73\\
    \hline
    6 & 1255.26  & 13 & 846.01 & 20 & 499.80\\
    \hline 
    7 & 804.10 & 14 & 1400.75 & & \\
    \hline
    \end{tabular}
    \caption{$P_T$ of each node under the time-varying energy distribution.}
    \label{tab:e2-pt}
\end{table}
\begin{figure}[!ht]
\begin{subfigure}{.24\textwidth}
 \centering
  \includegraphics[width=.99\linewidth]{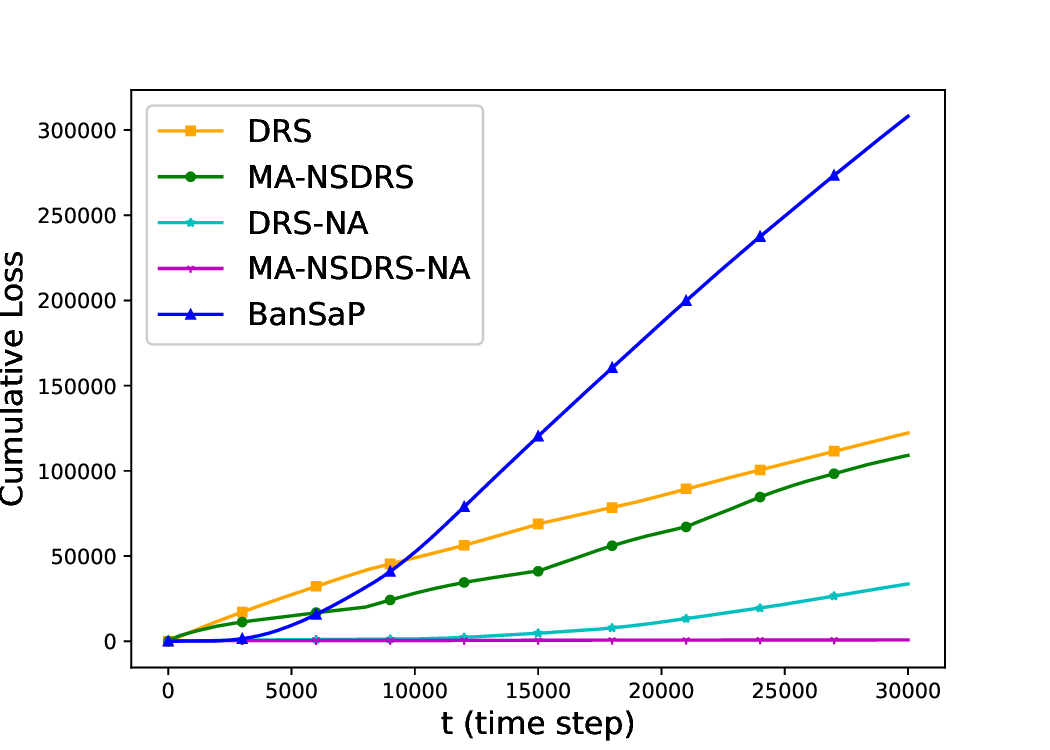}
  \caption{Average cumulative loss.}
  \label{fig:e2-rgt}
\end{subfigure}
\begin{subfigure}{.24\textwidth}
  \centering
  \includegraphics[width=.99\linewidth]{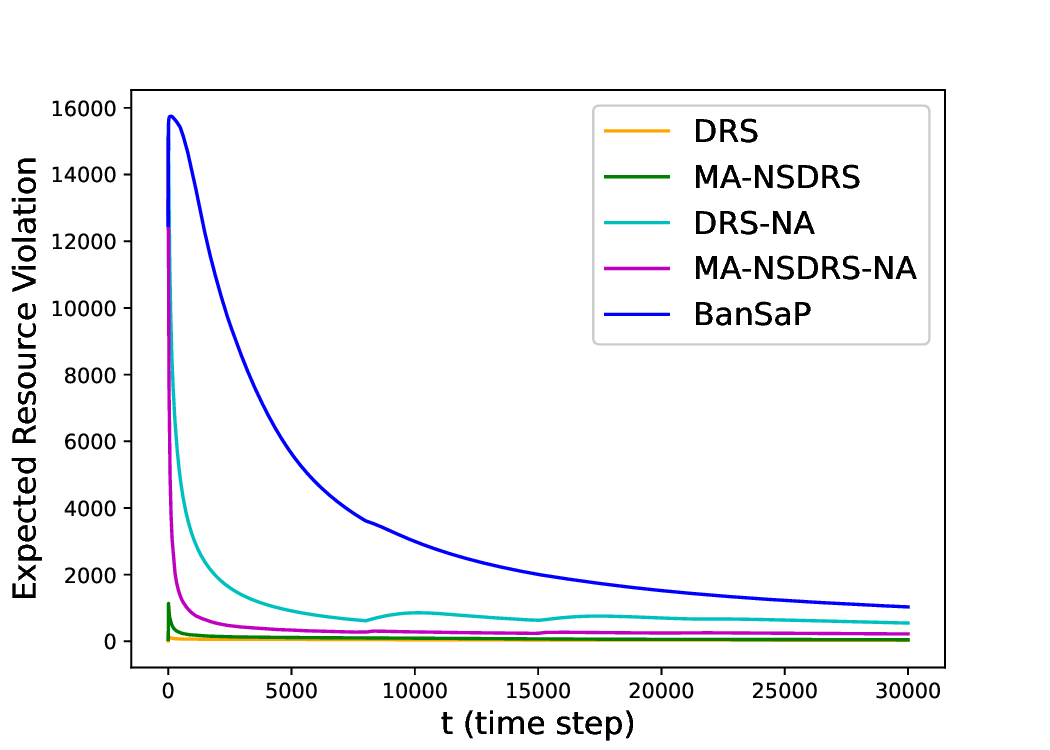} 
  \caption{Constraint violation.}
  \label{fig:e2-constr}
\end{subfigure}
\caption{Result of Experiment II.}
\label{fig:e2}
\end{figure}
\begin{figure}[!ht]
\centering
\includegraphics[width=0.99\linewidth]{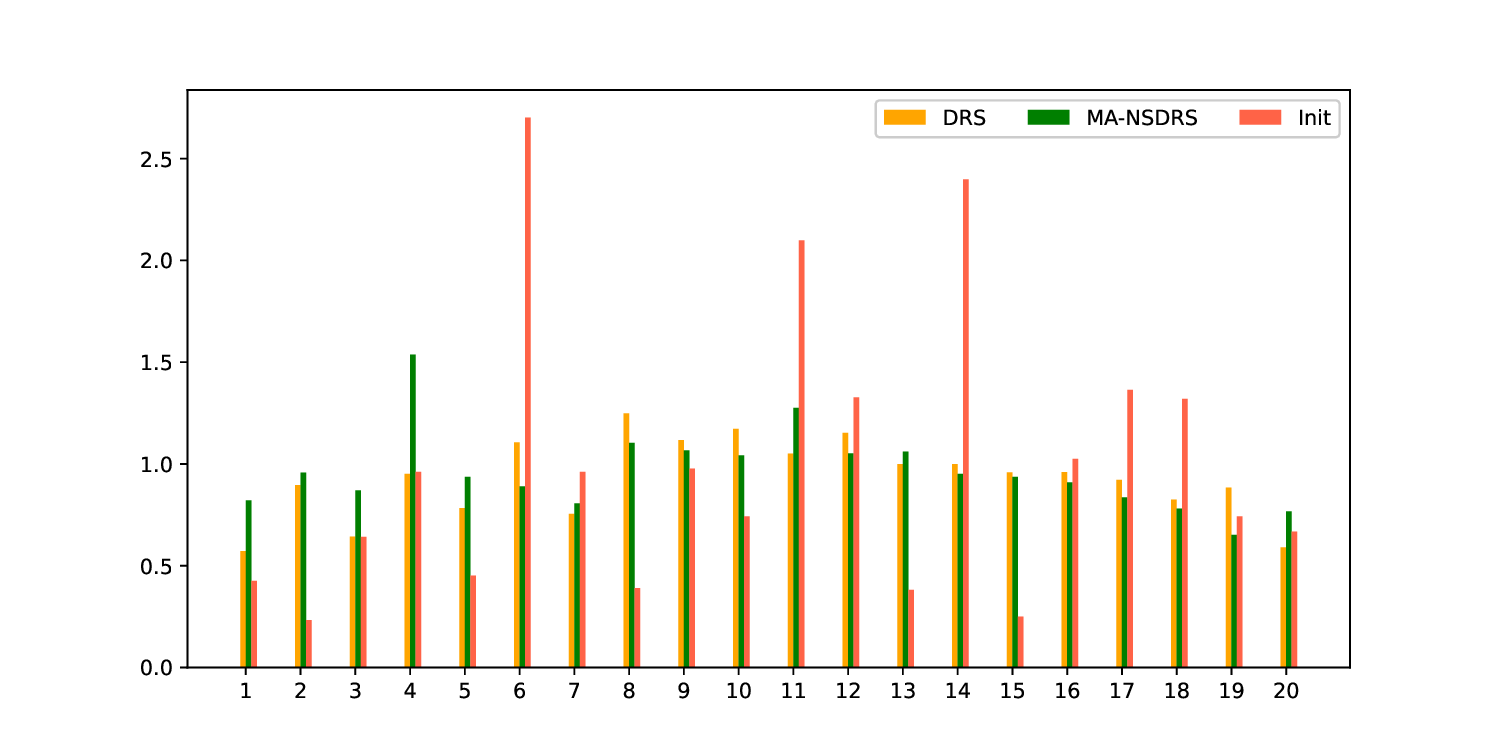}
\caption{Resource distribution of each node.}
\label{fig:e2-res}
\end{figure}

Figure~\ref{fig:e2-rgt} shows the cumulative loss of the proposed algorithms. Figure~\ref{fig:e2-constr} depicts the corresponding constraint violation. Figure~\ref{fig:e2-res} illustrates the resource efficiency level and the initial resource condition. Based on the figures, we can conclude the following.
\begin{itemize}
    \item In Figure~\ref{fig:e2}, our proposed algorithm MA-NSDRS-NA achieves the best performance in terms of minimizing the loss and constraint violation.
    \item Compared to the algorithms without adjustment step, the algorithms with adjustment demonstrate a higher cumulative loss; Nonetheless, when adherence to feasibility constraints is paramount, adjusting the step is inevitable.
    \item Figure~\ref{fig:e2-res} shows the energy level compared with the satisfactory amount in each node after convergence of  the proposed algorithms. Despite the performance dip resulting from the adjustment step, our algorithm excels in reallocating energy resources, ensuring a more equitable distribution across the network. Notably, our proposed algorithm brings about significant enhancements in the energy distribution within the network. Specifically, it effectively uses the energy surplus in nodes 6, 11, and 14 to mitigate the energy shortages in nodes 1, 2, 3, 5, 8, and 13. Remarkably, several shortage nodes(1,2,3,8) have no direct connection to those with excessive energy levels. This achievement shows the considerable advantages of our algorithm in optimizing energy efficiency throughout the entire network.
    \item Compared to BanSaP, all the variations of our proposed algorithm have better performance concerning cumulative loss and constraint violation. There are several reasons to explain the weak performance of BanSaP. First, it uses a constant hyperparameter, which causes high estimation errors in the beginning due to the bandit feedback. Besides, its update step is different from ours. In addition, in BanSaP, there is no adjustment and no meta algorithm to track the optimal hyperparameter. 
    \item Compared to DRS algorithms, MA-NSDRS has better performance in terms of both loss and constraint violation. The reason is the meta algorithm, which tracks the optimal learning rate in the non-stationary environment and thus has a better performance. That implies that our proposed variation of DRS Algorithm~\ref{alg:ma-nsdrs} can manage the dynamic environment better. 
\end{itemize}
\section{Conclusion}
\label{sec:conclusion}
In this study, we have addressed the problem of distributed resource sharing in networks under a dynamic environment. We proposed a distributed algorithm based on the bandit convex optimization and its variation with a meta algorithm. They guarantee $\tilde{\mathcal{O}}((1+P_T)^{\frac{1}{2}}T^{\frac{1}{2}} + (1+P_T)^{\frac{1}{4}}T^{\frac{3}{4}})$ and $\tilde{\mathcal{O}}(T^{\frac{3}{4}}(1+P_T)^{\frac{1}{2}})$ expected regret, respectively. Besides, we evaluate our algorithms using a real-world dataset of DERs and compare their performance with a benchmark, namely, BanSaP. The results show the superior performance of our algorithms. A significant direction for future research is to generalize our methods to more practical settings, for example, by considering the correlation of DERs generation between connected nodes or those nearby each other, considering incorporating power flow models, or allowing for a time-varying/directed network structure.

\appendix
\subsection{Auxiliary Results}
In this section, we provide all auxiliary materials, including definitions, propositions, lemmas, and theorems, which we require to prove our claims.
\begin{definition}\label{def:convexity}\cite{andrei2007convex}
A function $f:\mathbb{R}^n \rightarrow \mathbb{R}$ is convex if its domain is a convex set and for all $\boldsymbol{x}$,$\boldsymbol{y}$ in its domain, and for all $\lambda \in [0,1]$, we have:
\begin{equation}
f(\lambda \boldsymbol{x} +(1-\lambda)\boldsymbol{y}) \leq \lambda f(\boldsymbol{x}) + (1- \lambda) f(\boldsymbol{y}).
\end{equation}
\end{definition}
\begin{definition}\label{def:lipschitz}\cite{heinonen2005lectures}
A real-valued function $f:\mathbb{R} \rightarrow \mathbb{R}$ is called Lipschitz continuous if there exists a positive real constant K such that, for all real $x$ and $y$,
\begin{equation}
\Vert f(x) - f(y) \Vert \leq K \Vert x-y \Vert
\end{equation}
\end{definition}

\begin{proposition}\cite{zhao2021bandit}
\label{prop:sur-loss}
$\mathbb{E}[\hat{f}_t(\boldsymbol{z}_t) - \hat{f}_t(\boldsymbol{v}_t)] \leq \mathbb{E}[\langle \hat{\nabla}f_{t}(\boldsymbol{x}), \boldsymbol{z}_t - \boldsymbol{v}_t \rangle]$. 
\end{proposition}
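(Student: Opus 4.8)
The plan is to reduce the claim to the first-order convexity inequality for the smoothed loss $\hat{f}_t$, combined with the unbiasedness of the bandit gradient estimator from Lemma~\ref{lem:opeg}. Recall that $\hat{f}_t(\boldsymbol{x}) = \mathbb{E}_{\boldsymbol{u}}[f_t(\boldsymbol{x}+\delta\boldsymbol{u})]$ is an average of the maps $\boldsymbol{x} \mapsto f_t(\boldsymbol{x}+\delta\boldsymbol{u})$, each convex because $f_t$ is convex by Proposition~\ref{prop:convex-loss}; hence $\hat{f}_t$ is convex, and the smoothing makes it differentiable, so Definition~\ref{def:convexity} applies in its first-order form. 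Throughout I read the argument $\boldsymbol{x}$ in the estimated gradient as the current iterate $\boldsymbol{z}_t$ at which the gradient is probed.

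First I would apply the first-order convexity condition for $\hat{f}_t$ at the iterate $\boldsymbol{z}_t$ against the comparator $\boldsymbol{v}_t$, giving
\[
\hat{f}_t(\boldsymbol{z}_t) - \hat{f}_t(\boldsymbol{v}_t) \leq \langle \nabla\hat{f}_t(\boldsymbol{z}_t), \boldsymbol{z}_t - \boldsymbol{v}_t\rangle.
\]
The right-hand side still involves the exact gradient $\nabla\hat{f}_t(\boldsymbol{z}_t)$, which is inaccessible under bandit feedback; I would replace it by the single-sample estimate using the identity $\mathbb{E}_{\boldsymbol{u}_t}[\hat{\nabla}f_t(\boldsymbol{z}_t)] = \nabla\hat{f}_t(\boldsymbol{z}_t)$ furnished by Lemma~\ref{lem:opeg}, where the expectation is over the fresh perturbation $\boldsymbol{u}_t$ drawn at round $t$. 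Taking total expectations of the convexity inequality and substituting then yields the stated bound.

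The step requiring care is exchanging the estimator for its conditional mean inside the inner product. I would condition on the $\sigma$-field generated by all randomness up to round $t$ but before $\boldsymbol{u}_t$ is sampled. Relative to this filtration, both $\boldsymbol{z}_t$, which is built from past gradient estimates through the update \eqref{eq:proj-z}, and the comparator $\boldsymbol{v}_t$ are measurable, hence constant with respect to the inner expectation over $\boldsymbol{u}_t$. The tower property then lets the inner expectation act solely on $\hat{\nabla}f_t(\boldsymbol{z}_t)$, turning it into $\nabla\hat{f}_t(\boldsymbol{z}_t)$ and reproducing the term on the right of the proposition. The main obstacle is precisely this measurability bookkeeping: one must verify that the current perturbation $\boldsymbol{u}_t$ enters only through the gradient estimate and not through $\boldsymbol{z}_t$ or $\boldsymbol{v}_t$, which holds because the iterate update uses only past feedback.
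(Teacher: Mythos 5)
Your proof is correct and takes essentially the same route as the source of this result: the paper itself does not prove the proposition but imports it from \cite{zhao2021bandit}, where the argument is exactly yours---convexity (hence the first-order inequality) of the smoothed loss $\hat{f}_t$, followed by replacing $\nabla\hat{f}_t(\boldsymbol{z}_t)$ with the single-point estimator via the unbiasedness identity of Lemma~\ref{lem:opeg} and the tower property. Your measurability bookkeeping (that $\boldsymbol{z}_t$ and $\boldsymbol{v}_t$ are determined before the fresh perturbation $\boldsymbol{u}_t$ is drawn) is precisely the point that legitimizes pulling the conditional expectation inside the inner product, so nothing is missing.
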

\begin{lemma} \cite{Yi2020distr_band_online_conv_opt}
\label{lem:ine}
For any constants $\theta \in [0,1]$, $\kappa \in [0,1)$, and $s \leq T \in \mathbb{N}_{+}$, it holds that 
\begin{align}
    &(t+1)^{\kappa}(\frac{1}{t^\theta} - \frac{1}{(t+1)^\theta}) \leq \frac{1}{t} \quad \forall t \in \mathbb{N}_{+}, \\
    &\sum_{t=s}^T \frac{1}{t^{\kappa}} \leq \frac{T^{1-\kappa}}{1-\kappa}, \\
    &\sum_{t=s}^T \frac{1}{t} \leq 2 \log T, \quad \text{if} \quad T \geq 3.
\end{align}
\end{lemma}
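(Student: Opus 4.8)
The plan is to establish the three displayed estimates separately, since each is a self-contained elementary inequality: the first is a one-step calculus bound on a discrete difference, while the second and third are standard integral-comparison bounds for monotone sums.

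For the first inequality, I would control the difference $\frac{1}{t^\theta} - \frac{1}{(t+1)^\theta}$ by the derivative of $h(x) = x^{-\theta}$. Either the mean value theorem (writing $h(t) - h(t+1) = \int_t^{t+1} \theta x^{-\theta-1}\,dx$ and using that $x^{-\theta-1}$ is decreasing) or, equivalently, the concavity of $x \mapsto x^\theta$ for $\theta \in [0,1]$ (which gives $(t+1)^\theta - t^\theta \leq \theta t^{\theta-1}$) yields $\frac{1}{t^\theta} - \frac{1}{(t+1)^\theta} \leq \frac{\theta}{t\,(t+1)^\theta}$. Multiplying by $(t+1)^\kappa$ produces $\frac{\theta\,(t+1)^{\kappa-\theta}}{t}$, and the target bound $\leq \frac{1}{t}$ then follows from $\theta \leq 1$ together with the exponent relation $\kappa \leq \theta$, which forces $(t+1)^{\kappa-\theta} \leq 1$. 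The delicate point is precisely this bookkeeping of exponents so that the amplifying factor $(t+1)^{\kappa}$ is absorbed; this is where the admissible ranges of $\theta$ and $\kappa$ genuinely enter, and I would state explicitly which interaction of $\theta$ and $\kappa$ is being used rather than treating it as automatic.

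For the second inequality I would invoke the integral test for the decreasing function $x^{-\kappa}$: each term satisfies $t^{-\kappa} \leq \int_{t-1}^{t} x^{-\kappa}\,dx$, so summing telescopes the integral and gives $\sum_{t=s}^{T} t^{-\kappa} \leq \int_{s-1}^{T} x^{-\kappa}\,dx = \frac{T^{1-\kappa} - (s-1)^{1-\kappa}}{1-\kappa} \leq \frac{T^{1-\kappa}}{1-\kappa}$, where $\kappa < 1$ keeps the exponent $1-\kappa$ positive and the integral finite at the left endpoint when $s = 1$. The third inequality is the same comparison in the boundary case $\kappa = 1$: bound $\sum_{t=s}^{T} \frac{1}{t} \leq \sum_{t=1}^{T} \frac{1}{t} \leq 1 + \ln T$ via the integral of $1/x$, and then note $1 + \ln T \leq 2\ln T$ holds exactly when $\ln T \geq 1$, i.e. when $T \geq 3 > e$.

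I expect the only real obstacle to be the first inequality, since it is the one estimate whose validity hinges on how $\kappa$ and $\theta$ are related rather than on a routine monotone-sum comparison; once the derivative/concavity bound on the difference is in place, the remainder is algebraic simplification. The two summation bounds, by contrast, are immediate from the integral test and introduce no new ideas.
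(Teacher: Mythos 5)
The paper never actually proves this lemma: it is quoted from \cite{Yi2020distr_band_online_conv_opt}, and the authors explicitly omit the proof ``to avoid redundancy.'' So your attempt can only be judged against the statement itself and the standard arguments. Your treatment of the two summation bounds is exactly that standard argument and is correct: the comparison $t^{-\kappa}\le\int_{t-1}^{t}x^{-\kappa}\,dx$ needs $\kappa<1$ only to keep $\int_{0}^{T}x^{-\kappa}\,dx$ finite when $s=1$, and $1+\ln T\le 2\ln T$ is equivalent to $T\ge e$, so $T\ge 3$ suffices (reading $\log$ as the natural logarithm, which the constant $2$ forces).

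The substantive issue is the first inequality, and you have put your finger on it without drawing the full conclusion. Your chain $\frac{1}{t^\theta}-\frac{1}{(t+1)^\theta}\le\frac{\theta}{t(t+1)^{\theta}}$ (concavity/MVT), followed by $(t+1)^{\kappa}\cdot\frac{\theta}{t(t+1)^{\theta}}=\frac{\theta(t+1)^{\kappa-\theta}}{t}\le\frac{1}{t}$, is valid, but the last step needs $\kappa\le\theta$, and that relation appears nowhere in the lemma's hypotheses, which allow any $\theta\in[0,1]$ and $\kappa\in[0,1)$ independently. This is not a removable artifact of your method: the printed statement is false without such a relation. By the mean value theorem, $\frac{1}{t^\theta}-\frac{1}{(t+1)^\theta}\ge\theta(t+1)^{-\theta-1}$, so the left-hand side is at least $\theta(t+1)^{\kappa-\theta-1}\ge\frac{\theta}{2}(t+1)^{\kappa-\theta}\cdot\frac{1}{t}$, which exceeds $\frac{1}{t}$ for all large $t$ whenever $\kappa>\theta>0$; concretely, for $\theta=0.1$, $\kappa=0.9$, $t=1000$ the left-hand side is about $0.025$ while $1/t=0.001$. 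What you have actually proved is a corrected lemma carrying the extra hypothesis $\kappa\le\theta$ (the case $\theta=0$ being trivial); the version transcribed into this paper drops a condition that is genuinely needed. In the paper's own applications the exponent pairs arising from $\beta_{i,t},\gamma_{i,t}\propto t^{-1/2}$ satisfy $\kappa\le\theta$, so nothing downstream is affected, but your stated intention to ``make explicit which interaction of $\theta$ and $\kappa$ is being used'' should be sharpened into the stronger claim: the interaction $\kappa\le\theta$ must be added to the hypotheses, otherwise the first display is false as stated.
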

Based on Theorem 2 in \cite{zinkevich2003online}, we can conclude the dynamic regret of online gradient descent (OGD) with time-varying learning rate as shown in the following Theorem. 
\begin{theorem}[Dynamic Regret of OGD]
\label{the:ogd} 
Consider the online gradient descent (OGD), which starts with any $\boldsymbol{x} \in \mathcal{X}$ and performs the following update  \cite{zinkevich2003online,zhao2021bandit} 
\begin{gather*}
\boldsymbol{x}_{t+1} = \mathcal{P}_{\mathcal{X}}(\boldsymbol{x}_t - \eta_t \nabla f_t(\boldsymbol{x}_t)).
\end{gather*}
Assume $\mathcal{X}$ is bounded, i.e., $\norm{\boldsymbol{x} - \boldsymbol{y}}_2 \leq R$, $\forall \boldsymbol{x}, \boldsymbol{y} \in \mathcal{X}$ and the gradient is also bounded $\norm{\nabla f_t(\boldsymbol{x_t})} \leq G_f$, $\forall \boldsymbol{x} \in \mathcal{X}, t \in [T]$. The dynamic OGD is upper bounded by 
\begin{align*}
\sum_{t=1}^T f_t(\boldsymbol{x}_t)  &- \sum_{t=1}^T f_t(\boldsymbol{u}_t) \leq  G_fF^2\sum_{t=1}^T \eta_t + \frac{R^2}{2\eta_1} \\
& + R^2\sum_{t=1}^T(\frac{1}{2\eta_t} - \frac{1}{2\eta_{t+1}}) + R\sum_{t=1}^T\frac{\norm{u_{t+1} - u_t}}{\eta_t},
\end{align*}
for any comparator sequence $\boldsymbol{u}_1, \ldots, \boldsymbol{u}_T \in \mathcal{X}$. Besides, $P_T = \sum_{t=2}^T\norm{\boldsymbol{u}_t - \boldsymbol{u}_{t-1}}$ is the path-length.
\end{theorem}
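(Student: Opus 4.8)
The plan is to adapt the classical potential-function (one-step-progress) analysis of projected online gradient descent to the dynamic setting, where both the step size $\eta_t$ and the comparator $\boldsymbol{u}_t$ vary with $t$. First I would linearize the instantaneous regret using convexity of $f_t$ (Definition~\ref{def:convexity}),
\[
f_t(\boldsymbol{x}_t) - f_t(\boldsymbol{u}_t) \leq \langle \nabla f_t(\boldsymbol{x}_t), \boldsymbol{x}_t - \boldsymbol{u}_t \rangle,
\]
so that it suffices to bound the sum of inner products. Since $\boldsymbol{u}_t \in \mathcal{X}$ and the Euclidean projection $\mathcal{P}_{\mathcal{X}}$ is non-expansive, the update rule gives the one-step inequality $\norm{\boldsymbol{x}_{t+1} - \boldsymbol{u}_t}^2 \leq \norm{\boldsymbol{x}_t - \eta_t \nabla f_t(\boldsymbol{x}_t) - \boldsymbol{u}_t}^2$.

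Expanding the right-hand side and isolating the inner product yields the standard descent estimate
\[
\langle \nabla f_t(\boldsymbol{x}_t), \boldsymbol{x}_t - \boldsymbol{u}_t \rangle \leq \frac{\norm{\boldsymbol{x}_t - \boldsymbol{u}_t}^2 - \norm{\boldsymbol{x}_{t+1} - \boldsymbol{u}_t}^2}{2\eta_t} + \frac{\eta_t}{2}\norm{\nabla f_t(\boldsymbol{x}_t)}^2.
\]
Summing over $t$, the quadratic gradient term is controlled by $\norm{\nabla f_t(\boldsymbol{x}_t)} \leq G_f$, producing the $\sum_t \eta_t$ contribution of the bound; the remaining task is to telescope the difference-of-distances term.

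The main obstacle is precisely this telescoping, which does not collapse cleanly in the dynamic case: the negative term at round $t$ measures $\boldsymbol{x}_{t+1}$ against $\boldsymbol{u}_t$ with weight $1/(2\eta_t)$, while the positive term at round $t+1$ measures $\boldsymbol{x}_{t+1}$ against $\boldsymbol{u}_{t+1}$ with weight $1/(2\eta_{t+1})$. My plan is to reindex the negative terms and, for each $t$, split the resulting mismatch as
\[
\frac{\norm{\boldsymbol{x}_t - \boldsymbol{u}_t}^2}{2\eta_t} - \frac{\norm{\boldsymbol{x}_t - \boldsymbol{u}_{t-1}}^2}{2\eta_{t-1}} = \norm{\boldsymbol{x}_t - \boldsymbol{u}_t}^2\left(\frac{1}{2\eta_t} - \frac{1}{2\eta_{t-1}}\right) + \frac{\norm{\boldsymbol{x}_t - \boldsymbol{u}_t}^2 - \norm{\boldsymbol{x}_t - \boldsymbol{u}_{t-1}}^2}{2\eta_{t-1}}.
\]
For the first (step-size-variation) piece I would use the diameter bound $\norm{\boldsymbol{x}_t - \boldsymbol{u}_t} \leq R$; for the second (comparator-drift) piece I would apply the identity $\norm{a}^2 - \norm{b}^2 = \langle a - b, a + b \rangle$ with $a = \boldsymbol{x}_t - \boldsymbol{u}_t$ and $b = \boldsymbol{x}_t - \boldsymbol{u}_{t-1}$, then Cauchy--Schwarz together with $\norm{a + b} \leq 2R$, which converts it into the path-length term $R\norm{\boldsymbol{u}_t - \boldsymbol{u}_{t-1}}/\eta_{t-1}$.

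Finally I would collect the pieces: the boundary contribution $\norm{\boldsymbol{x}_1 - \boldsymbol{u}_1}^2/(2\eta_1) \leq R^2/(2\eta_1)$, the telescoped step-size-variation sum, the accumulated path-length sum $R\sum_t \norm{\boldsymbol{u}_{t+1} - \boldsymbol{u}_t}/\eta_t$, and the gradient sum, after discarding the non-positive residual $-\norm{\boldsymbol{x}_{T+1} - \boldsymbol{u}_T}^2/(2\eta_T)$. These reproduce the four terms displayed in the theorem. Beyond careful alignment of the comparator and step-size indices during the reindexing, the argument is routine, relying only on the diameter and gradient bounds granted by the hypotheses.
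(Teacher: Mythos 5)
Your proposal is correct and takes essentially the same route as the paper: the paper defers the proof of this theorem to the cited references, but the identical linearize--project--telescope argument, with the same add-and-subtract splitting into a step-size-variation piece (bounded via the diameter $R$) and a comparator-drift piece (bounded via $\|a\|^2-\|b\|^2=\langle a-b,a+b\rangle$ and Cauchy--Schwarz), appears almost verbatim in its proof of the constrained variant, Theorem~\ref{the:ogdc}. One remark: your bookkeeping (which, like the paper's, implicitly requires non-increasing step sizes so the variation coefficients are nonnegative) correctly yields the terms $R^2\sum_{t}\bigl(\tfrac{1}{2\eta_{t+1}}-\tfrac{1}{2\eta_{t}}\bigr)$ and $\tfrac{G_f^2}{2}\sum_{t}\eta_t$, which is the standard form; the opposite sign $\bigl(\tfrac{1}{2\eta_t}-\tfrac{1}{2\eta_{t+1}}\bigr)$ and the constant $G_fF^2$ displayed in the theorem statement appear to be typos in the paper rather than gaps in your argument.
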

Note that the proofs of Lemma~\ref{lem:ine} and Theorem~\ref{the:ogd} stated above appear in \cite{Yi2020distr_band_online_conv_opt,zinkevich2003online,zhao2021bandit}, so we do not include them here to avoid redundancy.
\begin{lemma}\label{lem:step}
Let $\{\boldsymbol{x}_t\}$ be the decision sequence generated by \eqref{eq:ogd}. The following inequality holds for any sequence $\{\boldsymbol{u}_t\}$ with $\boldsymbol{u}_t \in \mathcal{X}$, $\forall t$,
\begin{align}
    &\mathcal{L}_t(\boldsymbol{x}_t,q) - \mathcal{L}_t(\boldsymbol{u}_t,q_t) \notag \\
    &\leq \frac{1}{2\eta_t}(\norm{\boldsymbol{u}_t - \boldsymbol{x}_t}^2 - \norm{\boldsymbol{u}_t - \boldsymbol{x}_{t+1}}^2) \notag \\
    & \quad + \frac{1}{2\gamma_t} (\norm{q - q_t}^2 - \norm{q- q_{t+1}}^2) \notag \\
    & \quad + \frac{\eta_t \norm{\nabla_x\mathcal{L}_t(\boldsymbol{x}_t,q_t)}^2}{2} + \frac{\gamma_t \norm{\nabla_q\mathcal{L}_t(\boldsymbol{x}_t,q_t)}^2}{2}.
\end{align}
\end{lemma}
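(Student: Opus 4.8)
The plan is to carry out the standard one-step primal--dual analysis of the saddle-point function $\mathcal{L}_t$. The starting observation is that $\mathcal{L}_t(\boldsymbol{x},q)$ is convex in $\boldsymbol{x}$ and concave in $q$: convexity in $\boldsymbol{x}$ follows from the convexity of $f_{i,t}$ and $g_{i,t}$ (Proposition~\ref{prop:convex-loss}) together with $q_t \geq 0$, which holds because $q_t$ is the output of the projection onto $\mathbb{R}_+$ in \eqref{eq:p_q}; concavity in $q$ is immediate since $q\,g_{i,t}(\boldsymbol{x}) - \frac{\beta}{2}q^2$ is a downward parabola in $q$.

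First I would decompose the left-hand side by inserting the intermediate term $\mathcal{L}_t(\boldsymbol{x}_t, q_t)$:
$$\mathcal{L}_t(\boldsymbol{x}_t,q) - \mathcal{L}_t(\boldsymbol{u}_t,q_t) = \big[\mathcal{L}_t(\boldsymbol{x}_t,q) - \mathcal{L}_t(\boldsymbol{x}_t,q_t)\big] + \big[\mathcal{L}_t(\boldsymbol{x}_t,q_t) - \mathcal{L}_t(\boldsymbol{u}_t,q_t)\big].$$
Applying concavity in $q$ to the first bracket and convexity in $\boldsymbol{x}$ to the second yields the first-order bound
$$\mathcal{L}_t(\boldsymbol{x}_t,q) - \mathcal{L}_t(\boldsymbol{u}_t,q_t) \leq \langle \nabla_{\boldsymbol{x}}\mathcal{L}_t(\boldsymbol{x}_t,q_t), \boldsymbol{x}_t - \boldsymbol{u}_t \rangle + \langle \nabla_q\mathcal{L}_t(\boldsymbol{x}_t,q_t), q - q_t \rangle.$$

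Next I would bound each inner product by a telescoping term using the update rules. For the primal term, I use that $\boldsymbol{x}_{t+1}$ is the Euclidean projection of $\boldsymbol{x}_t - \eta_t\nabla_{\boldsymbol{x}}\mathcal{L}_t(\boldsymbol{x}_t,q_t)$ onto $\mathcal{X}$, so by non-expansiveness toward any $\boldsymbol{u}_t \in \mathcal{X}$ we have $\norm{\boldsymbol{u}_t - \boldsymbol{x}_{t+1}}^2 \leq \norm{\boldsymbol{u}_t - \boldsymbol{x}_t + \eta_t\nabla_{\boldsymbol{x}}\mathcal{L}_t(\boldsymbol{x}_t,q_t)}^2$; expanding the square and rearranging gives
$$\langle \nabla_{\boldsymbol{x}}\mathcal{L}_t(\boldsymbol{x}_t,q_t), \boldsymbol{x}_t - \boldsymbol{u}_t \rangle \leq \frac{1}{2\eta_t}\big(\norm{\boldsymbol{u}_t - \boldsymbol{x}_t}^2 - \norm{\boldsymbol{u}_t - \boldsymbol{x}_{t+1}}^2\big) + \frac{\eta_t}{2}\norm{\nabla_{\boldsymbol{x}}\mathcal{L}_t(\boldsymbol{x}_t,q_t)}^2.$$
For the dual term, the update $q_{t+1} = (q_t + \gamma_t\nabla_q\mathcal{L}_t(\boldsymbol{x}_t,q_t))_+$ in \eqref{eq:p_q} is a projected gradient \emph{ascent} step onto $\mathbb{R}_+$; applying the identical non-expansiveness argument with target $q \geq 0$ produces the symmetric inequality
$$\langle \nabla_q\mathcal{L}_t(\boldsymbol{x}_t,q_t), q - q_t \rangle \leq \frac{1}{2\gamma_t}\big(\norm{q - q_t}^2 - \norm{q - q_{t+1}}^2\big) + \frac{\gamma_t}{2}\norm{\nabla_q\mathcal{L}_t(\boldsymbol{x}_t,q_t)}^2.$$
Summing the two displays and combining with the first-order bound delivers exactly the claimed inequality.

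The main obstacle is more bookkeeping than genuine difficulty: keeping the sign conventions straight, since the primal step is gradient descent while the dual step is gradient ascent, so the two projection arguments yield telescoping terms with matching signs only when the ascent direction is handled consistently. A secondary point requiring care is confirming that convexity of $\mathcal{L}_t$ in $\boldsymbol{x}$ is legitimately available, which hinges on $q_t \geq 0$; this is guaranteed by the $(\cdot)_+$ operation in \eqref{eq:p_q}, so the first-order decomposition is valid at every round.
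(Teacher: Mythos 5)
Your proposal is correct and follows essentially the same route as the paper's own proof: the same decomposition through the intermediate point $\mathcal{L}_t(\boldsymbol{x}_t,q_t)$, the same convexity/concavity first-order bounds, and the same projection non-expansiveness argument applied to the primal descent and dual ascent steps, merely presented in a different order. Your explicit check that $q_t \geq 0$ (via the $(\cdot)_+$ projection) justifies convexity in $\boldsymbol{x}$ is a detail the paper leaves implicit, but it does not change the substance of the argument.
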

\begin{proof}
According to the update step in our algorithm, we have 
\begin{align}
    &\norm{\boldsymbol{u}_t - \boldsymbol{x}_{t+1}}^2  \notag \\
    &= \norm{\boldsymbol{u}_t - \mathcal{P}_{\mathcal{X}}(\boldsymbol{x})} \notag \\
    &\leq  \norm{\boldsymbol{u}_t - (\boldsymbol{x}_t - \eta_t \nabla_x \mathcal{L}_t(\boldsymbol{x}_t,q_t))}^2 \notag \\
    &= \norm{\boldsymbol{u}_t - \boldsymbol{x}_t}^2 + 2(\boldsymbol{u}_t - \boldsymbol{x}_t)^T\eta_t\nabla_x \mathcal{L}_t(\boldsymbol{x}_t,q_t) + \eta_t^2\norm{\nabla_x \mathcal{L}_t(\boldsymbol{x}_t,q_t)}^2.     
\end{align}
Thus, we obtain
\begin{align}
    (\boldsymbol{x}_t - \boldsymbol{u}_t)^T\nabla_x \mathcal{L}_t(\boldsymbol{x}_t,q_t) &\leq \frac{1}{2\eta_t}(\norm{\boldsymbol{u}_t - \boldsymbol{x}_t}^2 - \norm{\boldsymbol{u}_t - \boldsymbol{x}_{t+1}}^2) \notag \\
    & \quad \quad +\frac{\eta_t}{2}\norm{\nabla_x \mathcal{L}_t(\boldsymbol{x}_t,q_t)}^2.
\end{align}
Similarly, it holds
\begin{align}
    (q - q_t)^T\nabla_q\mathcal{L}_t(\boldsymbol{x}_t,q_t) &\leq \frac{1}{2\gamma_t}(\norm{q - q_t}^2 - \norm{q_t - q_{t+1}}^2) \notag \\
    & \quad \quad +\frac{\gamma_t}{2}\norm{\nabla_q\mathcal{L}_t(\boldsymbol{x}_t,q_t)}^2.
\end{align}
Following \cite{zinkevich2003online}, since the Lagrangian function \eqref{eq:lagr} is convex in its first argument and concave in its second, we arrive at
\begin{align*}
    &\mathcal{L}_t(\boldsymbol{x}_t,q_t) -\mathcal{L}_t(\boldsymbol{u}_t,q_t) \leq (\boldsymbol{x}_t-\boldsymbol{u})^T \nabla_x \mathcal{L}_t(\boldsymbol{u}_t,q_t) \\
    &\mathcal{L}_t(\boldsymbol{x}_t,q)-\mathcal{L}_t(\boldsymbol{x}_t,q_t) \leq (q-q_t) \nabla_q \mathcal{L}_t(\boldsymbol{x}_t,q_t).
\end{align*}
Summing up yields
\begin{align}
    &\mathcal{L}_t(\boldsymbol{x}_t,q) - \mathcal{L}_t(\boldsymbol{u}_t,q_t) \notag \\
    & = \mathcal{L}_t(\boldsymbol{x}_t,q)-\mathcal{L}_t(\boldsymbol{x}_t,q_t) + \mathcal{L}_t(\boldsymbol{x}_t,q_t) -\mathcal{L}_t(\boldsymbol{u}_t,q_t) \notag \\
    &\leq (\boldsymbol{x}_t-\boldsymbol{u})^T \nabla_x \mathcal{L}_t(\boldsymbol{u}_t,q_t) + (q-q_t) \nabla_q\mathcal{L}_t(\boldsymbol{x}_t,q_t) \notag \\
    &\leq \frac{1}{2\eta_t}(\norm{\boldsymbol{u}_t - \boldsymbol{x}_t}^2 - \norm{\boldsymbol{u}_t - \boldsymbol{x}_{t+1}}^2) \notag \\
    & \quad + \frac{1}{2\gamma_t} (\norm{q - q_t}^2 - \norm{q- q_{t+1}}^2) \notag \\
    & \quad + \frac{\eta_t \norm{\nabla_x\mathcal{L}_t(\boldsymbol{x}_t,q_t)}^2}{2} + \frac{\gamma_t \norm{\nabla_q\mathcal{L}_t(\boldsymbol{x}_t,q_t)}^2}{2}.
\end{align}
    
\end{proof}
\begin{theorem}[Expected Dynamic Regret of OGD with constraints]
\label{the:ogdc}
Consider the OGD with constraints, which begins with any $\boldsymbol{x}_1 \in \mathcal{X}$. It performs 
\begin{gather}
    \boldsymbol{x}_{t+1} = \mathcal{P}_{\mathcal{X}}(\boldsymbol{x}_t - \eta_t \nabla_x \mathcal{L}_t(\boldsymbol{x}_t,q_t)) \label{eq:ogd} \\
    q_{t+1} = \mathcal{P}_{(0,+\infty)}(q_t + \gamma_t \nabla_q \mathcal{L}_t(\boldsymbol{x}_t,q_t)) \label{eq:ogd-q}
\end{gather}
Suppose the feasible domain $\mathcal{X}$ is bounded, i.e., $\norm{\boldsymbol{x}-\boldsymbol{y}}_2 \leq R$ for any $\boldsymbol{x}, \boldsymbol{y} \in \mathcal{X}$; Meanwhile, the online functions have bounded gradient magnitude, i.e., $\norm{\nabla f_t}_2 \leq G_f$ and $\norm{\nabla g_t}_2 \leq G_g$ for any $\boldsymbol{x} \in \mathcal{X}$ and $t \in [T]$. Then, the dynamic regret of OGD is upper bounded by 
\begin{align}
    &\sum_{t=1}^T [f_t(\boldsymbol{x}_t) - f_t(\boldsymbol{u}_t)] \notag \\
    &\leq G_fF^2\sum_{t=1}^T \eta_t + G_g G^2 \sum_{t=1}^T \gamma_t + \sum_{t=1}^{T-1} (\frac{1}{2\gamma_{t+1}} - \frac{1}{2\gamma_t})Q^2 \notag \\
    &+ \frac{R^2}{2\eta_1} + R^2\sum_{t=1}^T(\frac{1}{2\eta_t} - \frac{1}{2\eta_{t+1}}) + R\sum_{t=1}^T\frac{\norm{u_{t+1} - u_t}}{\eta_t}
\end{align}
for any comparator sequence $\boldsymbol{u}_1, \ldots, \boldsymbol{u}_T \in \mathcal{X}$ and $P_T$ is the path-length defined as $P_T = \sum_{t=2}^T \norm{\boldsymbol{u}_t - \boldsymbol{u}_{t-1}}_2$. 
\end{theorem}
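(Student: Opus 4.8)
The plan is to convert the loss regret $\sum_t [f_t(\boldsymbol{x}_t) - f_t(\boldsymbol{u}_t)]$ into a regret on the Lagrangian and then invoke Lemma~\ref{lem:step}. First I would set the dual comparator to $q = 0$ in $\mathcal{L}_t$, so that $\mathcal{L}_t(\boldsymbol{x}_t, 0) = f_t(\boldsymbol{x}_t)$, while $\mathcal{L}_t(\boldsymbol{u}_t, q_t) = f_t(\boldsymbol{u}_t) + q_t g_t(\boldsymbol{u}_t) - \tfrac{\beta}{2} q_t^2$. Subtracting gives $f_t(\boldsymbol{x}_t) - f_t(\boldsymbol{u}_t) = [\mathcal{L}_t(\boldsymbol{x}_t, 0) - \mathcal{L}_t(\boldsymbol{u}_t, q_t)] + q_t g_t(\boldsymbol{u}_t) - \tfrac{\beta}{2} q_t^2$. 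Because each comparator $\boldsymbol{u}_t$ is feasible ($g_t(\boldsymbol{u}_t) \le 0$) and the dual iterate satisfies $q_t \ge 0$, the correction term $q_t g_t(\boldsymbol{u}_t) - \tfrac{\beta}{2} q_t^2$ is nonpositive, so the per-round loss gap is upper bounded by the Lagrangian gap $\mathcal{L}_t(\boldsymbol{x}_t, 0) - \mathcal{L}_t(\boldsymbol{u}_t, q_t)$. This is the key reduction that lets the constrained problem be analyzed by the unconstrained machinery.

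Next I would apply Lemma~\ref{lem:step} with $q = 0$ and sum over $t = 1, \dots, T$, grouping the four families of terms. The primal telescoping part $\sum_t \tfrac{1}{2\eta_t}(\norm{\boldsymbol{u}_t - \boldsymbol{x}_t}^2 - \norm{\boldsymbol{u}_t - \boldsymbol{x}_{t+1}}^2)$ is exactly the dynamic-OGD expression already treated in Theorem~\ref{the:ogd}; by Abel summation over the time-varying $\eta_t$ together with the inequality $\norm{\boldsymbol{u}_{t+1} - \boldsymbol{x}_{t+1}}^2 - \norm{\boldsymbol{u}_t - \boldsymbol{x}_{t+1}}^2 \le 2R\,\norm{\boldsymbol{u}_{t+1} - \boldsymbol{u}_t}$ (triangle inequality plus the diameter bound $R$), this yields $\tfrac{R^2}{2\eta_1} + R^2\sum_t(\tfrac{1}{2\eta_t} - \tfrac{1}{2\eta_{t+1}}) + R\sum_t \tfrac{\norm{u_{t+1} - u_t}}{\eta_t}$, accounting for the three comparator/step-size terms in the target bound.

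The dual telescoping part $\sum_t \tfrac{1}{2\gamma_t}(q_t^2 - q_{t+1}^2)$ I would rearrange by Abel summation into $\sum_{t=1}^{T-1}(\tfrac{1}{2\gamma_{t+1}} - \tfrac{1}{2\gamma_t}) q_{t+1}^2$ (discarding the terminal negative contribution and using the nonnegativity of the initial dual variable), and then bound each $q_t^2 \le Q^2$ using a uniform bound $Q$ on the dual iterates, producing the $\sum_{t=1}^{T-1}(\tfrac{1}{2\gamma_{t+1}} - \tfrac{1}{2\gamma_t}) Q^2$ term. Finally I would dispatch the two quadratic-gradient terms $\tfrac{\eta_t}{2}\norm{\nabla_x \mathcal{L}_t(\boldsymbol{x}_t, q_t)}^2$ and $\tfrac{\gamma_t}{2}\norm{\nabla_q \mathcal{L}_t(\boldsymbol{x}_t, q_t)}^2$: writing $\nabla_x \mathcal{L}_t = \nabla f_t + q_t \nabla g_t$ and $\nabla_q \mathcal{L}_t = g_t - \beta q_t$, I would bound these through $\norm{\nabla f_t} \le G_f$, $\norm{\nabla g_t} \le G_g$ and the function bounds $F, G$ of Assumption~\ref{assump:fg-bound} (together with $q_t \le Q$), yielding the $\sum_t \eta_t$ and $\sum_t \gamma_t$ terms. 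Collecting all four groups gives the claimed bound.

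The main obstacle I anticipate is the dual side: establishing (or citing) the uniform bound $q_t \le Q$ so that the $\gamma_t$-telescoping is controllable, and handling the coupling introduced by $q_t \nabla g_t$ inside $\norm{\nabla_x \mathcal{L}_t}^2$, since the primal gradient magnitude now depends on the dual iterate rather than on $G_f$ alone. The primal telescoping itself, while notationally delicate because the comparator $\boldsymbol{u}_t$ moves across rounds, is standard once the path-length term is extracted as above and can largely be imported from Theorem~\ref{the:ogd}.
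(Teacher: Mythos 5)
Your skeleton---reduce the loss regret to a Lagrangian regret via feasibility of the comparators ($g_t(\boldsymbol{u}_t)\le 0$, $q_t\ge 0$), apply Lemma~\ref{lem:step}, and telescope the primal term with the $2R\norm{\boldsymbol{u}_{t+1}-\boldsymbol{u}_t}$ trick---is the same as the paper's, and those parts are fine. The genuine gap is exactly where you anticipated it, and it is not a repairable technicality within your route: when you write $f_t(\boldsymbol{x}_t)-f_t(\boldsymbol{u}_t)=[\mathcal{L}_t(\boldsymbol{x}_t,0)-\mathcal{L}_t(\boldsymbol{u}_t,q_t)]+q_tg_t(\boldsymbol{u}_t)-\frac{\beta}{2}q_t^2$ and discard the whole correction as nonpositive, you throw away the $-\frac{\beta}{2}q_t^2$ slack, which in the paper's proof is the only mechanism controlling the dual-iterate dependence of the gradient term $\frac{\eta_t}{2}\norm{\nabla_x\mathcal{L}_t(\boldsymbol{x}_t,q_t)}^2\le \eta_t(G_f^2+G_g^2q_t^2)$ (and similarly the $\gamma_t$ term). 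The paper keeps $\frac{\beta}{2}(q^2-q_t^2)$ in the per-round inequality and imposes $\beta_t>2G_f^2\eta_t+2G_g^2\gamma_t\beta_t^2$, so that the $-\frac{\beta_t}{2}q_t^2$ part absorbs every $q_t^2$ contribution; no bound on the dual iterates is ever needed for those terms. Your substitute---a uniform bound $q_t\le Q$---cannot be established: with the update \eqref{eq:ogd-q} and the paper's schedules $\gamma_t\sim t^{-1/2}$, $\beta_t\sim t^{-1/2}$, the recursion $q_{t+1}\le(1-\gamma_t\beta_t)q_t+\gamma_t G$ has contraction only $1-O(1/t)$ against drift $O(t^{-1/2})$, so $q_t$ can grow like $\sqrt{t}$. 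Any honest $Q$ is then $T$-dependent, and your gradient term becomes of order $G_g^2Q^2\sum_t\eta_t=\Theta(T^{5/4})$ under $\eta_t\sim t^{-3/4}$, which destroys the claimed bound. This mismatch is visible in the statement itself: the coefficient of $\sum_t\eta_t$ is $G_fF^2$ with no $Q$ in it, something your route cannot produce. (The $Q^2$ that does appear in the theorem multiplies only the telescoped $\gamma_t$ differences---the same place the paper uses it---so that part of your plan is consistent with the paper.)

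A secondary divergence: by fixing $q=0$ at the outset you forfeit the constraint-violation half of the argument. The paper carries a free $q$ through the entire summation, arrives at $\sum_t[f_t(\boldsymbol{x}_t)-f_t(\boldsymbol{u}_t)]+q\sum_tg_t^+(\boldsymbol{x}_t)-(\frac{1}{2\gamma_1}+\sum_t\frac{\beta_t}{2})q^2\le\cdots$, and maximizes over $q\in(0,+\infty)$, which yields the regret bound and the bound on $[\sum_tg_t^+(\boldsymbol{x}_t)]^2$ from a single inequality; the latter is what the proof of Theorem~\ref{the:drs-drgt} actually invokes to bound $V_T^i$. So the repair is: do not discard the $-\frac{\beta}{2}q_t^2$ slack, impose the paper's condition on $\beta_t$ to absorb the $q_t^2$ terms, and keep $q$ free until a final maximization.
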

\begin{proof}
First, we state the key self-bounding property. According to Lemma~\ref{lem:step}, let the $\{\boldsymbol{u}_t\}$ become the optimal sequence, by $g_t(\boldsymbol{u}_t) \leq 0$ and definition of Lagrangian in \eqref{eq:lagr}, we have
\begin{align}
    &f_t(\boldsymbol{x}_t) - f_t(\boldsymbol{u}_t) + q g_t^+(\boldsymbol{x}_t) \notag \\
    &\leq \frac{1}{2\eta_t}(\norm{\boldsymbol{u}_t - \boldsymbol{x}_t}^2 - \norm{\boldsymbol{u}_t - \boldsymbol{x}_{t+1}}^2) \notag \\
    & \quad + \frac{1}{2\gamma_t} (\norm{q - q_t}^2 - \norm{q- q_{t+1}}^2) + \frac{\beta}{2}(\norm{q}^2-\norm{q_t}^2) \notag \\
    & \quad + \frac{\eta_t \norm{\nabla_x\mathcal{L}_t(\boldsymbol{x}_t,q_t)}^2}{2} + \frac{\gamma_t \norm{\nabla_q\mathcal{L}_t(\boldsymbol{x}_t,q_t)}^2}{2}.
\end{align}
Take the summation of the inequality above from $t=1$ to $T$, we have
\begin{align}
&\sum_{t=1}^T [f_t(\boldsymbol{x}_t) - f_t(\boldsymbol{u}_t) + q g_t^+(\boldsymbol{x}_t)] \notag \\
&\leq G_f^2\sum_{t=1}^T\eta_t(F^2+ \norm{q_t}^2)  + G_g^2\sum_{t=1}^T\gamma_t(G^2+\beta_t^2\norm{q_t}^2) \notag \\
& + \sum_{t=1}^{T-1} (\frac{1}{2\gamma_{t+1}} - \frac{1}{2\gamma_t})Q^2 + \sum_{t=1}^T\frac{\beta_t}{2}(\norm{q}^2-\norm{q_t}^2) \notag \\
& + \frac{\norm{q}^2}{2\gamma_1} + \sum_{t=1}^T\frac{1}{2\eta_t}(\norm{\boldsymbol{u}_t - \boldsymbol{x}_t}^2 - \norm{\boldsymbol{u}_t -\boldsymbol{x}_{t+1}}^2).   
\end{align}

This inequality is obtained according to Proposition 3 in \cite{mahdavi2012trading}, $\norm{\nabla_x\mathcal{L}_t(\boldsymbol{x}_t,q_t)}^2 \leq 2G_f^2(1+\norm{q_t}^2)$ and $\norm{\nabla_q \mathcal{L}_t(\boldsymbol{x}_t,q_t)}^2 \leq 2G_g^2(G^2+\beta_t^2\norm{q_t}^2)$, where $F$ is the bound of loss function and $G$ the bound of the constraint function, $G_f$ and $G_g$ are the bound of loss function gradient and constraint function gradient respectively.
Besides,
\begin{align}
    &\sum_{t=1}^T\frac{1}{2\eta_t}(\norm{\boldsymbol{u}_t - \boldsymbol{x}_t}^2 - \norm{\boldsymbol{u}_t -\boldsymbol{x}_{t+1}}^2) \notag \\
    &\leq \sum_{t=1}^T \Big (\frac{1}{2\eta_t}\norm{\boldsymbol{u}_t - \boldsymbol{x}_t}^2 - \frac{1}{2\eta_{t+1}}\norm{\boldsymbol{u}_{t+1} - \boldsymbol{x}_{t+1}}^2 \notag \\
    &+ \frac{1}{2\eta_{t+1}}\norm{\boldsymbol{u}_{t+1} - \boldsymbol{x}_{t+1}}^2 - \frac{1}{2\eta_t}\norm{\boldsymbol{u}_{t+1} - \boldsymbol{x}_{t+1}}^2 \notag \\
    & + \frac{1}{2\eta_t}\norm{\boldsymbol{u}_{t+1} - \boldsymbol{x}_{t+1}}^2 - \frac{1}{2\eta_t}\norm{\boldsymbol{u}_t -\boldsymbol{x}_{t+1}}^2\Big)
\end{align}
According to \cite{guo2022online},
\begin{align*}
    \sum_{t=1}^T(\norm{\boldsymbol{u}_{t+1} - \boldsymbol{x}_{t+1}}^2 - \norm{\boldsymbol{u}_t -\boldsymbol{x}_{t+1}}^2) 
    \leq 2R\sum_{t=1}^T \norm{u_{t+1} - u_t}.
\end{align*}
Therefore, we have
\begin{align*}
    &\sum_{t=1}^T\frac{1}{2\eta_t}(\norm{\boldsymbol{u}_t - \boldsymbol{x}_t}^2 - \norm{\boldsymbol{u}_t -\boldsymbol{x}_{t+1}}^2) \\
    &\leq \frac{R^2}{2\eta_1} + R^2\sum_{t=1}^T(\frac{1}{2\eta_t} - \frac{1}{2\eta_{t+1}}) + R\sum_{t=1}^T\frac{\norm{u_{t+1} - u_t}}{\eta_t}.
\end{align*}
Then we can obtain
\begin{align}
&\sum_{t=1}^T [f_t(\boldsymbol{x}_t) - f_t(\boldsymbol{u}_t) + q g_t^+(\boldsymbol{x}_t)] \notag \\
&\leq G_f^2\sum_{t=1}^T\eta_t(F^2+ \norm{q_t}^2)  + G_g^2\sum_{t=1}^T\gamma_t(G^2+\beta_t^2\norm{q_t}^2)  + \frac{\norm{q}^2}{2\gamma_1} \notag \\
&+ \sum_{t=1}^{T-1} (\frac{1}{2\gamma_{t+1}} - \frac{1}{2\gamma_t})Q^2 + \sum_{t=1}^T\frac{\beta_t}{2}(\norm{q}^2-\norm{q_t}^2) \notag \\
& +  \frac{R^2}{2\eta_1} + R^2\sum_{t=1}^T(\frac{1}{2\eta_t} - \frac{1}{2\eta_{t+1}}) + R\sum_{t=1}^T\frac{\norm{u_{t+1} - u_t}}{\eta_t} .   
\end{align}
Select $\beta_t > 2G_f^2\eta_t + 2G_g^2\gamma_t\beta_t^2$, then the inequality becomes
\begin{align*}
&\sum_{t=1}^T [f_t(\boldsymbol{x}_t) - f_t(\boldsymbol{u}_t) + q g_t^+(\boldsymbol{x}_t)] \notag \\
&\leq G_fF^2\sum_{t=1}^T \eta_t + G_g G^2 \sum_{t=1}^T \gamma_t + \sum_{t=1}^{T-1} (\frac{1}{2\gamma_{t+1}} - \frac{1}{2\gamma_t})Q^2 \\ 
& + \sum_{t=1}^T\frac{\beta_t}{2}\norm{q}^2 
+ \frac{\norm{q}^2}{2\gamma_1} + \frac{R^2}{2\eta_1} + R^2\sum_{t=1}^T(\frac{1}{2\eta_t} - \frac{1}{2\eta_{t+1}}) \\
&+ R\sum_{t=1}^T\frac{\norm{u_{t+1} - u_t}}{\eta_t},
\end{align*}
which can also be written as 
\begin{align*}
&\sum_{t=1}^T [f_t(\boldsymbol{x}_t) - f_t(\boldsymbol{u}_t)] + q \sum_{t=1}^Tg_t^+(\boldsymbol{x}_t)- (\frac{1}{2\gamma_1}+\sum_{t=1}^T\frac{\beta_t}{2})q^2 \\
&\leq G_fF^2\sum_{t=1}^T \eta_t + G_g G^2 \sum_{t=1}^T \gamma_t + \sum_{t=1}^{T-1} (\frac{1}{2\gamma_{t+1}} - \frac{1}{2\gamma_t})Q^2 \\
&+ \frac{R^2}{2\eta_1} + R^2\sum_{t=1}^T(\frac{1}{2\eta_t} - \frac{1}{2\eta_{t+1}}) + R\sum_{t=1}^T\frac{\norm{u_{t+1} - u_t}}{\eta_t}.    
\end{align*}
By taking maximization for $q$ over the range $(0,+\infty)$, we get
\begin{align*}
    &\sum_{t=1}^T [f_t(\boldsymbol{x}_t) - f_t(\boldsymbol{u}_t)] +\frac{[\sum_{t=1}^Tg_t(\boldsymbol{x}_t)^+]^2}{2/\gamma_1 + 2\sum_{t=1}^T\beta_t}\\
    &\leq G_fF^2\sum_{t=1}^T \eta_t + G_g G^2 \sum_{t=1}^T \gamma_t + \sum_{t=1}^{T-1} (\frac{1}{2\gamma_{t+1}} - \frac{1}{2\gamma_t})Q^2 \\
    &+ \frac{R^2}{2\eta_1} + R^2\sum_{t=1}^T(\frac{1}{2\eta_t} - \frac{1}{2\eta_{t+1}}) + R\sum_{t=1}^T\frac{\norm{u_{t+1} - u_t}}{\eta_t}.
\end{align*}
Besides, by substituting the regret bound by its lower bound as $\sum_{t=1}^T f_t(\boldsymbol{x}_t) - f_t(\boldsymbol{u}_t) \geq - FT$, we can obtain the upper bound of constraint violation.
\end{proof}
\subsection{Proof of Proposition~\ref{prop:convex-loss}}
\label{app:convex-loss}
\textbf{Part 1-} At first, we prove the convexity of the loss function \eqref{eq:loss_f}. 

The loss function $f_i(\boldsymbol{x}_i) = 1- \frac{1}{\vert \tilde{\mathcal{N}}_i \vert} \sum_{j \in \tilde{\mathcal{N}}_i}\bar{f}_{j,t} = \frac{1}{\vert \tilde{\mathcal{N}}_i \vert} \sum_{j \in \tilde{\mathcal{N}}_i} \Big (1 - \min\{\frac{\sum_{k \in \tilde{\mathcal{N}}_j} x_{k}(j)}{l_i} , 1\} \Big)$ is the combination of (1- $\bar{f})$. Thus, it suffices that we prove the convexity of function $\bar{f}$ here. 
\begin{align*}
1 - \bar{f}_{j,t} &= 1 - \min\{\frac{\sum_{k \in \tilde{\mathcal{N}}_j} x_{k}(j)}{l_j} , 1\} \\
&=1 - \min\{\frac{(\sum_{k \in \mathcal{N}_j}x_{k}(j))+x_i(j)}{l_j} , 1\}  \\
&\overset{\mathrm{(1)}}{=} 1 -\min\{\frac{q(j)+x_i(j)}{l_j} , 1\}  \\
&\overset{\mathrm{(2)}}{=} 1 -\frac{1}{2}(\frac{q(j)+x_i(j)}{l_j} + 1 -\vert \frac{q(j)+x_i(j)}{l_j} - 1\vert)  \\
&=1 -\frac{1}{2l_j}\sum_{j \in \tilde{\mathcal{N}}_i}(q(j)+x_i(j)+l_j -\vert q(j)+x_i(j)-l_j \vert) \\ 
&=1 -(q(j)+l_j)-\frac{1}{2l_j}(x_i(j) -\vert q(j)+x_i(j)-l_j \vert) \\
&\overset{\mathrm{(3)}}{=}C-\frac{1}{2l_j}(x_i(j) -\vert q(j)+x_i(j)-l_j \vert).
\end{align*}
For simplicity, in Equality (1), we have $q(j) = \sum_{k \in \mathcal{N}_j}x_{k}(j)$. Equality (2) refers to analyzing the minimum value of two numbers based on the following formula $\min(a,b)=\frac{1}{2}(a+b-\vert a-b \vert)$. Equality (3) refers to setting $C = 1 -\frac{1}{2l_j}(q(j)+l_j)$. Thus, according to Definition~\ref{def:convexity}, for the function to be convex, the following must hold:
\begin{align}
& C-\frac{1}{2l_j}(\lambda x_i(j)+(1-\lambda)y_i(j)
 \notag \\
& \quad -\vert q(j)+\lambda x_i(j)+(1-\lambda)y_i(j)-l_j \vert) \leq \nonumber \\  
& \lambda \Big (C-\frac{1}{2l_j}(x_i(j) -\vert q(j)+x_i(j)-l_j \vert) \Big ) +(1-\lambda) \Big ( C \notag \\
& \quad -\frac{1}{2l_j}(y_i(j) -\vert q(j)+y_i(j)-l_j \vert) \Big ) \nonumber \\ 
&\xLeftrightarrow{\text (4) } (\lambda x_i(j)+(1-\lambda)y_i(j)) \notag \\
& \quad -\vert q(j)+(\lambda x_i(j)+(1-\lambda)y_i(j))-l_j \vert) \geq \notag \\
&\lambda (x_i(j) -\vert q(j)+x_i(j)-l_j \vert) \notag \\
& \quad +(1-\lambda)(y_i(j) -\vert q(j)+y_i(j)-l_j\vert)  \nonumber \\ 
& \xLeftrightarrow{\text (5) } - \vert q(j)+(\lambda x_i(j)+(1-\lambda)y_i(j))-l_j \vert) \notag \\
& \quad \geq - \lambda \vert q(j)+x_i(j)-l_j \vert - (1-\lambda) \vert q(j)+y_i(j)-l_j \vert \notag \\ 
&  \xLeftrightarrow{}  \vert q(j)+(\lambda x_i(j)+(1-\lambda)y_i(j))-l_j \vert) \notag \\
& \quad \leq \lambda \vert q(j)+x_i(j)-l_j \vert + (1-\lambda) \vert q(j)+y_i(j)-l_j \vert, \label{eq1}
\end{align}
where (4) follows by setting $C = \lambda C+(1-\lambda)C$ and multiplying both sides of the equation by $-\frac{1}{2l_j}$. (5) results from $\lambda x_i(j)+(1-\lambda) y_i(j)= (\lambda x_i(j)+(1-\lambda)y_i(j))$. Using the triangle inequality, i.e., $\vert\gamma+\delta\vert \leq \vert \gamma \vert + \vert \delta \vert$, one can prove that the following inequality holds:
\begin{align*}
\vert q(j) &+(\lambda x_i(j)+(1-\lambda)y_i(j))-l_j \vert) \\
& \leq \lambda \vert q(j)+x_i(j)-l_j \vert + (1-\lambda) \vert q(j)+y_i(j)-l_j \vert
\end{align*}
Finally, by summing over all the neighbors, we conclude that our loss function is convex.

\textbf{Part 2-} Secondly, we prove the Lipschitz-continuity of the loss function \eqref{eq:loss_f}.

Based on definition, the loss function is
\begin{align*}
    f_{i,t}(\boldsymbol{x}_{i,t}) = 1- \frac{1}{\vert \tilde{\mathcal{N}}_i \vert}\sum_{j \in \tilde{\mathcal{N}}_i} \min\{\frac{\sum_{k \in \mathcal{N}_j }x_{k,t}(j) + x_{i,t}(j)}{l_j}, 1\}, 
\end{align*}
where $\sum_{k \in \mathcal{N}_j}x_{k,t}(j)$ only depends on the action of node $i$'s neighbors. Denote $\sum_{k \in \mathcal{N}_j}x_{k,t}(j) = m$, for different actions $\boldsymbol{x}_{i,t}$ and $\boldsymbol{y}_{i,t}$ of node $i$,
\begin{align}
    &f(\boldsymbol{x}_{i,t}) - f(\boldsymbol{y}_{i,t}) \notag \\
    & = \frac{\sum_{j \in \tilde{\mathcal{N}}_i} }{\vert \tilde{\mathcal{N}}_i \vert}\Big(\min \{\frac{m + y_{i,t}(j)}{l_j} , 1\} - \min\{\frac{m + x_{i,t}(j)}{l_j} , 1\} \Big) \label{eq:fx-fy},
\end{align}
which leads to four possible result
\begin{itemize}
    \item $\frac{m + y_{i,t}(j)}{l_j} < 1$ and $\frac{m + x_{i,t}(j)}{l_j} < 1$: $f(\boldsymbol{x}_{i,t}) - f(\boldsymbol{y}_{i,t}) = \sum_{j \in \tilde{\mathcal{N}}_i} \frac{y_{i,t}(j)-x_{i,t}(j)}{l_j}$;
    \item $\frac{m + y_{i,t}(j)}{l_j} \geq 1$ and $\frac{m + x_{i,t}(j)}{l_j} \geq 1$: $f(\boldsymbol{x}_{i,t}) - f(\boldsymbol{y}_{i,t}) = 0$;
    \item $\frac{m + y_{i,t}(j)}{l_j} \geq 1$ and $\frac{m + x_{i,t}(j)}{l_j} < 1$: $f(\boldsymbol{x}_{i,t}) - f(\boldsymbol{y}_{i,t}) = \sum_{j \in \tilde{\mathcal{N}}_i} (1-\frac{x_{i,t}(j)}{l_j}) \leq \sum_{j \in \tilde{\mathcal{N}}_i} \frac{y_{i,t}(j)-x_{i,t}(j)}{l_j}$;
    \item $\frac{m + y_{i,t}(j)}{l_j} < 1$ and $\frac{m + x_{i,t}(j)}{l_j} \geq 1$: $\sum_{j \in \tilde{\mathcal{N}}_i} \frac{y_{i,t}(j)-x_{i,t}(j)}{l_j} \leq f(\boldsymbol{x}_{i,t}) - f(\boldsymbol{y}_{i,t}) = \sum_{j \in \tilde{\mathcal{N}}_i} (\frac{y_{i,t}(j)}{l_j}-1) < 0$. 
\end{itemize}
Thus, for different actions $\boldsymbol{x}_{i,t}$ and $\boldsymbol{y}_{i,t}$ of node $i$, we have $\Vert f(x) - f(y) \Vert \leq K \Vert x-y \Vert$, where $K \geq \frac{1}{\min_{j \in \mathcal{N}} l_j}$. According to Definition~\ref{def:lipschitz}, the loss function is Lipschitz continuous with Lipschitz constant $L = \frac{1}{\min_{j \in \mathcal{N}} l_j}$.

\subsection{Proof of Theorem~\ref{the:drs-drgt}}
\label{app:drs-drgt}
The expected dynamic regret can be decomposed as:
\begin{align}
    & \quad \mathbb{E}[\sum_{t=1}^T f_{i,t}(\boldsymbol{x}_{i,t})] - \sum_{t=1}^T f_{i,t}(\boldsymbol{u}_{i,t}) \notag \\
    &= \underbrace{\mathbb{E}[\sum_{t=1}^T \Big( \hat{f}_{i,t}(\boldsymbol{z}_{i,t}) - \hat{f}_{i,t}(\boldsymbol{v}_{i,t})\Big)]}_{\text{term (a)}} \notag \\
    &\quad + \underbrace{\mathbb{E}[\sum_{t=1}^T \Big( f_{i,t}(\boldsymbol{x}_{i,t}) - \hat{f}_{i,t}(\boldsymbol{z}_{i,t})\Big)]}_{\text{term (b)}} \notag \\
    &\quad +  \underbrace{\mathbb{E}[\sum_{t=1}^T \Big( \hat{f}_{i,t}(\boldsymbol{v}_{i,t}) - f_{i,t}(\boldsymbol{u}_{i,t})\Big)]}_{\text{term (c)}}, 
    \label{eq:dec-drgt}
\end{align}
where $\boldsymbol{v}_{i,1}, \ldots, \boldsymbol{v}_{i,T}$ is the comparator sequence, $\boldsymbol{v}_{i,t} = (1-\xi_{i,t})\boldsymbol{u}_{i,t}$, and $\xi_{i,t}$ is the shrinkage parameter.

The term (a) is essentially the dynamic regret of the smoothed functions. In the bandit feedback model, the gradient estimator is set according to \eqref{eq:est-grad}. Therefore, the step \eqref{eq:ogd} is actually the online gradient descent over the smoothed function $\hat{f}_{i,t}$. Thus, term (a) can be upper bounded by Theorem~\ref{the:ogdc} as follows. 
\begin{align*}
    &\sum_{t=1}^T \Big( \hat{f}_{i,t}(\boldsymbol{z}_{i,t}) - \hat{f}_{i,t}(\boldsymbol{v}_{i,t})\Big)] \\
    &\leq F_i^2\sum_{t=1}^T \hat{G}_{f_{i,t}} \eta_{i,t} +  G_i^2 \sum_{t=1}^T \gamma_t + \sum_{t=1}^{T-1} (\frac{1}{2\gamma_{i,t+1}} - \frac{1}{2\gamma_{i,t}})Q^2  \\
    &+ \frac{R^2}{2\eta_{i,1}} + R^2\sum_{t=1}^T(\frac{1}{2\eta_{i,t}} - \frac{1}{2\eta_{i,t+1}}) + R\sum_{t=1}^T\frac{\norm{\boldsymbol{v}_{i,t+1} - \boldsymbol{v}_{i,t}}}{\eta_{i,t}} \\
    &\leq F_i^2\sum_{t=1}^T 2\frac{|\mathcal{N}_i|^2}{\delta_{i,t}^2} \eta_{i,t}  + G_i^2 \sum_{t=1}^T \gamma_t + \sum_{t=1}^{T-1} (\frac{1}{2\gamma_{i,t+1}} - \frac{1}{2\gamma_{i,t}})Q^2  \\
    &+ \frac{R^2}{2\eta_{i,1}} + R^2\sum_{t=1}^T(\frac{1}{2\eta_{i,t}} - \frac{1}{2\eta_{i,t+1}}) + R\sum_{t=1}^T\frac{\norm{\boldsymbol{v}_{i,t+1} - \boldsymbol{v}_{i,t}}}{\eta_{i,t}}.
\end{align*}
Besides, $\sum_{t=1}^T\norm{\boldsymbol{v}_{i,t+1} - \boldsymbol{v}_{i,t}} = \sum_{t=1}^T (1-\xi_{i,t}) \norm{\boldsymbol{u}_{i,t+1} - \boldsymbol{u}_{i,t}}$. The rigorous analysis in \cite{zhao2021bandit} shows that term (b) and term (c) can be bounded by $2L\sum_{t=1}^T\delta_{i,t}$ and $\sum_{t=1}^T(L\delta_{i,t} + LR\xi_{i,t})$ with $L$ the Lipschitz constant of the loss function $f$, respectively, without involving the unknown path-length. Thus, the final regret bound follows as:
\begin{align}
     &\mathbb{E}[\sum_{t=1}^T f_{i,t}(\boldsymbol{x}_{i,t})] - \sum_{t=1}^T f_{i,t}(\boldsymbol{u}_{i,t}) \notag \\
     &\leq F_i^2\sum_{t=1}^T 2\frac{|\mathcal{N}_i|^2}{\delta_{i,t}^2} \eta_{i,t}  + G_i^2 \sum_{t=1}^T \gamma_t + \sum_{t=1}^{T-1} (\frac{1}{2\gamma_{i,t+1}} - \frac{1}{2\gamma_{i,t}})Q^2  \notag \\
     &+ \frac{R^2}{2\eta_{i,1}} + R^2\sum_{t=1}^T(\frac{1}{2\eta_{i,t}} - \frac{1}{2\eta_{i,t+1}}) + R\sum_{t=1}^T\frac{\norm{\boldsymbol{v}_{i,t+1} - \boldsymbol{v}_{i,t}}}{\eta_{i,t}} \notag \\
     &+ 2L\sum_{t=1}^T\delta_{i,t} + \sum_{t=1}^T(L\delta_{i,t} + LR\xi_{i,t}).
\end{align}
By selecting $\delta_{i,t} = (\frac{|\mathcal{N}_i|F_i}{\tilde{L}})^{\frac{1}{2}}(\frac{R^2/2+RP_T}{t})^{\frac{1}{4}}$, $\eta_{i,t} = (\frac{1}{|\mathcal{N}_i|F_i\tilde{L}})^{\frac{1}{2}}(\frac{R^2/2+RP_T}{t})^{\frac{3}{4}}$, $\beta_{i,t} = \frac{1}{G_it^{1/2}}$, $\gamma_{i,t} = \frac{1}{G_i^2t^{1/2}}$, and $\xi_{i,t} = \delta_{i,t}/r_i$, we can obtain the final regret bound:
\begin{align}
    & \quad \mathbb{E}[\sum_{t=1}^T f_{i,t}(\boldsymbol{x}_{i,t})] - \sum_{t=1}^T f_{i,t}(\boldsymbol{u}_{i,t}) \notag \\
    &\leq (|\mathcal{N}_i|F_i\tilde{L})^{\frac{1}{2}}(\sqrt{T}(R^2/2+RP_T)^{\frac{1}{2}} + (R^2/2+RP_T)^{\frac{1}{4}}T^{\frac{3}{4}}) \notag \\
    &\quad + \frac{(|\mathcal{N}_i|F_i\tilde{L})^{\frac{1}{2}}P_T}{R/2+P_T}T^{\frac{3}{4}}((|\mathcal{N}_i|F_i\tilde{L})^{\frac{1}{2}} + R\log T + 1) \notag  \\
    &\quad + Q^2\log T + \sqrt{T} \notag \\
    &= \tilde{\mathcal{O}}((1+P_T)^{\frac{1}{2}}T^{\frac{1}{2}} + (1+P_T)^{\frac{1}{4}}T^{\frac{3}{4}}).
\end{align}
Similarly, the constraint violation can also be decomposed as
\begin{align*}
    \sum_{t=1}^Tg_t(\boldsymbol{x}_t)^+  =  \Big( \sum_{t=1}^Tg_t(\boldsymbol{x}_t)^+ - \sum_{t=1}^T\hat{g}_t(\boldsymbol{z}_t)^+\Big) + \sum_{t=1}^T\hat{g}_t(\boldsymbol{z}_t)^+,
\end{align*}
where the first term is bounded by $\sum_{t=1}^T(L\delta_{i,t} + LR\xi_{i,t})$, and Theorem~\ref{the:ogdc} shows the bound of second term. Thus, the constraint violation is bounded by
\begin{align*}
    &\sum_{t=1}^Tg_t (\boldsymbol{x}_t)^+  \leq \tilde{\mathcal{O}}([(1+P_T)^{\frac{1}{2}}T+(1+P_T)^{\frac{1}{4}}T^{\frac{5}{4}}]^{\frac{1}{2}}).
\end{align*}
%
\subsection{Proof of Corollary~\ref{cor:drs-adj}}\label{app:drs-adj}
\begin{proof}
Based on (\ref{eq:sio}), the total adjustment can be defined as $\Delta = g_{i,t}(\boldsymbol{x}_{i,t})$ with $\Delta = \sum_{i=1}^N |x_{i,t}(i) - x'_{i,t}(i)|$ and because the loss function is Lipschitz continuous, we have
\begin{align*}
    \mathcal{R}'(T) &= R(T) + \sum_{g_{i,t} > 0}[f(\boldsymbol{x}'_{i,t}) - f(\boldsymbol{x}_{i,t})] \notag \\
    &\leq R(T) + L \sum_{g_{i,t} > 0}\norm{\boldsymbol{x}'_{i,t} - \boldsymbol{x}_{i,t}} \notag \\ 
    &\leq  R(T) + L \sum_{g_{i,t} > 0} \sum_{i=1}^N |x_{i,t}(i) - x'_{i,t}(i)| \notag \\
    &= \mathcal{R}(T) + L \sum_{t=1}^T g_{i,t}^+(x_{i,t}) \notag \\
    & \leq \tilde{\mathcal{O}}((1+P_T)^{\frac{1}{2}}T^{\frac{1}{2}} + (1+P_T)^{\frac{1}{4}}T^{\frac{3}{4}}).
\end{align*}
Therefore, the regret increases with the same magnitude as the previous regret bound. That completes the proof.
\end{proof}
\subsection{Proof of Theorem~\ref{the:nsdrs-drgt}}
\label{app:nsdrs-drgt}
As shown in \eqref{eq:dec-drgt}, the dynamic regret can be decomposed into three parts. Based on the analysis in Appendix~\ref{app:drs-drgt}, term (b) and term (c) defined in \eqref{eq:dec-drgt} can be bounded by $2L\sum_{t=1}^T\delta_{i,t}$ and $\sum_{t=1}^T(L\delta_{i,t} + LR\xi_{i,t})$, respectively. 

Proposition~\ref{prop:sur-loss} shows that term (a) defined in \eqref{eq:dec-drgt} can be upper bounded by 
\begin{gather}
    \text{term (a)} \leq \mathbb{E}[\sum_{t=1}^T(l_{i,t}(\boldsymbol{z}_{i,t}) - l_{i,t}(\boldsymbol{v}_{i,t}))].
\end{gather}
Besides, we have \cite{zhao2021bandit}
\begin{align}
    \sum_{t=1}^T(l_t(\boldsymbol{z}_{i,t}) - l_{i,t}(\boldsymbol{v}_{i,t})) 
    &= \underbrace{\sum_{t=1}^Tl_{i,t}(\boldsymbol{z}_{i,t}) - \sum_{t=1}^T l_{i,t}(\boldsymbol{z}_{i,t}^k)}_{\text{meta-regret}} \notag \\
    &+ \underbrace{\sum_{t=1}^T l_{i,t}(\boldsymbol{z}_{i,t}^k) - \sum_{t=1}^T l_{i,t}(\boldsymbol{v}_{i,t}))}_{\text{expert-regret}},
\end{align}
where $\boldsymbol{z}_{i,1}^k, \ldots, \boldsymbol{z}_{i,T}^k$ is the prediction sequence of expert $k$. This regret decomposition works for any $k \in [K]$. Each expert performs deterministic online gradient descent over the surrogate loss. Hence, we use Theorem~\ref{the:ogd} to bound the expert-regret. Assume that $k^*$ is the nearest optimal step size. We have
\begin{align*}
    &\sum_{t=1}^T l_{i,t}(\boldsymbol{z}_{i,t}^{k^*}) - \sum_{t=1}^T l_{i,t}(\boldsymbol{v}_{i,t}) \notag \\
    &\leq  \frac{\hat{G}_f^iF^2}{2}\sum_{t=1}^T \eta_{i,t}^{k^*} + \frac{R_i^2}{2\eta_{i,1}^{k^*}}  + R_i^2\sum_{t=1}^T(\frac{1}{2\eta_{i,t}^{k^*}} - \frac{1}{2\eta_{i, t+1}^{k^*}}) \notag \\
    & \quad + R_i\sum_{t=1}^T\frac{\norm{\boldsymbol{v}_{i,t+1} - \boldsymbol{v}_{i,t}}}{\eta_{i,t}^{\dagger}} \notag \\
    &\leq |\mathcal{N}_i|^2F_i^2 \sum_{t=1}^T \frac{\eta_{i,t}^{\dagger}}{2(\delta_{i,t}^{\dagger})^2}  + \frac{R^2}{\eta_1^{\dagger}}  \notag \\
    &+ R_i^2\sum_{t=1}^T(\frac{1}{\eta_{i,t}^{\dagger}} - \frac{1}{\eta_{i,t+1}^{\dagger}}) + 2R_i\sum_{t=1}^T\frac{\norm{\boldsymbol{v}_{i,t+1} - \boldsymbol{v}_{i,t}}}{\eta_{i,t}^{\dagger}} \notag \\
    &= \sqrt{|\mathcal{N}_i|F_i\tilde{L}}\sqrt{2R_i+R_i P_T}(R_i+1)T^{\frac{3}{4}} + \frac{R_i\sqrt{|\mathcal{N}_i|F_i\tilde{L}}\log T}{\sqrt{2R_i+P_T}},
\end{align*}
where the first inequality follows from Theorem~\ref{the:ogd}, and the second inequality is based on Lemma~\ref{lem:opeg}. Besides, $\eta^{k^*} \leq \eta^{\dagger} \leq 2\eta^{k^*}$ and the last equality holds by substituting $\eta^{\dagger} = \sqrt{R_i(2R_i^2+R_iP_T)/(|\mathcal{N}_i|F_i\tilde{L})}\cdot t^{-3/4}$ and $\delta = \delta^{\dagger} = (|\mathcal{N}_i|F_iR_i/\tilde{L})^{1/2}t^{-1/4}$.

The next step is to bound the meta-regret. The surrogate loss $l_t$ satisfies
\begin{align}
    |l_{i,t}(\boldsymbol{z})| &\leq |\langle \hat{\nabla}f_{i,t}(\boldsymbol{z}_{i,t}), \boldsymbol{z} - \boldsymbol{z}_{i,t} \rangle| \notag \\
    &\leq \norm{\hat{\nabla}f_{i,t}(\boldsymbol{z})}_2 \norm{\boldsymbol{z} - \boldsymbol{z}_{i,t}} \leq 2 \hat{G}_f^iR,
\end{align}
$\forall \boldsymbol{z} \in (1-\xi)\mathcal{X}$. 
\begin{lemma}\cite{zhao2021bandit}
\label{lem:step-size}
For any step size $\epsilon > 0$, we have 
\begin{gather}
\sum_{t=1}^T l_t(\boldsymbol{z}_t) - \min_{k \in [K]}(\sum_{t=1}^Tl_t(\boldsymbol{z}_t^k) + \frac{1}{\epsilon}\ln \frac{1}{\omega_1^k}) \leq 2\epsilon T \hat{G}_f^2 R^2.
\end{gather}
By setting $\epsilon = \sqrt{1/ 2T \hat{G}_f^2 R^2}$, we can obtain that
\begin{gather}
\sum_{t=1}^T l_t(\boldsymbol{z}_t) - \sum_{t=1}^Tl_t(\boldsymbol{z}_t^k) \leq \hat{G}_f R \sqrt{2T}(1+\ln \frac{1}{\omega_1^k})
\end{gather}
where $\hat{G}_f$ is the estimated gradient. 
\end{lemma}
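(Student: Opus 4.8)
The plan is to recognize Lemma~\ref{lem:step-size} as the standard regret guarantee of the exponentially weighted average (Hedge) forecaster with non-uniform priors, specialized to the surrogate losses $l_{i,t}$. The structural fact I would exploit first is that $l_{i,t}(\boldsymbol{z}) = \langle \hat{\nabla}f_{i,t}, \boldsymbol{z} - \boldsymbol{z}_{i,t}\rangle$ is \emph{affine} in $\boldsymbol{z}$. Since the meta-prediction is the convex combination $\boldsymbol{z}_{i,t} = \sum_{k\in[K]}\omega_{i,t}^k \boldsymbol{z}_{i,t}^k$, affinity yields the identity $l_{i,t}(\boldsymbol{z}_{i,t}) = \sum_k \omega_{i,t}^k l_{i,t}(\boldsymbol{z}_{i,t}^k)$. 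This reduces the analysis to the classical prediction-with-experts setting with scalar losses $l_{i,t}^k := l_{i,t}(\boldsymbol{z}_{i,t}^k)$, which are bounded by $|l_{i,t}^k| \le 2\hat{G}_f R$ exactly as recorded just above the lemma.

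Next I would introduce the log-potential built from cumulative losses, $W_t = \sum_{k}\omega_{i,1}^k \exp(-\epsilon \sum_{s<t} l_{i,s}^k)$. The normalized multiplicative update in Line 16 of Algorithm~\ref{alg:ma-nsdrs} makes $\omega_{i,t}^k \propto \omega_{i,1}^k\exp(-\epsilon\sum_{s<t}l_{i,s}^k)$, and the prescribed priors $\omega_{i,1}^k = \frac{K+1}{K}\frac{1}{k(k+1)}$ telescope to $\sum_k \omega_{i,1}^k = 1$, so $W_1 = 1$. A direct computation then gives the per-round ratio $W_{t+1}/W_t = \sum_k \omega_{i,t}^k \exp(-\epsilon l_{i,t}^k)$, i.e. an exponential moment of the loss under the current weight distribution.

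The analytic heart is to control this moment via Hoeffding's lemma: for $l_{i,t}^k \in [-2\hat{G}_f R, 2\hat{G}_f R]$ (range $4\hat{G}_f R$) one obtains $\ln(W_{t+1}/W_t) \le -\epsilon \sum_k\omega_{i,t}^k l_{i,t}^k + \tfrac{\epsilon^2}{8}(4\hat{G}_f R)^2 = -\epsilon\, l_{i,t}(\boldsymbol{z}_{i,t}) + 2\epsilon^2 \hat{G}_f^2 R^2$, where the first term collapsed through the linearity identity of the first paragraph. Summing over $t$ telescopes the left side to $\ln W_{T+1}$ (since $W_1=1$), while lower-bounding $W_{T+1}\ge \omega_{i,1}^k\exp(-\epsilon\sum_t l_{i,t}^k)$ for an arbitrary fixed $k$ gives $\ln\omega_{i,1}^k - \epsilon\sum_t l_{i,t}^k \le -\epsilon\sum_t l_{i,t}(\boldsymbol{z}_{i,t}) + 2\epsilon^2 T\hat{G}_f^2R^2$. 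Dividing by $\epsilon$ and taking the minimum over $k$ on the resulting right-hand side yields the first inequality; substituting $\epsilon = 1/(\hat{G}_f R\sqrt{2T})$ balances $\tfrac{1}{\epsilon}\ln\tfrac{1}{\omega_{i,1}^k}$ against $2\epsilon T\hat{G}_f^2R^2 = \hat{G}_f R\sqrt{2T}$ and produces the tuned second bound.

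I expect the only genuinely delicate point to be the Hoeffding step together with its bookkeeping of constants, specifically carrying the loss range $4\hat{G}_f R$ through the $\epsilon^2(b-a)^2/8$ term so that the second-order coefficient lands exactly on $2\epsilon^2\hat{G}_f^2 R^2$. Everything else — the telescoping of $W_t$, the single-expert lower bound, and the prior normalization to $W_1=1$ — is routine once the affinity of $l_{i,t}$ has been invoked to pass from the weighted average of expert losses to the meta-loss $l_{i,t}(\boldsymbol{z}_{i,t})$.
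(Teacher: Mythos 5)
Your proposal is correct and follows essentially the same route as the source: the paper itself gives no proof of this lemma (it imports it by citation from \cite{zhao2021bandit}), and the argument behind that cited result is precisely the standard potential-function analysis of the exponentially weighted average forecaster with non-uniform priors that you reconstruct — linearity of the surrogate loss to collapse the weighted average, $W_1=1$ from $\sum_{k}\omega_{i,1}^k = \frac{K+1}{K}\sum_{k}\frac{1}{k(k+1)}=1$, Hoeffding's lemma on the range $[-2\hat{G}_f R, 2\hat{G}_f R]$ giving the $2\epsilon^2\hat{G}_f^2R^2$ term, telescoping, and the single-expert lower bound. Your constant bookkeeping and the tuning $\epsilon = 1/(\hat{G}_f R\sqrt{2T})$ also check out exactly against both displayed inequalities.
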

Lemma~\ref{lem:step-size} holds for any $k \in [K]$ including $k^*$. Thus,
\begin{align}
\sum_{t=1}^Tl_{i,t}(\boldsymbol{z}_{i,t}) - &\sum_{t=1}^T l_t(\boldsymbol{z}_{i,t}^{k^*}) \leq \hat{G}_f^i R_i \sqrt{2T}(1+\ln \frac{1}{\omega_{i,1}^{k^*}}) \notag \\
&\leq \frac{R_iF_i|\mathcal{N}_i|}{\delta_{i,T}}\sqrt{2T}(1+ 2\ln(k^*+1)) \notag \\
&=  \sqrt{2R_iF_i|\mathcal{N}_i|\tilde{L}}T^{\frac{3}{4}}(1+ 2\ln(k^*+1)).
\end{align}
By combining with the expert regret and term (b) and (c) in \eqref{eq:dec-drgt}, the dynamic regret is bounded by 
\begin{align}
    & \quad \mathbb{E}[\sum_{t=1}^T f_{i,t}(\boldsymbol{x}_{i,t})] - \sum_{t=1}^T f_{i,t}(\boldsymbol{u}_{i,t}) \notag \\
    &= \text{term (a)} + \text{term (b)} + \text{term (c)} \notag \\
    &\leq \text{term (a)} + 2L\sum_{t=1}^T\delta_{i,t} + \sum_{t=1}^T(L\delta_{i,t} + LR\xi_{i,t}) \notag \\
    &\leq \sqrt{|\mathcal{N}_i|F_i\tilde{L}}\sqrt{2R_i+R_i P_T}(R_i+1)T^{\frac{3}{4}} \notag \\
    &+ \frac{R_i\sqrt{|\mathcal{N}_i|F_i\tilde{L}}\log T}{\sqrt{2R_i+P_T}} + \sqrt{2R_iF_i|\mathcal{N}_i|\tilde{L}}T^{\frac{3}{4}}(1+ 2\ln(k^*+1)) \notag \\
    &\leq \sqrt{|\mathcal{N}_i|F_i\tilde{L}}\sqrt{2R_i+R_i P_T}(R_i+1)T^{\frac{3}{4}} + \frac{R_i\sqrt{|\mathcal{N}_i|F_i\tilde{L}}\log T}{\sqrt{2R_i+P_T}} \notag \\
    &+ \sqrt{2R_iF_i|\mathcal{N}_i|\tilde{L}}T^{\frac{3}{4}}(1+ 2\ln(\lceil \log_2(1+ P_T/(2R_i))\rceil+1)) \notag \\
    &= \tilde{\mathcal{O}}(T^{\frac{3}{4}}(1+P_T)^{\frac{1}{2}}). 
\end{align}
Besides, the update step with respect to the constraint violation keeps the same, thus the constraint violation is the same as Theorem~\ref{the:drs-drgt}. 
\bibliographystyle{IEEEtran}
\bibliography{references}  

\begin{thebibliography}{10}
\providecommand{\url}[1]{#1}
\csname url@samestyle\endcsname
\providecommand{\newblock}{\relax}
\providecommand{\bibinfo}[2]{#2}
\providecommand{\BIBentrySTDinterwordspacing}{\spaceskip=0pt\relax}
\providecommand{\BIBentryALTinterwordstretchfactor}{4}
\providecommand{\BIBentryALTinterwordspacing}{\spaceskip=\fontdimen2\font plus
\BIBentryALTinterwordstretchfactor\fontdimen3\font minus
  \fontdimen4\font\relax}
\providecommand{\BIBforeignlanguage}[2]{{%
\expandafter\ifx\csname l@#1\endcsname\relax
\typeout{** WARNING: IEEEtran.bst: No hyphenation pattern has been}%
\typeout{** loaded for the language `#1'. Using the pattern for}%
\typeout{** the default language instead.}%
\else
\language=\csname l@#1\endcsname
\fi
#2}}
\providecommand{\BIBdecl}{\relax}
\BIBdecl

\bibitem{wei2015distributionally}
W.~Wei, F.~Liu, and S.~Mei, ``Distributionally robust co-optimization of energy
  and reserve dispatch,'' \emph{IEEE Transactions on Sustainable Energy},
  vol.~7, no.~1, pp. 289--300, 2015.

\bibitem{xu2016toward}
G.~Xu, W.~Yu, D.~Griffith, N.~Golmie, and P.~Moulema, ``Toward integrating
  distributed energy resources and storage devices in smart grid,'' \emph{IEEE
  Internet of Things journal}, vol.~4, no.~1, pp. 192--204, 2016.

\bibitem{rahbar2014real}
K.~Rahbar, J.~Xu, and R.~Zhang, ``Real-time energy storage management for
  renewable integration in microgrid: An off-line optimization approach,''
  \emph{IEEE Transactions on Smart Grid}, vol.~6, no.~1, pp. 124--134, 2014.

\bibitem{wu2016distributed}
D.~Wu, T.~Yang, A.~A. Stoorvogel, and J.~Stoustrup, ``Distributed optimal
  coordination for distributed energy resources in power systems,'' \emph{IEEE
  Transactions on Automation Science and Engineering}, vol.~14, no.~2, pp.
  414--424, 2016.

\bibitem{yu2015towards}
W.~Yu, D.~An, D.~Griffith, Q.~Yang, and G.~Xu, ``Towards statistical modeling
  and machine learning based energy usage forecasting in smart grid,''
  \emph{ACM SIGAPP Applied Computing Review}, vol.~15, no.~1, pp. 6--16, 2015.

\bibitem{wang2015dynamic}
T.~Wang, D.~O’Neill, and H.~Kamath, ``Dynamic control and optimization of
  distributed energy resources in a microgrid,'' \emph{IEEE Transactions on
  Smart Grid}, vol.~6, no.~6, pp. 2884--2894, 2015.

\bibitem{wang2013online}
Y.~Wang, S.~Mao, and R.~M. Nelms, ``Online algorithm for optimal real-time
  energy distribution in the smart grid,'' \emph{IEEE Transactions on Emerging
  Topics in Computing}, vol.~1, no.~1, pp. 10--21, 2013.

\bibitem{atzeni2013noncooperative}
I.~Atzeni, L.~G. Ord{\'o}{\~n}ez, G.~Scutari, D.~P. Palomar, and J.~R.
  Fonollosa, ``Noncooperative and cooperative optimization of distributed
  energy generation and storage in the demand-side of the smart grid,''
  \emph{IEEE Transactions on Signal Processing}, vol.~61, no.~10, pp.
  2454--2472, 2013.

\bibitem{chandy2010simple}
K.~M. Chandy, S.~H. Low, U.~Topcu, and H.~Xu, ``A simple optimal power flow
  model with energy storage,'' in \emph{49th IEEE Conference on Decision and
  Control (CDC)}.\hskip 1em plus 0.5em minus 0.4em\relax IEEE, 2010, pp.
  1051--1057.

\bibitem{zhang2013robust}
Y.~Zhang, N.~Gatsis, and G.~B. Giannakis, ``Robust energy management for
  microgrids with high-penetration renewables,'' \emph{IEEE Transactions on
  Sustainable Energy}, vol.~4, no.~4, pp. 944--953, 2013.

\bibitem{9080557}
S.~Maghsudi and M.~van~der Schaar, ``A non-stationary bandit-learning approach
  to energy-efficient femto-caching with rateless-coded transmission,''
  \emph{IEEE Transactions on Wireless Communications}, vol.~19, no.~7, pp.
  5040--5056, 2020.

\bibitem{guo2012optimal}
Y.~Guo, M.~Pan, and Y.~Fang, ``Optimal power management of residential
  customers in the smart grid,'' \emph{IEEE Transactions on Parallel and
  Distributed Systems}, vol.~23, no.~9, pp. 1593--1606, 2012.

\bibitem{salinas2013dynamic}
S.~Salinas, M.~Li, P.~Li, and Y.~Fu, ``Dynamic energy management for the smart
  grid with distributed energy resources,'' \emph{IEEE Transactions on Smart
  Grid}, vol.~4, no.~4, pp. 2139--2151, 2013.

\bibitem{wang2014distributed}
Y.~Wang, S.~Mao, and R.~M. Nelms, ``Distributed online algorithm for optimal
  real-time energy distribution in the smart grid,'' \emph{IEEE Internet of
  Things Journal}, vol.~1, no.~1, pp. 70--80, 2014.

\bibitem{salazar2020energy}
A.~Salazar, A.~Berzoy, W.~Song, and J.~M. Velni, ``Energy management of
  islanded nanogrids through nonlinear optimization using stochastic dynamic
  programming,'' \emph{IEEE Transactions on Industry Applications}, vol.~56,
  no.~3, pp. 2129--2137, 2020.

\bibitem{ding2018distributed}
L.~Ding, G.~Y. Yin, W.~X. Zheng, Q.-L. Han \emph{et~al.}, ``Distributed energy
  management for smart grids with an event-triggered communication scheme,''
  \emph{IEEE Transactions on Control Systems Technology}, vol.~27, no.~5, pp.
  1950--1961, 2018.

\bibitem{nizami2019multiagent}
M.~S.~H. Nizami, M.~J. Hossain, and E.~Fernandez, ``Multiagent-based
  transactive energy management systems for residential buildings with
  distributed energy resources,'' \emph{IEEE Transactions on Industrial
  Informatics}, vol.~16, no.~3, pp. 1836--1847, 2019.

\bibitem{khorasany2020framework}
M.~Khorasany, A.~Najafi-Ghalelou, and R.~Razzaghi, ``A framework for joint
  scheduling and power trading of prosumers in transactive markets,''
  \emph{IEEE Transactions on Sustainable Energy}, vol.~12, no.~2, pp. 955--965,
  2020.

\bibitem{ye2021scalable}
Y.~Ye, Y.~Tang, H.~Wang, X.-P. Zhang, and G.~Strbac, ``A scalable
  privacy-preserving multi-agent deep reinforcement learning approach for
  large-scale peer-to-peer transactive energy trading,'' \emph{IEEE
  Transactions on Smart Grid}, vol.~12, no.~6, pp. 5185--5200, 2021.

\bibitem{mohsenian2010autonomous}
A.-H. Mohsenian-Rad, V.~W. Wong, J.~Jatskevich, R.~Schober, and A.~Leon-Garcia,
  ``Autonomous demand-side management based on game-theoretic energy
  consumption scheduling for the future smart grid,'' \emph{IEEE Transactions
  on Smart Grid}, vol.~1, no.~3, pp. 320--331, 2010.

\bibitem{nguyen2015decentralized}
H.~K. Nguyen, H.~Mohsenian-Rad, A.~Khodaei, and Z.~Han, ``Decentralized
  reactive power compensation using nash bargaining solution,'' \emph{IEEE
  Transactions on Smart Grid}, vol.~8, no.~4, pp. 1679--1688, 2015.

\bibitem{wang2019incentive}
J.~Wang, H.~Zhong, J.~Qin, W.~Tang, R.~Rajagopal, Q.~Xia, and C.~Kang,
  ``Incentive mechanism for sharing distributed energy resources,''
  \emph{Journal of Modern Power Systems and Clean Energy}, vol.~7, no.~4, pp.
  837--850, 2019.

\bibitem{8585045}
S.~Maghsudi and M.~van~der Schaar, ``Distributed task management in
  cyber-physical systems: How to cooperate under uncertainty?'' \emph{IEEE
  Transactions on Cognitive Communications and Networking}, vol.~5, no.~1, pp.
  165--180, 2019.

\bibitem{wan2021price}
Y.~Wan, J.~Qin, X.~Yu, T.~Yang, and Y.~Kang, ``Price-based residential demand
  response management in smart grids: A reinforcement learning-based
  approach,'' \emph{IEEE/CAA Journal of Automatica Sinica}, vol.~9, no.~1, pp.
  123--134, 2021.

\bibitem{zinkevich2003online}
M.~Zinkevich, ``Online convex programming and generalized infinitesimal
  gradient ascent,'' in \emph{Proceedings of the 20th International Conference
  on Machine Learning (ICML-03)}, 2003, pp. 928--936.

\bibitem{Flaxman2004BCO}
\BIBentryALTinterwordspacing
A.~D. Flaxman, A.~T. Kalai, and H.~B. McMahan, ``Online convex optimization in
  the bandit setting: gradient descent without a gradient,'' 2004. [Online].
  Available: \url{https://arxiv.org/abs/cs/0408007}
\BIBentrySTDinterwordspacing

\bibitem{hazan2014bandit}
E.~Hazan and K.~Levy, ``Bandit convex optimization: Towards tight bounds,''
  \emph{Advances in Neural Information Processing Systems}, vol.~27, 2014.

\bibitem{jenatton2016adaptive}
R.~Jenatton, J.~Huang, and C.~Archambeau, ``Adaptive algorithms for online
  convex optimization with long-term constraints,'' in \emph{International
  Conference on Machine Learning}.\hskip 1em plus 0.5em minus 0.4em\relax PMLR,
  2016, pp. 402--411.

\bibitem{mahdavi2012trading}
M.~Mahdavi, R.~Jin, and T.~Yang, ``Trading regret for efficiency: online convex
  optimization with long term constraints,'' \emph{The Journal of Machine
  Learning Research}, vol.~13, no.~1, pp. 2503--2528, 2012.

\bibitem{yu2017online}
H.~Yu, M.~Neely, and X.~Wei, ``Online convex optimization with stochastic
  constraints,'' \emph{Advances in Neural Information Processing Systems},
  vol.~30, 2017.

\bibitem{li2020online}
J.~Li, C.~Gu, Z.~Wu, and T.~Huang, ``Online learning algorithm for distributed
  convex optimization with time-varying coupled constraints and bandit
  feedback,'' \emph{IEEE Transactions on Cybernetics}, 2020.

\bibitem{liang2019distributed}
S.~Liang, G.~Yin \emph{et~al.}, ``Distributed smooth convex optimization with
  coupled constraints,'' \emph{IEEE Transactions on Automatic Control},
  vol.~65, no.~1, pp. 347--353, 2019.

\bibitem{Yi2020distr_band_online_conv_opt}
X.~Yi, X.~Li, T.~Yang, L.~Xie, T.~Chai, and K.~H. Johansson, ``Distributed
  bandit online convex optimization with time-varying coupled inequality
  constraints,'' \emph{IEEE Transactions on Automatic Control}, vol.~66,
  no.~10, pp. 4620--4635, 2021.

\bibitem{zhao2021bandit}
P.~Zhao, G.~Wang, L.~Zhang, and Z.-H. Zhou, ``Bandit convex optimization in
  non-stationary environments,'' \emph{The Journal of Machine Learning
  Research}, vol.~22, no.~1, pp. 5562--5606, 2021.

\bibitem{tsetis2023bandit}
I.~Tsetis, X.~Cheng, and S.~Maghsudi, ``A bandit online convex optimization
  approach to distributed energy management in networked systems,'' in
  \emph{ICASSP 2023-2023 IEEE International Conference on Acoustics, Speech and
  Signal Processing (ICASSP)}.\hskip 1em plus 0.5em minus 0.4em\relax IEEE,
  2023, pp. 1--5.

\bibitem{guo2022online}
H.~Guo, X.~Liu, H.~Wei, and L.~Ying, ``Online convex optimization with hard
  constraints: Towards the best of two worlds and beyond,'' \emph{Advances in
  Neural Information Processing Systems}, vol.~35, pp. 36\,426--36\,439, 2022.

\bibitem{NY_DERs}
``Map of new york state distributed energy resources facilities,''
  \url{https://der.nyserda.ny.gov/map}, accessed: 14-10-2022.

\bibitem{Maghsudi17:DUA}
S.~Maghsudi and E.~Hossain, ``Distributed user association in energy harvesting
  small cell networks: An exchange economy with uncertainty,'' \emph{IEEE
  Transactions on Green Communications and Networking}, vol.~1, no.~3, pp.
  294--308, 2017.

\bibitem{sugiyama2012machine}
M.~Sugiyama and M.~Kawanabe, \emph{Machine learning in non-stationary
  environments: Introduction to covariate shift adaptation}.\hskip 1em plus
  0.5em minus 0.4em\relax MIT press, 2012.

\bibitem{gama2014survey}
J.~Gama, I.~{\v{Z}}liobait{\.e}, A.~Bifet, M.~Pechenizkiy, and A.~Bouchachia,
  ``A survey on concept drift adaptation,'' \emph{ACM computing surveys
  (CSUR)}, vol.~46, no.~4, pp. 1--37, 2014.

\bibitem{zhao2019distribution}
P.~Zhao, X.~Wang, S.~Xie, L.~Guo, and Z.-H. Zhou, ``Distribution-free one-pass
  learning,'' \emph{IEEE Transactions on Knowledge and Data Engineering},
  vol.~33, no.~3, pp. 951--963, 2019.

\bibitem{van2016metagrad}
T.~Van~Erven and W.~M. Koolen, ``Metagrad: Multiple learning rates in online
  learning,'' \emph{Advances in Neural Information Processing Systems},
  vol.~29, 2016.

\bibitem{cesa2006prediction}
N.~Cesa-Bianchi and G.~Lugosi, \emph{Prediction, learning, and games}.\hskip
  1em plus 0.5em minus 0.4em\relax Cambridge university press, 2006.

\bibitem{chen2018bandit}
T.~Chen and G.~B. Giannakis, ``Bandit convex optimization for scalable and
  dynamic {I}o{T} management,'' \emph{IEEE Internet of Things Journal}, vol.~6,
  no.~1, pp. 1276--1286, 2018.

\bibitem{andrei2007convex}
N.~Andrei, ``Convex functions,'' \emph{Adv. Model. Optim}, vol.~9, no.~2, pp.
  257--267, 2007.

\bibitem{heinonen2005lectures}
J.~Heinonen, \emph{Lectures on Lipschitz analysis}.\hskip 1em plus 0.5em minus
  0.4em\relax University of Jyv{\"a}skyl{\"a}, 2005, no. 100.

\end{thebibliography}
\end{document}